\documentclass[11pt,letterpaper]{article}

\usepackage[margin=0.92in]{geometry}
\usepackage[T1]{fontenc}
\usepackage{lmodern}
\usepackage{microtype}
\usepackage{amsmath,amssymb,amsthm,mathtools,bm}
\usepackage{mathrsfs}
\usepackage{booktabs,array,longtable,tabularx}
\usepackage{float}
\usepackage{enumitem}
\usepackage{xcolor}
\usepackage{graphicx}
\usepackage{tikz}
\usetikzlibrary{arrows.meta,positioning,calc,decorations.pathmorphing,fit,backgrounds}
\usepackage{hyperref}
\usepackage[nameinlink,capitalize,noabbrev]{cleveref}
\usepackage{caption}
\usepackage{subcaption}
\usepackage{url}
\usepackage{bm}

\hypersetup{
  hidelinks,
  pdftitle={Confinement as Decoding: Higher Form Codes and Lattice Yang--Mills Theory},
  pdfauthor={Ning Bao},
  pdfsubject={Quantum error correction, higher form gauge theory, and lattice Yang--Mills theory}
}

\definecolor{deepblue}{RGB}{25,55,105}
\definecolor{softblue}{RGB}{235,242,252}
\definecolor{softgray}{RGB}{245,246,248}
\definecolor{darkgray}{RGB}{65,70,78}
\definecolor{softgold}{RGB}{249,244,226}
\definecolor{softgreen}{RGB}{237,248,240}

\newtheorem{theorem}{Theorem}[section]
\newtheorem{proposition}[theorem]{Proposition}
\newtheorem{lemma}[theorem]{Lemma}
\newtheorem{corollary}[theorem]{Corollary}
\newtheorem{assumption}[theorem]{Assumption}

\theoremstyle{definition}
\newtheorem{definition}[theorem]{Definition}
\newtheorem{remark}[theorem]{Remark}
\newtheorem{example}[theorem]{Example}

\crefname{theorem}{Theorem}{Theorems}
\Crefname{theorem}{Theorem}{Theorems}
\crefname{proposition}{Proposition}{Propositions}
\Crefname{proposition}{Proposition}{Propositions}
\crefname{lemma}{Lemma}{Lemmas}
\Crefname{lemma}{Lemma}{Lemmas}
\crefname{corollary}{Corollary}{Corollaries}
\Crefname{corollary}{Corollary}{Corollaries}
\crefname{assumption}{Assumption}{Assumptions}
\Crefname{assumption}{Assumption}{Assumptions}
\crefname{definition}{Definition}{Definitions}
\Crefname{definition}{Definition}{Definitions}
\crefname{remark}{Remark}{Remarks}
\Crefname{remark}{Remark}{Remarks}
\crefname{example}{Example}{Examples}
\Crefname{example}{Example}{Examples}

\numberwithin{equation}{section}
\allowdisplaybreaks

\newcommand{\ZN}{\mathbb{Z}_{N}}
\newcommand{\Ztwo}{\mathbb{Z}_{2}}
\newcommand{\ZZ}{\mathbb{Z}}
\newcommand{\RR}{\mathbb{R}}

\newcommand{\SU}{\mathrm{SU}}
\newcommand{\SO}{\mathrm{SO}}
\newcommand{\PSU}{\mathrm{PSU}}
\newcommand{\im}{\operatorname{im}}
\newcommand{\supp}{\operatorname{supp}}
\newcommand{\Tr}{\operatorname{Tr}}
\newcommand{\tr}{\operatorname{tr}}
\newcommand{\Prob}{\mathbb{P}}
\newcommand{\E}{\mathbb{E}}
\newcommand{\cH}{\mathcal{H}}
\newcommand{\cE}{\mathcal{E}}
\newcommand{\cZ}{\mathcal{Z}}
\newcommand{\cC}{\mathcal{C}}
\newcommand{\cN}{\mathcal{N}}
\newcommand{\cL}{\mathcal{L}}
\newcommand{\cM}{\mathcal{M}}

\newcommand{\cW}{\mathcal{W}}
\newcommand{\cF}{\mathcal{F}}
\newcommand{\cR}{\mathcal{R}}
\newcommand{\dd}{\mathrm{d}}
\newcommand{\e}{\mathrm{e}}
\newcommand{\ii}{\mathrm{i}}
\newcommand{\ket}[1]{\lvert #1\rangle}
\newcommand{\bra}[1]{\langle #1\rvert}
\newcommand{\braket}[2]{\langle #1\mid #2\rangle}
\newcommand{\abs}[1]{\left\lvert #1\right\rvert}
\newcommand{\norm}[1]{\left\lVert #1\right\rVert}
\newcommand{\set}[1]{\left\{#1\right\}}

\newcommand{\YM}{\mathrm{YM}}
\newcommand{\ML}{\mathrm{ML}}

\newcommand{\wideG}{\widehat G}

\newcommand{\Tcode}{\mathsf{T}_{\rm code}}

\setlist[itemize]{leftmargin=1.5em,itemsep=0.18em,topsep=0.3em}
\setlist[enumerate]{leftmargin=1.7em,itemsep=0.22em,topsep=0.3em}

\begin{document}

\thispagestyle{empty}
\begin{center}
{\LARGE\bfseries Confinement as Decoding:\par}
\vspace{0.12cm}
{\LARGE\bfseries Higher Form Codes and Lattice Yang--Mills Theory\par}
\vspace{0.75cm}
{\large Ning Bao\par}
\vspace{0.20cm}
{\small Computational Science Initiative, Brookhaven National Laboratory, Upton, NY 11973, USA\\
Department of Physics, Northeastern University, Boston, MA 02115, USA\\
\texttt{ningbao75@gmail.com}\par}
\end{center}
\vspace{0.25cm}

\begin{abstract}
We study the relationship between quantum error correction, confinement, and lattice Yang--Mills theory. We first formulate decoding for finite Abelian homological codes in terms of higher form gauge fields. For positive local noise, the logical classes are topological sectors of a Nishimori ensemble, and the optimal decoding error is determined by the relative weights of the nontrivial sectors. We derive Fourier relations between logical probabilities, disorder operators, and information in the channel environment, and we give contour and fractional moment criteria for a threshold. We then study a four dimensional $\ZN$ memory and its possible relation to confining $\PSU(N)$ vacua. Finally, we define a finite curvature center sheet model coupled to Wilson $\SU(N)$ link variables. In this model the conditional logical probabilities are center twisted Yang--Mills partition functions. A strong coupling expansion gives the leading effective interaction for the syndrome and shows that local syndrome correlations can decay even when the global sheet sectors are mixed. We also show that the likelihood for a separated pair of syndrome worldlines is the center monopole correlator. Its decay determines a transfer matrix mass. This distinguishes the suppression of global flux sectors from the local spectral information needed to discuss a mass gap.
\end{abstract}

\clearpage
\tableofcontents
\clearpage

\section{Introduction}
\label{sec:intro}

Quantum error correction has become a useful language for several problems in high energy physics. In holography, the bulk to boundary map can be viewed as a redundant encoding. In topological phases, logical operators are directly related to extended excitations. In lattice gauge theory, the same type of sector structure appears through electric and magnetic fluxes. Dennis, Kitaev, Landahl, and Preskill related decoding of the surface code to a disordered statistical model on the Nishimori line \cite{Nishimori}, and Wang, Harrington, and Preskill showed that repeated syndrome measurements give a random plaquette gauge theory \cite{DKLP,WHP}. Chubb and Flammia later gave a general statistical mechanical construction for stabilizer and subsystem codes with correlated Pauli noise \cite{ChubbFlammia}. Another approach begins with superselection sectors. In that setting the sector structure can imply the Knill--Laflamme condition, with proton and neutron sectors in quantum chromodynamics providing one example \cite{BaoSuperselection}. Gauge redundancy and Gauss law constraints have also been used to protect quantum simulations of lattice gauge theories \cite{RajputRoggeroWiebe,CarenaEtAl,SpagnoliEtAl,LacambraEtAl,YaoSU2}, and quantum reference frame methods give a related connection between gauge redundancy and correctability \cite{CarrozzaEtAl}.

We ask whether these observations can be organized into a general relation between confinement and decoding. We study one Pauli shift error sector of a finite Abelian homological CSS code with ideal syndrome information. The measured syndrome fixes an affine space of errors. The remaining ambiguity is a homology class, and the conditional probability of each class is an orbit sum. Maximum likelihood decoding therefore compares topological sectors. This finite volume statement is exact. A circuit level threshold analysis would also require the complementary Pauli sector, noisy syndrome measurements, an explicit recovery procedure, and control of coherent errors.

The sector formulation does not require a conventional local order parameter. It applies to correlated local noise, growing logical groups, partial mixing of logical sectors, and matter that screens a bare Wilson loop. For product noise, finite group Fourier duality relates the logical probabilities to Wilson and disorder amplitudes. The same transform appears in a purification of the channel and relates uncertainty at the receiver to distinguishability in the environment. We express the coherent information in terms of the syndrome and logical entropies. We also prove a contour bound that gives a nonzero threshold region for bounded geometry code families. The first moment of a logical odds ratio is exactly one on the Nishimori ensemble and therefore carries no threshold information. Fractional moments are nontrivial. At exponent one half, the relevant quantity is the Bhattacharyya affinity between the sector distribution and its logical translate.

Three objects will appear below. The first is the quantum system whose ground space stores information. The second is the classical decoding ensemble fixed by the noise distribution and the measured syndrome. The third is the mixed quantum state produced by the channel. These objects need not be governed by the same coupling. We also use two different codes. A spatial four dimensional $\ZN$ toric code is used to discuss confining $\PSU(N)$ vacua. A Euclidean two form code on a four dimensional lattice describes center vortex sheets in an $\SU(N)$ path integral. The two constructions involve related topological sectors, but their microscopic degrees of freedom are different.

\begin{figure}[t]
\centering
\begin{tikzpicture}[>=Latex,thick]
\tikzset{
 layer/.style={draw=deepblue,rounded corners,fill=softblue,
   text width=5.05cm,minimum height=1.30cm,align=center,font=\footnotesize,
   inner xsep=7pt,inner ysep=5pt},
 app/.style={draw=darkgray,rounded corners,fill=softgray,
   text width=5.05cm,minimum height=1.30cm,align=center,font=\footnotesize,
   inner xsep=7pt,inner ysep=5pt},
 spectrum/.style={draw=darkgray,rounded corners,fill=softgreen,
   text width=5.75cm,minimum height=1.30cm,align=center,font=\footnotesize,
   inner xsep=7pt,inner ysep=5pt},
 lab/.style={font=\scriptsize,align=center,fill=white,inner sep=1.5pt}
}
\node[layer] (engine) at (-3.90,1.80)
  {finite Abelian homological code\\logical sector identities};
\node[layer] (sm) at (3.90,1.80)
  {quenched higher form gauge theory\\Nishimori disorder};
\node[app] (spatial) at (-3.90,-0.30)
  {four dimensional $\ZN$ memory\\confining $\PSU(N)$ interpretation};
\node[app] (euclid) at (3.90,-0.30)
  {finite curvature $\SU(N)$ center sheet model\\logical sector posteriors\\at fixed syndrome};
\node[spectrum] (mass) at (3.90,-2.55)
  {syndrome pair correlator\\transfer matrix spectrum\\vortex coupled mass};

\draw[<->] (engine.east) --
  node[midway,above=3pt,lab]{duality} (sm.west);
\draw[->] (engine.south) --
  node[midway,left=3pt,lab]{fixed point} (spatial.north);
\draw[->] (sm.south) --
  node[midway,right=3pt,lab]{local parent} (euclid.north);
\draw[->] (spatial.east) --
  node[midway,below=3pt,lab]{spacetime lift} (euclid.west);
\draw[->] (euclid.south) --
  node[midway,right=3pt,lab]{defect pair} (mass.north);
\end{tikzpicture}
\caption{The constructions used in this paper. The homological relation is algebraic. The $\PSU(N)$ interpretation requires additional assumptions about the infrared theory, while the center sheet posterior is an exact finite volume identity.}
\label{fig:paper_structure}
\end{figure}

We first study the four dimensional $\ZN$ toric code, which is a higher dimensional version of Kitaev's construction \cite{Kitaev}. Its logical operators, distance, excitations, and energy barriers are explicit. Gauging the electric center symmetry of $\SU(N)$ suggests that a confining adjoint theory can inherit the magnetic logical algebra of this fixed point. This interpretation requires a gapped Hamiltonian at fixed lattice spacing, a controlled reduction to the center sector, and stability of the resulting topological phase. None of these assumptions enters the Euclidean Yang--Mills construction.

For the Euclidean application, we define a correlated Pauli ensemble whose local parent theory contains Wilson $\SU(N)$ links and a $\ZN$ plaquette field. The boundary of the plaquette field is the syndrome, and its cohomology class is the logical center sheet sector. We use the model as an inference problem rather than as a microscopic model of real time Yang--Mills noise. One form gauge invariance assigns the same Yang--Mills weight to sheets that differ by a stabilizer. The conditional probability of a logical completion is therefore a center twisted partition function. At strong coupling, integrating out the links produces a local syndrome action whose first contribution is the elementary cube. A polymer can distinguish global logical sectors only by wrapping a nontrivial cycle, so the distinction is suppressed by the systolic area. Local syndrome correlations can consequently decay even when the global sheet sectors are mixed.

Nonzero syndromes also probe the spectrum. Two syndrome worldlines joined by a center sheet give the lattice center monopole insertion. After removing the known endpoint fugacity, the pair likelihood is the twisted plaquette monopole correlator. Reflection positivity gives a transfer matrix representation, and the connected decay measures the lowest mass that couples to the vortex source. This also separates confinement from a neutral mass gap. Global logical probabilities constrain sectors with center flux, while glueballs are center neutral. A full transfer matrix gap requires exponential estimates for a family of syndrome sources that has overlap with the full physical spectrum.

The transfer matrix statements first hold at fixed lattice spacing. A thermodynamic result requires estimates uniform in transverse volume and Euclidean time extent. A continuum result further requires a lower bound that remains positive in physical units along a scaling trajectory. The strong coupling expansion controls one region of the finite curvature model but does not determine the phase boundary in the continuum regime. These questions are dynamical and are not fixed by the finite volume sector identities.

Several earlier results provide useful context. Bao, Cao, and Zhu related deconfinement to error thresholds in holography \cite{BaoHolography}. Bao, Cao, Chatwin-Davies, Cheng, and Zhu showed that exact superselection can imply quantum error correction and discussed a possible connection between confinement and mass generation \cite{BaoSuperselection}. In the present setting, the sectors have finite relative weights after the syndrome is measured, so exact superselection is replaced by Bayesian inference. Li, O'Dea, and Khemani distinguish defects that diagnose logical stability from defects that probe local excitation gaps \cite{LiODeaKhemani}. Liu, Xu, Pollmann, and Knap describe decodability as an information theoretic test of emergent one form symmetry \cite{LiuXuPollmannKnap}. We combine these ideas with finite volume center backgrounds, thermal 't Hooft flux sectors, and the center monopole transfer matrix correlator \cite{tHooftFlux,GKSW,AbeEtAl,deForcrandSmekal,deForcrandNoth,MonopolePairs}.

The paper is organized as follows. Part I develops the finite Abelian decoding identities, the threshold criteria, and the mixed state diagnostics. Part II studies the four dimensional $\ZN$ memory, extensions to composite coefficient groups, and the possible relation to confining $\PSU(N)$ vacua. Part III defines the finite curvature $\SU(N)$ center sheet model, derives its logical posterior, and studies its strong coupling and thermal limits. Part IV separates information about charged flux sectors from neutral spectroscopy. It then relates syndrome pair likelihoods to center monopole correlators and gives conditions under which these correlators imply a transfer matrix or continuum gap.

\part{Finite Abelian decoding and higher form gauge theory}

\section{Homological codes and logical sector posteriors}
\label{sec:general_code}

\subsection{Chain complex and generalized Pauli code}

Let $X$ be a finite cell complex of dimension $D$. We take a finite Abelian group $G$, written additively, and choose a cell degree $k$. The relevant part of the chain complex is
\[
 C_{k+1}(X;G)\xrightarrow{\partial_{k+1}}
 C_k(X;G)\xrightarrow{\partial_k}
 C_{k-1}(X;G),
 \qquad \partial_k\partial_{k+1}=0.
\]
We place a qudit of dimension $\abs{G}$ on each $k$ cell. The computational basis is indexed by $C_k(X;G)$. For $a\in C_k(X;G)$ and $\chi\in C^k(X;\wideG)$, where $\wideG=\operatorname{Hom}(G,U(1))$, define
\[
 X_a\ket{c}=\ket{c+a},
 \qquad
 Z_\chi\ket{c}=\chi(c)\ket{c},
 \qquad
 Z_\chi X_a=\chi(a)X_aZ_\chi.
\]
The homological CSS code is stabilized by $X_{\partial b}$ for $b\in C_{k+1}(X;G)$ and by $Z_{\delta\beta}$ for $\beta\in C^{k-1}(X;\wideG)$. The shift logical group is
\[
 H:=H_k(X;G)=\ker\partial_k/\im\partial_{k+1}.
\]
The conjugate logical group is $H^k(X;\wideG)$.

\begin{lemma}[Perfect logical pairing]
\label[lemma]{lem:pontryagin}
We have a canonical isomorphism
\[
 H^k(X;\wideG)\cong \widehat{H_k(X;G)},
\]
and the pairing $\langle[\chi],[c]\rangle=\chi(c)$ is perfect. In particular, the code dimension is $\abs{H_k(X;G)}$. No freeness or coprimality assumption is needed.
\end{lemma}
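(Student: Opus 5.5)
The plan is to work entirely at the level of cochain complexes and use exactness of Pontryagin duality on finite Abelian groups. First I will identify $C^j(X;\wideG)$ with $\widehat{C_j(X;G)}$. Since $C_j(X;G)=\bigoplus_{\sigma}G$ over $j$ cells, its dual is $\bigoplus_\sigma \wideG$, which is exactly the group of $\wideG$-valued cochains. Under this identification the coboundary $\delta$ is the Pontryagin dual (transpose) of $\partial$: $(\delta\chi)(b)=\chi(\partial b)$. Thus the cochain complex $C^{k-1}\xrightarrow{\delta}C^k\xrightarrow{\delta}C^{k+1}$ is the dual of the chain complex segment displayed above.

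The key tool is that $A\mapsto\widehat A=\operatorname{Hom}(A,U(1))$ is an exact contravariant functor on finite Abelian groups, because $U(1)$ is divisible and hence injective. Exactness gives two concrete facts. For a subgroup $B\le A$, the annihilator satisfies $B^\perp\cong\widehat{A/B}$, and restriction $\widehat A\to\widehat B$ is surjective with kernel $B^\perp$. For a homomorphism $f$, we have $\ker\widehat f=(\im f)^\perp$ and $\im\widehat f=(\ker f)^\perp$. The second identity follows from applying exactness to $A\to \im f\to 0$ and $0\to\ker f\to A\to\im f\to0$. It is this step that removes any need for freeness or coprimality, and it is where a universal coefficient argument would otherwise produce Ext terms.

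With these facts, I compute. Put $Z=\ker\partial_k$ and $B=\im\partial_{k+1}$. The cocycles are
\[
\ker\delta_k=(\im\partial_{k+1})^\perp=B^\perp ,
\]
and the coboundaries are
\[
\im\delta_{k-1}=(\ker\partial_k)^\perp=Z^\perp .
\]
Hence
\[
H^k(X;\wideG)=B^\perp/Z^\perp .
\]
On the other side,
\[
\widehat{Z/B}\cong\{\phi\in\widehat Z:\phi|_B=1\}.
\]
Restriction from $C_k$ to $Z$ maps $B^\perp$ into this group. It is surjective: extend $\phi$ to a character of $C_k$ using injectivity of $U(1)$, and the extension automatically kills $B\subseteq Z$. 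Its kernel is $Z^\perp$. This gives the canonical isomorphism, which is by construction the pairing $\langle[\chi],[c]\rangle=\chi(c)$. The pairing is well defined because $\chi(\partial b)=(\delta\chi)(b)=1$ and $(\delta\beta)(c)=\beta(\partial c)=1$.

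Perfectness is equivalent to this isomorphism together with $\widehat{\widehat H}\cong H$ for finite $H$. For the code dimension I will use the standard stabilizer count. The $X$ stabilizers form $B$ and the $Z$ stabilizers form $Z^\perp$, which commute because $B\subseteq Z$. The joint $+1$ eigenspace is spanned by the coset states $\sum_{b\in B}\ket{c+b}$ with $\delta\beta(c)=1$ for all $\beta$, that is, with $c\in Z$. These states are orthogonal for distinct cosets, so the dimension is $\abs{Z/B}=\abs{H_k(X;G)}$.

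The main obstacle is the surjectivity and extension step, $\im\widehat f=(\ker f)^\perp$ and the extension of characters from $Z$ to $C_k$. I will handle it by citing injectivity of $U(1)$ as a $\ZZ$-module, or concretely by a cardinality argument: $\abs{(\ker f)^\perp}=\abs{A}/\abs{\ker f}=\abs{\im f}=\abs{\im\widehat f}$, combined with the obvious inclusion $\im\widehat f\subseteq(\ker f)^\perp$.
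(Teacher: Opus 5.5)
Your proposal is correct and follows the same route as the paper: exactness of $\operatorname{Hom}(-,U(1))$, so that dualization commutes with homology, followed by the same coset-state count for the code dimension. Your annihilator computation ($\ker\delta_k=B^\perp$, $\im\delta_{k-1}=Z^\perp$, and $B^\perp/Z^\perp\cong\widehat{Z/B}$ via restriction) spells out what the paper compresses into one line. One small slip: $\im\widehat f=(\ker f)^\perp$ needs exactness on the inclusion $\im f\hookrightarrow A'$, i.e.\ surjectivity of restricting characters, rather than on $A\to\im f\to0$, but your cardinality argument covers this anyway.
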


\begin{proof}
Since $U(1)$ is divisible, it is injective as a $\ZZ$ module. The functor $\operatorname{Hom}(-,U(1))$ is therefore exact and commutes with homology. This gives
\[
 H^k(X;\wideG)=\operatorname{Hom}(H_k(X;G),U(1)).
\]
A stabilizer basis state is a uniform superposition over a boundary coset. The phase checks require the coset label to be a cycle. The code basis is therefore indexed by $\ker\partial_k/\im\partial_{k+1}$.
\end{proof}

The construction also applies to relative chain complexes, which describe boundaries, defects, lattice surgery, and spacetime cobordisms.

\subsection{Positive local noise and sector weights}

Let $e\in C_k(X;G)$ denote an error. We allow an arbitrary strictly positive local distribution
\begin{equation}
 Q_\theta(e)=\frac{1}{\Xi_\theta}\e^{-S_\theta(e)},
 \qquad
 S_\theta(e)=\sum_{R:\,\operatorname{diam}R\le R_0}U_{R,\theta}(e|_R).
\label{eq:local_noise}
\end{equation}
Independent noise on the cells is the special case $Q(e)=\prod_xq_x(e_x)$. Measurement of the conjugate stabilizers gives the syndrome
\[
 \sigma(e)=\partial e.
\]
For each syndrome, choose $e_\sigma$ with $\partial e_\sigma=\sigma$. We also choose a cycle representative $\gamma_h$ for every $h\in H$. The unnormalized weight of the logical sector $h$ is
\begin{equation}
 Z_{\sigma,h}=\sum_{b\in B_k}Q_\theta(e_\sigma+\gamma_h+b),
 \qquad B_k=\im\partial_{k+1}.
\label{eq:general_sector}
\end{equation}
Then
\begin{equation}
 \Prob(h\mid\sigma)=\frac{Z_{\sigma,h}}{\sum_{j\in H}Z_{\sigma,j}}.
\label{eq:posterior_general}
\end{equation}
A different choice of $e_\sigma$ or of the representatives $\gamma_h$ only relabels the sectors.

\begin{theorem}[Higher form Nishimori identity]
\label[theorem]{thm:nishimori_general}
For every syndrome and logical class,
\begin{equation}
 Z_{\sigma,h}=\frac{1}{\abs{\ker\partial_{k+1}}\,\Xi_\theta}
 \sum_{a\in C_{k+1}(X;G)}
 \exp\!\left[-S_\theta(e_\sigma+\gamma_h+\partial a)\right].
\label{eq:gauge_partition_general}
\end{equation}
Thus maximum likelihood decoding compares topological sectors of a quenched chain gauge model. Suppose that $X$ is a cellulation of an oriented $D$ manifold with a cellular dual. Then $a$ is a gauge field of form degree $D-k-1$ on the dual complex, while $e_\sigma+\gamma_h$ is the quenched background field strength. On the branch containing the actual error, the background is drawn with the same Boltzmann weight that appears in the partition sum. The generalized Nishimori condition is therefore automatic.
\end{theorem}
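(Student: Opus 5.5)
The identity \eqref{eq:gauge_partition_general} is a change of summation variable along the surjective homomorphism $\partial_{k+1}\colon C_{k+1}(X;G)\to B_k$. Every fiber of this map over a point of $B_k$ is a coset of $\ker\partial_{k+1}$, so it has exactly $\abs{\ker\partial_{k+1}}$ elements. For any function $f$ on $C_k(X;G)$ this gives
\[
 \sum_{a\in C_{k+1}(X;G)} f(\partial a)=\abs{\ker\partial_{k+1}}\sum_{b\in B_k} f(b).
\]
I will apply this with $f(b)=Q_\theta(e_\sigma+\gamma_h+b)=\Xi_\theta^{-1}\exp[-S_\theta(e_\sigma+\gamma_h+b)]$, substitute into the definition \eqref{eq:general_sector}, and divide by $\abs{\ker\partial_{k+1}}$. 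That yields \eqref{eq:gauge_partition_general} exactly. I will also remark that $\abs{\ker\partial_{k+1}}$ is independent of $\sigma$ and $h$, so it cancels in the posterior \eqref{eq:posterior_general}. The sector weights are then ratios of partition functions of a gauge model with gauge group $\ker\partial_{k+1}$, whose field strength is $\partial a$ shifted by the fixed background $e_\sigma+\gamma_h$. Replacing the background by another element of the same coset $e_\sigma+\gamma_h+B_k$ is absorbed by shifting $a$, so $Z_{\sigma,h}$ depends only on $(\sigma,h)$, consistent with the relabeling remark.

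For the dual description, I will use the cellular dual of the oriented closed $D$ manifold. Poincaré duality identifies $C_j(X;G)\cong C^{D-j}(X^*;G)$, and under this identification $\partial_j$ corresponds, up to sign, to the coboundary $\delta$ on the dual complex. Thus $a\in C_{k+1}$ becomes a $(D-k-1)$ cochain, which is a gauge field of that form degree. Its image $\partial a$ becomes $\delta a$, a field strength of degree $D-k$. The background $e_\sigma+\gamma_h$ is a $(D-k)$ cochain, and its coboundary is dual to $\partial(e_\sigma+\gamma_h)=\sigma$. So the syndrome is the flux defect (the magnetic source), and $h$ labels the cohomology class of the background, that is, the topological sector. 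Locality of $S_\theta$ (diameter at most $R_0$) survives this translation, up to a constant set by the dual cell geometry, so the model is a local quenched chain gauge theory.

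For the Nishimori statement, I will write the actual error as $e=e_\sigma+\gamma_{h_0}+b_0$, drawn with probability $Q_\theta(e)$. The joint law of the quenched background $(\sigma,h_0)$ is $Z_{\sigma,h_0}$ up to normalization, because $Q_\theta$ summed over the $B_k$ orbit is exactly the sector weight. Equivalently, by the identity above, the background is drawn with the same Boltzmann weight $\e^{-S_\theta}$ and the same $\theta$ that appear in the annealed partition sum. This coincidence of quenched and annealed weights is the defining property of the Nishimori line. The only care needed is bookkeeping: orientation signs in the duality, and checking that the fiber count is uniform. The latter holds because $\partial_{k+1}$ is a group homomorphism of finite groups. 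I expect no substantive obstacle.
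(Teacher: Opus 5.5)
Your proposal is correct and follows the paper's proof. The paper also obtains the identity by counting uniform fibers of the surjection $\partial_{k+1}\colon C_{k+1}\to B_k$, and it gets the Nishimori statement by taking the actual error as the reference, so that the quenched background is distributed by $Q_\theta$ itself; your Poincar\'e-duality degree bookkeeping and the observation that the sector labels have joint law $Z_{\sigma,h}$ just make explicit what the paper asserts.
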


\begin{proof}
The map $\partial_{k+1}:C_{k+1}\to B_k$ is onto, and every fiber has size $\abs{\ker\partial_{k+1}}$. We can therefore replace the sum over $b\in B_k$ by a sum over all of its preimages, which gives \eqref{eq:gauge_partition_general}. If we choose the actual error as the reference, the quenched background is distributed according to $Q_\theta$. This is the same local action that appears in the gauge partition function.
\end{proof}

For $G=\Ztwo$, $D=2$, and $k=1$, the dual model is the random bond Ising model. Repeated syndrome extraction replaces the spatial complex by a spacetime complex and gives the random plaquette gauge model of \cite{WHP}. The finite volume identities are unchanged by this lift. A threshold bound must instead use the incidence degree, cell count, and distance of a nontrivial cycle in the spacetime complex.

\subsection{Logical defect fugacity}

For each syndrome, choose a class $h_*(\sigma)$ of maximum weight. We define
\begin{equation}
 \Delta_\sigma(g)=
 \log\frac{Z_{\sigma,h_*}}{Z_{\sigma,h_*+g}}\ge0,
 \qquad
 R_\sigma=\sum_{g\ne0}\e^{-\Delta_\sigma(g)}.
\label{eq:defect_fugacity_general}
\end{equation}

\begin{theorem}[Optimal failure probability]
\label[theorem]{thm:exact_failure}
The conditional and averaged optimal failure probabilities are
\begin{equation}
 p_{\rm fail}^{\ML}(\sigma)=\frac{R_\sigma}{1+R_\sigma},
 \qquad
 p_{\rm fail}^{\ML}=\E_\sigma\!\left[\frac{R_\sigma}{1+R_\sigma}\right].
\label{eq:exact_failure}
\end{equation}
Reliable decoding is therefore equivalent to $R_\sigma\to0$ in probability. If $\abs H$ is fixed, this is equivalent to
\[
 \min_{g\ne0}\Delta_\sigma(g)\longrightarrow+\infty
\]
in probability. For growing $H$,
\begin{equation}
 \e^{-\Delta_{\min,\sigma}}
 \le R_\sigma\le(\abs H-1)\e^{-\Delta_{\min,\sigma}},
\label{eq:defect_entropy}
\end{equation}
so a sufficient condition is $\Delta_{\min,\sigma}-\log(\abs H-1)\to+\infty$.
\end{theorem}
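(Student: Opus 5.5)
The plan is to compute everything directly from the posterior \eqref{eq:posterior_general}. The key observation is that, conditioned on $\sigma$, the actual error lies in the affine space $e_\sigma+\ker\partial_k$. Its homology class relative to $e_\sigma$ is distributed as $\Prob(h\mid\sigma)$, because the joint probability of $(\sigma,h)$ equals $Z_{\sigma,h}$ summed appropriately.

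\textbf{Conditional failure probability.} Any decoder outputs a class $\hat h(\sigma)$, and it succeeds exactly when the true class equals $\hat h$. Its conditional success probability is therefore $\Prob(\hat h\mid\sigma)$. This is maximized by $\hat h=h_*(\sigma)$, which is the ML decoder. Hence
\[
p^{\ML}_{\rm fail}(\sigma)=1-\frac{Z_{\sigma,h_*}}{\sum_j Z_{\sigma,j}}.
\]
Divide numerator and denominator by $Z_{\sigma,h_*}$. Reindexing $j=h_*+g$ gives $\sum_j Z_{\sigma,j}/Z_{\sigma,h_*}=1+\sum_{g\ne0}\e^{-\Delta_\sigma(g)}=1+R_\sigma$. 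Thus $p^{\ML}_{\rm fail}(\sigma)=R_\sigma/(1+R_\sigma)$.

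Two side facts need recording. First, $\Delta_\sigma(g)\ge0$ by maximality of $h_*$. Second, everything is independent of the choices of $e_\sigma$ and $\gamma_h$, since those choices only relabel the sectors, as already noted, and the recovery is defined up to stabilizers. Averaging over the syndrome distribution gives the second formula in \eqref{eq:exact_failure}.

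\textbf{Equivalences.} The function $f(x)=x/(1+x)$ is increasing, bounded by $1$, and satisfies $f(x)\to0$ iff $x\to0$. So by bounded convergence, $\E_\sigma[f(R_\sigma)]\to0$ iff $f(R_\sigma)\to0$ in probability, iff $R_\sigma\to0$ in probability. The forward direction uses Markov's inequality; the converse uses boundedness.

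Next, \eqref{eq:defect_entropy} follows by bounding the sum of $\abs H-1$ nonnegative terms between its largest term and $(\abs H-1)$ times its largest term. The largest term is $\e^{-\Delta_{\min,\sigma}}$, where $\Delta_{\min,\sigma}=\min_{g\ne0}\Delta_\sigma(g)$. For fixed $\abs H$ the two bounds differ by a constant factor, so $R_\sigma\to0$ in probability iff $\Delta_{\min,\sigma}\to\infty$ in probability. For growing $H$, the upper bound reads $R_\sigma\le\exp[-(\Delta_{\min,\sigma}-\log(\abs H-1))]$, which yields the stated sufficient condition.

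\textbf{Main obstacle.} The only delicate point is the first step: confirming that the true logical class has conditional law exactly $Z_{\sigma,h}/\sum_j Z_{\sigma,j}$. To check this, partition $\{e:\partial e=\sigma\}$ into the cosets $e_\sigma+\gamma_h+B_k$, which is possible because $\ker\partial_k/B_k=H$ with representatives $\gamma_h$. Then $\Prob(\sigma,h)=Z_{\sigma,h}$. One must also confirm that errors differing by an element of $B_k$ act identically on the code space. That holds because they differ by a stabilizer $X_{\partial b}$, so success depends only on the class.
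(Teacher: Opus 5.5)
Your proposal is correct and follows the paper's argument. You normalize the posterior by $Z_{\sigma,h_*}$, reindex the sum by translation by $h_*$ to obtain $1+R_\sigma$, use $0\le R_\sigma/(1+R_\sigma)\le 1$ to turn $p_{\rm fail}^{\ML}\to0$ into $R_\sigma\to0$ in probability, and bound $R_\sigma$ below by its largest term $\e^{-\Delta_{\min,\sigma}}$ and above by $(\abs H-1)$ times that term; you also spell out the coset partition of the syndrome fiber and the Markov and bounded-convergence step, which the paper leaves implicit.
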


\begin{proof}
Divide the normalization of the posterior by $Z_{\sigma,h_*}$. Translation by $h_*$ permutes the elements of $H$, so the denominator becomes $1+R_\sigma$. Since $0\le R/(1+R)\le1$, its expectation tends to zero exactly when the random variable tends to zero in probability. The other statements follow from \eqref{eq:defect_entropy}.
\end{proof}

A decoding threshold need not coincide with a singularity of the bulk free energy density. The condition for decoding is instead the vanishing of the total fugacity of nontrivial topological twists.

\section{Fourier duality and mixed state diagnostics}
\label{sec:duality}

\subsection{Finite group Fourier duality}

We now take independent noise,
\[
 Q(e)=\prod_xq_x(e_x),
\]
and use the Fourier convention
\begin{equation}
 q_x(g)=\sum_{\chi\in\wideG}\widehat q_x(\chi)\chi(g),
 \qquad
 \widehat q_x(\chi)=\frac{1}{\abs G}\sum_{g\in G}q_x(g)\overline{\chi(g)}.
\label{eq:fourier_q}
\end{equation}
Let $Z^k(X;\wideG)=\ker\delta_k$ and set
\[
 \cW(\chi)=\prod_x\widehat q_x(\chi_x).
\]
Character orthogonality gives
\begin{equation}
 Z_0(e)=\abs{B_k}
 \sum_{\chi\in Z^k(X;\wideG)}\cW(\chi)\chi(e).
\label{eq:dual_partition}
\end{equation}
This identity is algebraic and remains valid when the Fourier weights are signed or complex. If the weights define a positive dual model, the ratios below are ordinary expectation values. Otherwise, they are normalized disorder amplitudes.

\begin{theorem}[Logical odds are dual Wilson amplitudes]
\label[theorem]{thm:wilson_odds}
Choose the actual error $e$ as the reference. Its logical class is then $0$. For any $h\in H$,
\begin{equation}
 \frac{\Prob(h\mid\partial e)}{\Prob(0\mid\partial e)}
 =\frac{Z_0(e+\gamma_h)}{Z_0(e)}
 =\frac{\sum_{\chi\in Z^k}\cW(\chi)\chi(e)\chi(\gamma_h)}
 {\sum_{\chi\in Z^k}\cW(\chi)\chi(e)}.
\label{eq:wilson_odds}
\end{equation}
The factor $\chi(\gamma_h)=\langle[\chi],h\rangle$ is a Wilson operator in the dual theory. In the original decoding theory, it is a disorder or 't Hooft insertion.
\end{theorem}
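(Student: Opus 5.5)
The plan is to combine the sector sum \eqref{eq:general_sector} with the Fourier identity \eqref{eq:dual_partition}. Two points need fixing first. The notation $Z_0(e)$ is read as the orbit sum $\sum_{b\in B_k}Q(e+b)$ over the boundary coset of $e$. The normalization $\Xi$ is trivial here, because $Q$ is a product of normalized single-cell distributions.

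\emph{First equality.} Take the actual error $e$ as the reference $e_\sigma$ for $\sigma=\partial e$. Since $\gamma_h$ is a cycle, $e+\gamma_h$ has the same syndrome. By \eqref{eq:general_sector}, $Z_{\sigma,h}=\sum_{b\in B_k}Q(e+\gamma_h+b)=Z_0(e+\gamma_h)$, and $Z_{\sigma,0}=Z_0(e)$ with $\gamma_0=0$. The common denominator in \eqref{eq:posterior_general} then cancels in the ratio, giving the first equality. Choosing $e$ as the reference is what labels its own class as $0$, as the remark after \eqref{eq:posterior_general} explains.

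\emph{Second equality.} I would derive \eqref{eq:dual_partition} explicitly, since it is only asserted in the text. Expand each $q_x$ by \eqref{eq:fourier_q}. This gives $Q(e+b)=\sum_{\chi\in C^k(X;\wideG)}\cW(\chi)\chi(e)\chi(b)$. Now sum over $b\in B_k$. Since $B_k$ is a subgroup, character orthogonality gives $\sum_{b\in B_k}\chi(b)=\abs{B_k}$ if $\chi$ is trivial on $B_k=\im\partial_{k+1}$, and $0$ otherwise. The condition that $\chi$ vanishes on $\im\partial_{k+1}$ is exactly $\delta_k\chi=0$, because $(\delta\chi)(a)=\chi(\partial a)$. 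This yields \eqref{eq:dual_partition}. Applying it to both $e$ and $e+\gamma_h$ and using $\chi(e+\gamma_h)=\chi(e)\chi(\gamma_h)$, the factors $\abs{B_k}$ cancel, which gives the second equality.

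\emph{Interpretive claim.} For $\chi\in Z^k$, the value $\chi(\gamma_h)$ depends only on the cohomology class $[\chi]$ and the homology class $h$. Shifting $\gamma_h$ by a boundary changes nothing, by the cocycle condition. Shifting $\chi$ by a coboundary $\delta\beta$ changes nothing either, because $\delta\beta(\gamma_h)=\beta(\partial\gamma_h)=1$. So $\chi(\gamma_h)=\langle[\chi],h\rangle$ is the pairing of \cref{lem:pontryagin}. In the dual model this is a Wilson operator along $\gamma_h$. In the original theory it shifts the background by $\gamma_h$, which is a disorder or 't Hooft insertion.

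\emph{Main obstacle.} Nothing here is genuinely hard. The only points needing care are the nonvanishing of the denominator and the bookkeeping of the orthogonality step. The denominator is positive because $Q>0$, even though the individual Fourier weights $\cW$ may be signed or complex. In the orthogonality step one must identify the annihilator of $B_k$ with $\ker\delta_k$ correctly, with no extra assumption on $G$.
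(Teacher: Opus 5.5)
Your proposal is correct and follows the paper's proof. You get the first equality by taking the actual error as the reference in \eqref{eq:general_sector}, the second by applying \eqref{eq:dual_partition} to numerator and denominator, and the class-dependence of $\chi(\gamma_h)$ from the cocycle condition; your only additions are an explicit derivation of \eqref{eq:dual_partition} by character orthogonality and the check of invariance under $\chi\mapsto\chi+\delta\beta$, both of which the paper leaves implicit.
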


\begin{proof}
The first equality follows from the shift relation in \eqref{eq:general_sector}. Applying \eqref{eq:dual_partition} to the numerator and denominator gives the second equality. A cocycle vanishes on boundaries, so $\chi(\gamma_h)$ depends only on $h$.
\end{proof}

For the surface code, \eqref{eq:wilson_odds} is the Kadanoff--Ceva disorder line or domain wall free energy \cite{KramersWannier,Wegner,KadanoffCeva}. In higher dimensions, it becomes a wrapped Wilson surface or higher form disorder operator. The quantity that enters decoding is the posterior logical tension
\begin{equation}
 \tau_h=-\limsup_{L\to\infty}\frac{1}{d_h(L)}
 \log\frac{Z_0(e+\gamma_h)}{Z_0(e)},
 \qquad
 d_h(L)=\min_{[c]=h}\abs{\supp c},
\label{eq:logical_tension}
\end{equation}
where the limit can be taken in probability or after choosing a disorder average.

\subsection{Fourier relation between the receiver and the environment}

Purify the Pauli channel by
\begin{equation}
 V\ket{\psi}=\sum_e\sqrt{Q(e)}\,X_e\ket{\psi}\otimes\ket{e}_E.
\label{eq:noise_purification}
\end{equation}
Let $\ket{w}$ be an eigenstate of the shift logical algebra, with $w\in\widehat H$. Two errors with the same syndrome differ by a logical cycle. They therefore give the same receiver state, up to the character associated with that cycle.

\begin{proposition}[Environment Gram matrix]
\label[proposition]{prop:environment}
Conditioned on a syndrome $\sigma$, the environment state for input $\ket{w}$ is pure within that syndrome block. For two logical eigenstates $w,w'\in\widehat H$,
\begin{equation}
 \braket{\Phi_\sigma^{w'}}{\Phi_\sigma^{w}}
 =\sum_{h\in H}\Prob(h\mid\sigma)\,\langle w-w',h\rangle.
\label{eq:environment_fourier}
\end{equation}
The Gram matrix of the environment is therefore the finite group Fourier transform of the logical posterior at the receiver. If the posterior concentrates on one class, the environment states are parallel up to phases. If the posterior is uniform, states with different conjugate labels are orthogonal.
\end{proposition}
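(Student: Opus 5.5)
We plan to compute the joint state directly from the purification \eqref{eq:noise_purification} and then condition on the syndrome.

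\textbf{Step 1: set up the logical basis and the syndrome block.} Fix the logical eigenstate $\ket{w}$, $w\in\widehat H$. Using \cref{lem:pontryagin}, a logical shift $X_{\gamma_h}$ acts on it by the phase $\langle w,h\rangle$ (with the sign convention fixed so that it is a character of $H$). Stabilizers $X_{\partial b}$ act trivially on code states. An error $e$ maps the code space into the syndrome block labeled by $\partial e$; this is the eigenspace of the conjugate checks $Z_{\delta\beta}$, since $Z_{\delta\beta}X_e=(\delta\beta)(e)X_eZ_{\delta\beta}=\beta(\partial e)X_eZ_{\delta\beta}$. So projecting onto syndrome $\sigma$ keeps exactly the errors $e=e_\sigma+\gamma_h+b$ with $h\in H$ and $b\in B_k$.

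\textbf{Step 2: show the environment is pure within the block.} For such $e$,
\[
X_e\ket{w}=X_{e_\sigma}X_{\gamma_h}X_b\ket{w}=\langle w,h\rangle X_{e_\sigma}\ket{w}.
\]
The projected output is therefore
\[
X_{e_\sigma}\ket{w}\otimes\ket{\Phi^w_\sigma},\qquad
\ket{\Phi^w_\sigma}\propto\sum_{h,b}\sqrt{Q(e_\sigma+\gamma_h+b)}\,\langle w,h\rangle\ket{e_\sigma+\gamma_h+b}_E .
\]
This is a product state, so the environment is pure within the block. Its squared norm is $\sum_{h}Z_{\sigma,h}=\Prob(\sigma)$ by \eqref{eq:general_sector}, which is independent of $w$. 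We normalize by this common factor.

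\textbf{Step 3: compute the overlap.} The environment basis states $\ket{e}$ are orthonormal, and distinct pairs $(h,b)$ give distinct errors. The overlap is therefore a single sum:
\[
\braket{\Phi^{w'}_\sigma}{\Phi^w_\sigma}=\frac{1}{\sum_jZ_{\sigma,j}}\sum_{h}\overline{\langle w',h\rangle}\langle w,h\rangle\sum_b Q(e_\sigma+\gamma_h+b).
\]
Since $\overline{\langle w',h\rangle}\langle w,h\rangle=\langle w-w',h\rangle$ and the inner sum is $Z_{\sigma,h}$, \eqref{eq:posterior_general} gives \eqref{eq:environment_fourier}.

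\textbf{Step 4: the two limiting cases.} If the posterior is a delta function at $h_0$, the overlap is the phase $\langle w-w',h_0\rangle$, so the states are parallel up to phases. If the posterior is uniform, character orthogonality on $H$ gives $\delta_{w,w'}$, so states with different labels are orthogonal.

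\textbf{Main obstacle.} The only delicate point is bookkeeping in Steps 1 and 2. One must check that the reference choice $e_\sigma$ contributes only a common unitary $X_{e_\sigma}$ on the receiver, so that the global phase $\langle w,\cdot\rangle$ attached to $e_\sigma$ is the same for all terms and can be absorbed. One must also check that the pairing sign conventions make $\langle w,h\rangle$ multiplicative in both arguments. Changing $e_\sigma$ or $\gamma_h$ only relabels $h$, as noted after \eqref{eq:posterior_general}, and this multiplies both $\ket{\Phi^w_\sigma}$ and $\ket{\Phi^{w'}_\sigma}$ by phases. These phases cancel in the Gram matrix up to the stated phase ambiguity.
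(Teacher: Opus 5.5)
Your proposal is correct and follows the paper's own argument. Both proofs rest on the identity $X_{e_\sigma+\gamma_h+b}\ket{w}=\langle w,h\rangle X_{e_\sigma}\ket{w}$, which makes every receiver vector in the syndrome block collinear, so the environment factor is pure and its overlap is the character transform of the sector weights $Z_{\sigma,h}$; you simply write out the syndrome projection, normalization and orthonormality bookkeeping that the paper leaves implicit.

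One small correction to your last paragraph: shifting the reference $e_\sigma$ by a cycle $\gamma_{h_1}$ multiplies $\ket{\Phi^w_\sigma}$ by the $w$-dependent phase $\overline{\langle w,h_1\rangle}$. These phases therefore do not cancel in the Gram matrix. Instead the Gram matrix is conjugated by a diagonal unitary, which corresponds exactly to the relabeling $h\mapsto h-h_1$ of the posterior in the formula.
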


\begin{proof}
Fix a syndrome and choose representatives $e_{\sigma,h}$. The receiver states obey
\[
 X_{e_{\sigma,h}}\ket{w}
 =\langle w,h\rangle\,X_{e_{\sigma,0}}\ket{w}
\]
up to stabilizers. All receiver vectors in the syndrome block are therefore collinear, and the corresponding environment block is pure. Taking the overlap for inputs $w$ and $w'$ gives \eqref{eq:environment_fourier}.
\end{proof}

Equation \eqref{eq:environment_fourier} has a direct operational interpretation. Logical uncertainty at the receiver is the Fourier dual of distinguishability between conjugate sectors in the environment.

\subsection{Coherent information}

Let $R$ be a reference system maximally entangled with the code space, and send the other half through the shift noise channel.

\begin{lemma}[Coherent information]
\label[lemma]{lem:coherent_information}
The coherent information is
\begin{equation}
 I_c(R\rangle B)=\log\abs H-\E_\sigma H\bigl(\Prob(\cdot\mid\sigma)\bigr),
\label{eq:coherent_information}
\end{equation}
where $H$ is Shannon entropy with natural logarithms. Moreover,
\begin{equation}
 p_{\rm fail}^{\ML}\le
 \frac{\E_\sigma H(\Prob(\cdot\mid\sigma))}{\log2}.
\label{eq:failure_entropy}
\end{equation}
Hence $I_c\to\log\abs H$ implies reliable decoding. Conversely, reliable decoding implies $I_c\to\log\abs H$ whenever
\begin{equation}
 p_{\rm fail}^{\ML}\log\abs H\longrightarrow0,
\label{eq:fano_condition}
\end{equation}
which is automatic for bounded $\abs H$.
\end{lemma}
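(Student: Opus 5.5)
The plan is to compute the output state of reference plus receiver explicitly in the logical eigenbasis, and to identify each syndrome block with a classical--quantum state whose entropies are Shannon entropies.

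\textbf{Structure of the output state.} Write the maximally entangled input as $\abs H^{-1/2}\sum_{w\in\widehat H}\ket{w}_R\ket{w}$. After the channel, the joint state is
\[
 \rho_{RB}=\sum_e Q(e)\,(1\otimes X_e)\ket{\Psi}\bra{\Psi}(1\otimes X_e^\dagger).
\]
Group the errors by syndrome $\sigma$ and logical class $h$.
\begin{enumerate}
\item By the proof of \cref{prop:environment}, $X_{e_{\sigma,h}}\ket{w}=\langle w,h\rangle X_{e_{\sigma,0}}\ket{w}$ up to stabilizers.
\item The vectors $X_{e_{\sigma,0}}\ket{w}$ are orthonormal in $w$. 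Different syndromes occupy orthogonal blocks.
\item Hence $\rho_{RB}=\bigoplus_\sigma P(\sigma)\,U_\sigma\bigl(\sum_h \Prob(h\mid\sigma)\,\ket{\Psi_h}\bra{\Psi_h}\bigr)U_\sigma^\dagger$, where $\ket{\Psi_h}=\abs H^{-1/2}\sum_w\langle w,h\rangle\ket{w}\ket{w}$ and $U_\sigma$ is a fixed unitary embedding.
\item By the perfect pairing of \cref{lem:pontryagin}, the $\ket{\Psi_h}$ are mutually orthogonal: their overlaps are $\abs H^{-1}\sum_w\langle w,h-h'\rangle=\delta_{h,h'}$.
\end{enumerate}

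\textbf{Entropies.} From this decomposition,
\[
 S(RB)=H(P_\sigma)+\E_\sigma H\bigl(\Prob(\cdot\mid\sigma)\bigr).
\]
The receiver marginal in block $\sigma$ is maximally mixed on an $\abs H$-dimensional image, which gives $S(B)=H(P_\sigma)+\log\abs H$. Subtracting yields $I_c=S(B)-S(RB)=\log\abs H-\E_\sigma H(\Prob(\cdot\mid\sigma))$, which is \eqref{eq:coherent_information}.

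\textbf{Failure bound.} For \eqref{eq:failure_entropy}, use the elementary inequality $1-\max_h p_h\le H(p)/\log 2$, valid for any distribution $p$. It follows from $H(p)\ge -\log\max_h p_h$ together with $-\log(1-x)\ge x\log 2$ for $x\in[0,1/2]$. For $x>1/2$, all masses are below $1/2$, so $H(p)\ge\log 2\ge\log2\cdot x$. Taking the expectation in $\sigma$ gives the bound. The first implication is then immediate.

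\textbf{Converse.} The converse is Fano's inequality applied per syndrome: $H(\Prob(\cdot\mid\sigma))\le h_2(p_\sigma)+p_\sigma\log(\abs H-1)$, with $p_\sigma$ the conditional failure probability. Averaging and using concavity of $h_2$ gives
\[
 \E_\sigma H\le h_2(p_{\rm fail}^{\ML})+p_{\rm fail}^{\ML}\log\abs H .
\]
Both terms vanish under \eqref{eq:fano_condition}, since $p_{\rm fail}^{\ML}\to0$.

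\textbf{Main obstacle.} The step needing the most care is the block decomposition: checking that the receiver vectors for different $w$ remain orthonormal within a syndrome block, and that the sector states $\ket{\Psi_h}$ are orthogonal. The second point is exactly where \cref{lem:pontryagin} is needed, without any freeness assumption.
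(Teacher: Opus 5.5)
Your proposal is correct and follows the same route as the paper. Both arguments decompose $\rho_{RB}$ into orthogonal syndrome and logical-class blocks with weights $\Prob(\sigma)\Prob(h\mid\sigma)$, read off $S(RB)$ and $S(B)$, use the pointwise bound $1-p_{\max}\le H(p)/\log 2$, and invoke Fano's inequality for the converse; you additionally write out the states $\ket{\Psi_h}$, check their orthogonality via character orthogonality, and spell out the Jensen step for $h_2$, all of which the paper leaves implicit.
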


\begin{proof}
The joint state of the reference and receiver is a direct sum of orthogonal syndrome and logical class blocks with probabilities $\Prob(\sigma)\Prob(h\mid\sigma)$. Tracing out the reference removes the logical label inside each syndrome block. Therefore
\[
 S(RB)=H(\sigma)+\E_\sigma H(\Prob(\cdot\mid\sigma)),
 \qquad
 S(B)=H(\sigma)+\log\abs H.
\]
This gives \eqref{eq:coherent_information}. The pointwise inequality $1-p_{\max}\le H(p)/\log2$ gives \eqref{eq:failure_entropy}. The converse is Fano's inequality together with \eqref{eq:fano_condition}.
\end{proof}

\subsection{Second replica observables and the species rule}

The Nishimori ensemble used for optimal decoding is different from the replica coupling that appears in intrinsic mixed state diagnostics. Let
\[
 \rho_0=\frac{P_{\rm code}}{\Tr P_{\rm code}},
 \qquad
 \rho_Q=\cN_Q(\rho_0),
 \qquad
 F_O=\frac{\Tr(\rho_QO\rho_QO^\dagger)}{\Tr\rho_Q^2}.
\]
For product shift noise define
\begin{equation}
 r_x(g)=\sum_{a\in G}q_x(a)q_x(a-g),
 \qquad
 R(c)=\prod_xr_x(c_x).
\label{eq:replica_kernel}
\end{equation}

\begin{proposition}[Replica two identities]
\label[proposition]{prop:replica_two}
For the shift noise channel:
\begin{enumerate}
\item If $\chi$ is a closed conjugate species operator, then $F_{Z_\chi}=1$ exactly.
\item If $\chi$ is open, then $F_{Z_\chi}=0$ exactly because it maps the state to an orthogonal stabilizer syndrome sector.
\item For a same species shift $X_a$,
\begin{equation}
 F_{X_a}=\frac{\sum_{z\in Z_k(X;G)}R(a+z)}{\sum_{z\in Z_k(X;G)}R(z)}.
\label{eq:same_species_replica}
\end{equation}
In particular, $F_{X_a}=1$ for every closed $a\in Z_k$.
\end{enumerate}
The nontrivial replica diagnostic is therefore an open defect pair in the same species. A closed logical loop is identically one, while an open operator in the conjugate species is identically zero.
\end{proposition}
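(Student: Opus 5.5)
I will write the channel output explicitly and compute the two-replica traces directly in the computational basis, working in the basis of boundary-coset states. The code projector is
\[
 P_{\rm code}=\frac{1}{\abs{B_k}}\sum_{b\in B_k}X_b\;\Pi_Z,
\]
where $\Pi_Z$ projects onto $\ker\partial_k$, meaning configurations $c$ with $\partial c=0$. Thus $\rho_0\propto\sum_{c,c'\in Z_k,\,c-c'\in B_k}\ket{c}\bra{c'}$, which is the identity on the logical label. The noisy state is
\[
 \rho_Q=\sum_eQ(e)X_e\rho_0X_e^\dagger.
\]
It has matrix elements between $\ket{c+e}$ and $\ket{c'+e}$ with $c,c'\in Z_k$ and $c-c'\in B_k$. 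So $\rho_Q$ is block diagonal in the syndrome $\partial$ of the basis label, and within a block it is built from shifts. I will expand
\[
 \Tr(\rho_Q O\rho_Q O^\dagger)=\sum_{e,e'}Q(e)Q(e')\,\Tr\!\bigl(X_e\rho_0X_e^\dagger\,O\,X_{e'}\rho_0X_{e'}^\dagger\,O^\dagger\bigr)
\]
and evaluate the inner trace for each of the three cases.

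For items 1 and 2, I will take $O=Z_\chi$ and commute it through $X_e$. This uses $Z_\chi X_e=\chi(e)X_eZ_\chi$, and the phases cancel against $O^\dagger$ because the same $e$ appears on both sides of each $\rho_0$. That leaves $\Tr(X_{e'}^\dagger X_e\rho_0X_e^\dagger X_{e'} Z_\chi\rho_0Z_\chi^\dagger)$.
\begin{itemize}
\item If $\chi$ is closed, meaning $\delta\chi=0$, then $Z_\chi$ acts on the code space as a logical operator of the conjugate type. A logical $Z$ commutes with $\rho_0\propto P_{\rm code}$, since $P_{\rm code}$ is invariant under all logical operators. Hence $Z_\chi\rho_0Z_\chi^\dagger=\rho_0$, so the numerator equals $\Tr\rho_Q^2$ and $F=1$.
\item If $\chi$ is open, then $Z_\chi$ fails to commute with some $X$ stabilizer $X_{\partial b}$, picking up the phase $\delta\chi(b)\neq1$. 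So $Z_\chi\rho_0Z_\chi^\dagger$ lies in an $X$-stabilizer eigenspace orthogonal to the code space. Every $X_e$ commutes with the $X$ stabilizers, so $X_e\rho_0X_e^\dagger$ keeps trivial $X$-stabilizer eigenvalues. The trace is therefore a product of operators supported on orthogonal eigenspaces of $X_{\partial b}$, and it vanishes.
\end{itemize}
One point needs care: ``open'' must be interpreted modulo $Z$-stabilizers. If $\chi$ is a coboundary plus a cocycle, it is closed. I will state this as the convention.

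For item 3, I will compute $\Tr(X_e\rho_0X_e^\dagger X_aX_{e'}\rho_0X_{e'}^\dagger X_a^\dagger)$. Using $\rho_0=P_{\rm code}/\abs{H}$ and the fact that $P_{\rm code}$ is supported on $Z_k$, this trace is nonzero only if $e-e'-a\in Z_k$. In that case it equals
\[
 \Tr\bigl(P_{\rm code}\,X_{e'+a-e}P_{\rm code}X_{e'+a-e}^\dagger\bigr)/\abs H^2.
\]
A cycle shift $X_z$ maps the code space to itself, since it is a logical operator or a stabilizer. So the trace equals the constant $1/\abs H$, independent of $z$. Summing over $e'=e-a+z$ with $z\in Z_k$ gives
\[
 \sum_e\sum_{z\in Z_k}Q(e)Q(e-a+z).
\]
Substituting the definition of $r_x$ with $g=a-z$ per cell, this becomes $\sum_zR(a-z)$. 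Reindexing $z\to-z$, which is allowed because $Z_k$ is a group, gives $\sum_{z}R(a+z)$. The denominator is the case $a=0$, which proves \eqref{eq:same_species_replica}. For $a\in Z_k$ the sum is translation invariant, so $F_{X_a}=1$.

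I expect the main obstacle to be the bookkeeping in item 3. Specifically, I must confirm that the cycle condition, rather than a boundary condition, is what survives. This rests on the fact that logical shifts preserve $P_{\rm code}$, giving the same trace as stabilizers. I also need to check the sign convention in $r_x$ against the reindexing.
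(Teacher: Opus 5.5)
Your proposal is correct and follows essentially the same route as the paper. You expand both copies of the channel. For a closed $Z_\chi$, the phases cancel and $Z_\chi$ commutes with $P_{\rm code}$; for an open $Z_\chi$, the $X$-stabilizer eigenspaces are orthogonal. For $X_a$, the trace is nonzero only when $a+e'-e\in Z_k$, takes the constant value $1/\abs H$ there, and summing over $e$ gives $R$. Your added detail, including the reindexing $z\to -z$, checks out.
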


\begin{proof}
A closed $Z_\chi$ commutes with the code projector. It changes each error amplitude by a phase, but the phase cancels between the two density matrices. An open $Z_\chi$ changes an $X$ stabilizer eigenvalue, so the two states have orthogonal support. For $X_a$, expand the two copies of the channel. The trace with the code projector vanishes unless $a+e'-e$ is a cycle. Summing over the difference distribution gives \eqref{eq:same_species_replica}.
\end{proof}

\section{Threshold criteria and partial condensation}
\label{sec:thresholds}

\subsection{Peierls threshold bound}

Assume independent noise and let $\Delta_X$ be the maximal degree of the incidence graph on $k$ cells, in which two cells are adjacent when they share an incident cell of degree $k-1$ or $k+1$. Define
\begin{equation}
 w(q)=\max_x\max_{g\ne0}\sum_{a\in G}\sqrt{q_x(a)q_x(a+g)},
 \qquad
 \kappa_0=e\Delta_X(\abs G-1).
\label{eq:bhattacharyya_w}
\end{equation}
Let $n_k$ be the number of $k$ cells and
\[
 d_{\min}=\min\set{\abs{\supp c}:c\in Z_k\setminus B_k}.
\]

We first fix a deterministic rule for resolving ties and define the \emph{most likely error decoder}
\[
 \widehat e(\sigma)\in\arg\max_{\partial e=\sigma}Q(e).
\]
This decoder does not necessarily maximize the probability of a full logical class. Optimal logical class decoding can therefore only perform better.

\begin{theorem}[Positive threshold by contour counting]
\label[theorem]{thm:peierls}
If $\kappa_0w(q)<1$, the most likely error decoder obeys
\begin{equation}
 p_{\rm fail}^{\sharp}\le
 n_k\frac{(\kappa_0w(q))^{d_{\min}}}{1-\kappa_0w(q)},
 \qquad
 p_{\rm fail}^{\ML}\le p_{\rm fail}^{\sharp}.
\label{eq:peierls_bound}
\end{equation}
It follows that every bounded geometry family with $d_{\min}/\log n_k\to\infty$ has a nonzero threshold. The same estimate applies to a repeated measurement spacetime complex after replacing $(n_k,d_{\min},\Delta_X)$ by the corresponding spacetime quantities; in particular, it gives a threshold for any lifted family satisfying $d_{\min}^{\rm st}/\log n_k^{\rm st}\to\infty$.
\end{theorem}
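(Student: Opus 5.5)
Let $e$ be the actual error and $\widehat e=\widehat e(\partial e)$ the most likely error decoder's output. Set $c=\widehat e-e$. Then $\partial c=0$, and the decoder fails exactly when $c\in Z_k\setminus B_k$. The plan is the standard DKLP/Peierls argument adapted to group-valued chains.

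\textbf{Step 1: reduce to connected clusters.} Decompose $\supp c$ into connected components in the incidence graph on $k$ cells. Each component $c_i$ is itself a cycle. The reason is that $\partial c$ restricted to a $(k-1)$ cell only involves $k$ cells incident to it, and these cells are mutually adjacent in the incidence graph, so they lie in a single component. If every $c_i$ were in $B_k$, then $c\in B_k$. So on failure some component $c'$ lies in $Z_k\setminus B_k$, and therefore $\abs{\supp c'}\ge d_{\min}$.

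\textbf{Step 2: use optimality to get a likelihood inequality on the cluster.} Consider $\widehat e-c'$. It has the same syndrome as $\widehat e$, so $Q(\widehat e)\ge Q(\widehat e-c')$. The two chains differ only on $S=\supp c'$, and independence gives
\[
 \prod_{x\in S}q_x(e_x+c'_x)\;\ge\;\prod_{x\in S}q_x(e_x).
\]
Hence the event "$\widehat e-e$ has a component equal to $c'$" is contained in the event $E_{c'}=\{\prod_{x\in S}q_x(e_x+c'_x)\ge\prod_{x\in S}q_x(e_x)\}$. To bound its probability, insert the square root of the ratio:
\[
 \Prob(E_{c'})\le\E\prod_{x\in S}\sqrt{\frac{q_x(e_x+c'_x)}{q_x(e_x)}}=\prod_{x\in S}\sum_a\sqrt{q_x(a)q_x(a+c'_x)}\le w(q)^{\abs S}.
\]
Here $c'_x\neq0$ on $S$, and positivity of $q_x$ makes the ratio well defined.

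\textbf{Step 3: count and sum.} Apply a union bound over connected $k$-cell sets $S$ of size $m\ge d_{\min}$ containing a given cell, and over the labels $c'_x\in G\setminus\{0\}$ on $S$. The standard lattice-animal bound gives at most $(e\Delta_X)^m$ connected sets through a fixed vertex, and there are $n_k$ choices of root. This yields
\[
 p^\sharp_{\rm fail}\le n_k\sum_{m\ge d_{\min}}(e\Delta_X(\abs G-1)w)^m,
\]
which is the geometric series in \eqref{eq:peierls_bound}. The comparison $p^{\ML}_{\rm fail}\le p^\sharp_{\rm fail}$ holds because the maximum likelihood class decoder maximizes the success probability among all decoders that depend only on $\sigma$ (Theorem~\ref{thm:exact_failure}).

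\textbf{Consequences.} For a bounded geometry family, $\Delta_X$ is uniform, and $w(q)\to0$ as the noise tends to the identity. So $\kappa_0 w<1$ on an open noise region, and the bound becomes $\exp\bigl(\log n_k-d_{\min}\log(1/\kappa_0w)\bigr)\to0$ whenever $d_{\min}/\log n_k\to\infty$. The spacetime statement is the same argument run on the spacetime complex, with measurement errors included as cells.

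The step I expect to need the most care is the cluster reduction in Step 1 together with the animal count in Step 3. One has to check that the chosen adjacency, which includes sharing a $(k+1)$ cell, really makes components into cycles and bounds the count by $(e\Delta_X)^m$. The one-sided Markov/Bhattacharyya estimate in Step 2 is routine.
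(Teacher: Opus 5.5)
Your proposal is correct and follows essentially the same route as the paper. Both arguments decompose $c=\widehat e-e$ into incidence-connected components, each of which is a cycle, with at least one homologically nontrivial component $c_0$. Both then use optimality of $\widehat e$ and product factorization to get $Q(e+c_0)\ge Q(e)$, bound that event by the Bhattacharyya estimate $w(q)^{\abs{\supp c_0}}$, and finish with the $(e\Delta_X)^\ell(\abs G-1)^\ell$ animal-and-labeling count summed over the root cell and $\ell\ge d_{\min}$. Your explicit check that components are cycles (only the $(k-1)$-cell adjacency is actually needed for this) and your observation that $w(q)\to0$ as the noise vanishes fill in details the paper leaves implicit.
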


\begin{proof}
Let $e$ be the actual error and define $c=\widehat e(\partial e)-e$. A decoding failure implies that $c$ is a nontrivial homology cycle. Decompose its support into connected components of the incidence graph. Since cells that meet a common boundary cell are adjacent and $\partial c=0$, every labeled component is itself a cycle. At least one component, denoted $c_0$, must be homologically nontrivial. Otherwise their sum would be a boundary.

The errors $\widehat e$ and $\widehat e-c_0$ have the same syndrome. Since $\widehat e$ is a most likely error,
\[
 Q(\widehat e)\ge Q(\widehat e-c_0).
\]
All other connected components are disjoint from $c_0$. Product factorization then turns this inequality into $Q(e+c_0)\ge Q(e)$. For a fixed nonzero labeled cycle $c$,
\begin{align}
 \Prob[Q(e+c)\ge Q(e)]
 &\le\sum_e Q(e)\sqrt{\frac{Q(e+c)}{Q(e)}}\notag\\
 &=\sum_e\sqrt{Q(e)Q(e+c)}\notag\\
 &\le w(q)^{\abs{\supp c}}.
\label{eq:peierls_hellinger}
\end{align}
There are at most $(e\Delta_X)^\ell$ connected supports of size $\ell$ that contain a specified cell, and each support has at most $(\abs G-1)^\ell$ nonzero labelings. Summing over the first cell and over $\ell\ge d_{\min}$ gives \eqref{eq:peierls_bound}. The optimal logical decoder has failure probability no larger than this decoder.
\end{proof}

For the symmetric distribution on a group of order $m=\abs G$,
\[
 q(0)=1-p,
 \qquad
 q(g\ne0)=\frac{p}{m-1},
\]
one has
\begin{equation}
 w(p)=2\sqrt{\frac{p(1-p)}{m-1}}+\frac{m-2}{m-1}p.
\label{eq:flat_bhattacharyya}
\end{equation}

\subsection{Fractional moments and maximal noise}

For an actual error $e$, set
\[
 R_h(e)=\frac{Z_0(e+\gamma_h)}{Z_0(e)}.
\]
By \cref{thm:wilson_odds}, $R_h(e)$ is a positive logical odds ratio. In the dual model, it is represented by a Wilson or disorder amplitude.

\begin{proposition}[Decoder independent fractional moment bound]
\label[proposition]{prop:wilson_bound}
For every $s\in(0,1)$, the maximum likelihood failure probability satisfies
\begin{equation}
 p_{\rm fail}^{\ML}
 \le\sum_{h\ne0}\E_e\!\left[\min\{1,R_h(e)\}\right]
 \le\sum_{h\ne0}\E_e R_h(e)^s.
\label{eq:wilson_bound}
\end{equation}
On the Nishimori ensemble, however, the first moment is exactly
\begin{equation}
 \boxed{\E_e R_h(e)=1}
 \qquad(h\in H).
\label{eq:nishimori_first_moment}
\end{equation}
At $s=1/2$, the fractional moment is the Bhattacharyya affinity between the joint syndrome and logical distribution and its translate:
\begin{equation}
 \E_e\sqrt{R_h(e)}
 =\sum_{\sigma}\sum_{g\in H}
 \sqrt{Z_{\sigma,g}Z_{\sigma,g+h}}.
\label{eq:logical_bhattacharyya}
\end{equation}
It follows that if, for some fixed $s\in(0,1)$,
\begin{equation}
 \E R_h^s\le \e^{-\tau d_h}
 \quad(h\ne0),
 \qquad
 \log(\abs H-1)=o(\tau d_{\min}),
\label{eq:fractional_moment_threshold}
\end{equation}
then $p_{\rm fail}^{\ML}\to0$ exponentially. At uniform noise the posterior is exactly flat,
\begin{equation}
 p_{\rm fail}^{\ML}=1-\frac1{\abs H},
 \qquad
 I_c=0.
\label{eq:uniform_failure}
\end{equation}
\end{proposition}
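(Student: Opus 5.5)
I would organize the proof around the identity $R_h(e)=Z_{\partial e,h_e+h}/Z_{\partial e,h_e}$, where $h_e$ is the class of $e$ relative to the fixed representatives $e_\sigma$. This identity holds because, by \eqref{eq:general_sector}, $Z_0(e+\gamma_h)$ is the orbit sum over $e+\gamma_h+B_k$, which is the sector $h_e+h$ for syndrome $\sigma=\partial e$. The joint law of $(\sigma,h_e)$ under $Q$ is exactly $Z_{\sigma,g}$, since these weights are normalized probabilities summing to one. Every expectation over $e$ of a function of $R_h(e)$ therefore reduces to a sum over $(\sigma,g)$ weighted by $Z_{\sigma,g}$. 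I will use this reduction throughout.

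\textbf{Failure bound \eqref{eq:wilson_bound}.} Conditionally on $\sigma$, the maximum likelihood decoder fails with probability $1-\max_g \Prob(g\mid\sigma)$. Averaging over $\sigma$ and rewriting as an expectation over the actual class $g=h_e$, this equals the probability that $g$ is not the posterior maximizer. If $g$ is not maximal, some $h\ne0$ has $Z_{\sigma,g+h}\ge Z_{\sigma,g}$, so $R_h(e)\ge1$. A union bound then gives $p^{\ML}_{\rm fail}\le\sum_{h\ne0}\Prob[R_h\ge1]$. This is bounded by $\sum_{h\ne0}\E\min\{1,R_h\}$, because $\min\{1,R_h\}=1$ on that event and is nonnegative elsewhere. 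Finally, $\min\{1,R\}\le R^s$ for $R>0$ and $s\in(0,1)$. Ties only enlarge the failure event, so they are harmless.

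\textbf{Moment identities.} For the first moment \eqref{eq:nishimori_first_moment}, write
\[
\E R_h=\sum_{\sigma,g}Z_{\sigma,g}\,\frac{Z_{\sigma,g+h}}{Z_{\sigma,g}}=\sum_{\sigma,g}Z_{\sigma,g+h}=1,
\]
using that translation by $h$ permutes $H$ and that the weights are strictly positive by positivity of $Q$. This cancellation is the Nishimori mechanism of \cref{thm:nishimori_general}. The same computation at $s=1/2$ gives $\sum_{\sigma,g}Z_{\sigma,g}\sqrt{Z_{\sigma,g+h}/Z_{\sigma,g}}$, which is \eqref{eq:logical_bhattacharyya}.

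\textbf{Exponential decay under \eqref{eq:fractional_moment_threshold}.} Insert the hypothesis into \eqref{eq:wilson_bound} and use $d_h\ge d_{\min}$ for $h\ne0$. This gives
\[
p^{\ML}_{\rm fail}\le(\abs H-1)\e^{-\tau d_{\min}}=\exp[-\tau d_{\min}(1-o(1))],
\]
which tends to zero exponentially in $d_{\min}$ provided $\tau d_{\min}\to\infty$. I read this as implicit in the little-$o$ condition. The one point needing care is that $d_h$ must be the minimal support weight of class $h$, as in \eqref{eq:logical_tension}, so that $d_h\ge d_{\min}$ holds.

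\textbf{Uniform noise \eqref{eq:uniform_failure}.} Here $Q$ is constant, so every coset of $B_k$ receives the same weight $\abs{B_k}/\abs G^{n_k}$. Thus $Z_{\sigma,h}$ is independent of $h$ and the posterior is uniform. Then $R_\sigma=\abs H-1$, and \cref{thm:exact_failure} gives $p^{\ML}_{\rm fail}=(\abs H-1)/\abs H$. Likewise \cref{lem:coherent_information} gives $I_c=\log\abs H-\log\abs H=0$.

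\textbf{Main obstacle.} I expect the only delicate step to be the bookkeeping in the first step. One must identify the law of the actual class with the sector weights and handle the choice of representatives, which, as noted after \eqref{eq:posterior_general}, only relabels sectors. Once that is settled, everything else is algebra.
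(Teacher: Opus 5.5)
Your proposal is correct and follows essentially the same route as the paper. Both arguments use a union bound over the event $R_h(e)\ge1$ with $\mathbf 1_{R\ge1}\le\min\{1,R\}\le R^s$, the coset-by-coset evaluation $R_h(e)=Z_{\sigma,g+h}/Z_{\sigma,g}$ for $e\in e_\sigma+\gamma_g+B_k$ for the first and half moments, and equal coset weights at uniform noise. Your remarks that $d_h\ge d_{\min}$ and that exponential decay also requires $\tau d_{\min}\to\infty$ spell out what the paper leaves implicit when it says ``the threshold statement then follows.''
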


\begin{proof}
If the maximum likelihood decoder chooses the wrong class, then $R_h(e)\ge1$ for some nonzero $h$. Applying the union bound together with $\mathbf1_{R\ge1}\le\min\{1,R\}\le R^s$ gives \eqref{eq:wilson_bound}.

Let $C_{\sigma,g}=e_\sigma+\gamma_g+B_k$. Its total probability is $Z_{\sigma,g}$. For any $e\in C_{\sigma,g}$,
\[
 R_h(e)=\frac{Z_{\sigma,g+h}}{Z_{\sigma,g}}.
\]
Therefore
\[
 \E R_h
 =\sum_{\sigma,g}Z_{\sigma,g}
 \frac{Z_{\sigma,g+h}}{Z_{\sigma,g}}
 =\sum_{\sigma,g}Z_{\sigma,g+h}=1.
\]
The same calculation with exponent one half gives \eqref{eq:logical_bhattacharyya}. The threshold statement then follows from \eqref{eq:wilson_bound}. For uniform noise, every logical coset in a fixed syndrome fiber has the same weight.
\end{proof}

\begin{remark}[The first moment]
Equation \eqref{eq:nishimori_first_moment} is a change of measure identity. In a correctable phase, rare errors with a large translated odds ratio compensate the typical errors with a small ratio. The annealed first moment therefore cannot diagnose decoding. Fractional moments, quantiles, and typical defect free energies can.
\end{remark}

The number of logical classes must also be controlled. A disjoint union of topological components can suppress defects on each component while the number of sectors grows too quickly for the total fugacity to vanish.

\subsection{Monotone degradation families}

\begin{definition}[Degradation family]
A one parameter family of Pauli channels $\cN_t$ is a degradation family if for $s\ge0$ there is a channel $\cM_{t,s}$ with
\[
 \cN_{t+s}=\cM_{t,s}\circ\cN_t.
\]
Convolution semigroups and the usual symmetric error families are examples.
\end{definition}

\begin{proposition}[Sharp operational threshold]
\label[proposition]{prop:degradation}
Every optimal recovery quantity that obeys data processing under channel composition is monotone along a degradation family. Examples include optimal entanglement fidelity and minimum recovery error in diamond norm. Thus
\[
 t_c=\sup\set{t:\text{the family is asymptotically decodable at }t}
\]
is a sharp operational threshold. The corresponding thermal theory need not have a continuous transition. The definition also applies to a first order transition or a crossover.
\end{proposition}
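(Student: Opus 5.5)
The argument is a direct application of data processing to the factorization $\cN_{t+s}=\cM_{t,s}\circ\cN_t$. Fix the code and a reference system $R$, and consider an optimal recovery quantity $F^\star(\cN)$. Examples are $F^\star(\cN)=\sup_{\cR}F_e(\cR\circ\cN)$, with the supremum over recovery channels $\cR$ and $F_e$ the entanglement fidelity, or $\epsilon^\star(\cN)=\inf_{\cR}\norm{\cR\circ\cN\circ\cE-\mathrm{id}}_\diamond$, where $\cE$ is the encoding.

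\textbf{Monotonicity.} I would show $F^\star(\cN_{t+s})\le F^\star(\cN_t)$ and $\epsilon^\star(\cN_{t+s})\ge\epsilon^\star(\cN_t)$.

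\begin{enumerate}
\item Take any recovery $\cR$ for $\cN_{t+s}$.
\item The composite $\cR':=\cR\circ\cM_{t,s}$ is a legitimate recovery channel for $\cN_t$, and it satisfies $\cR'\circ\cN_t=\cR\circ\cN_{t+s}$.
\item Hence every figure of merit achievable at $t+s$ is already achievable at $t$.
\item Taking the supremum (respectively infimum) over $\cR$ gives the two inequalities.
\end{enumerate}

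This is exactly the abstract data processing property the statement assumes. I would phrase the proof for any functional of the form $\sup_{\cR}\Phi(\cR\circ\cN)$, which covers both examples at once. Coherent information, monotone by the quantum data processing inequality, is a further instance. Through \cref{lem:coherent_information} it also gives the same monotonicity for $\E_\sigma H(\Prob(\cdot\mid\sigma))$.

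\textbf{Threshold.} Next I would define asymptotic decodability at $t$ as $\epsilon^\star_L(\cN_t)\to0$ along the code family. Fix $t'<t$. By the first step, $\epsilon^\star_L(\cN_{t'})\le\epsilon^\star_L(\cN_t)$ for every $L$. So decodability at $t$ implies decodability at $t'$, and the decodable set is a down-closed interval $[0,t_c)$ or $[0,t_c]$. Consequently $t_c$ is well defined. It separates a region where recovery succeeds asymptotically from a region where it fails, and this does not depend on any free energy singularity.

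\textbf{Last claim and main obstacle.} The statement about the thermal theory is a remark, not something to prove. Monotonicity used only the channel factorization and never analyticity of $\log Z$. So the argument holds whether the Nishimori ensemble of \cref{thm:nishimori_general} has a continuous transition, a first order transition, or a crossover.

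The step needing most care is checking that each listed quantity genuinely involves an optimization over all recovery channels, so that absorbing $\cM_{t,s}$ is allowed. Maximum likelihood decoding restricted to Pauli corrections is not of this form, because $\cR\circ\cM_{t,s}$ need not be syndrome based. For the Pauli ML failure probability I would therefore argue separately:

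\begin{itemize}
\item When $\cM_{t,s}$ is itself a shift channel, the composite error distribution is a convolution.
\item The classical ML success probability is monotone under post-composition by independent noise. This is again data processing, now for the classical Bayes error of guessing $h$ from $\sigma$.
\end{itemize}
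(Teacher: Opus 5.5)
Your argument is correct and matches the paper's: a recovery $\cR$ for $\cN_{t+s}$ gives the recovery $\cR\circ\cM_{t,s}$ for $\cN_t$ with identical output, so optimal figures of merit can only get worse as $t$ increases, and you also spell out the down-closure of the decodable set that makes $t_c$ sharp. One correction to your side remark on Pauli maximum likelihood decoding: it is not literally data processing of $\sigma$, because the target changes from the class of $e$ to the class of $e+e'$; it instead follows from a simulation in which the decoder samples $e'$ itself, decodes the syndrome $\partial e+\partial e'$, and subtracts $e'$ from the returned correction.
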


\begin{proof}
The channel $\cN_{t+s}$ is obtained from $\cN_t$ by adding more noise. Any recovery for the former can therefore be viewed as a recovery for the latter after an additional channel. Data processing prevents an improvement under this composition.
\end{proof}

This threshold is the topological defect transition of the decoding ensemble along the Nishimori family. Identifying it with a singularity of a quenched bulk free energy or with a thermal transition of $\Tcode$ requires additional dynamical input.

\subsection{Logical mixing subgroups}

We now take the logical group $H$ to be fixed and finite. A phase can mix a subgroup of logical fluxes while retaining the quotient information.

\begin{theorem}[Posterior condensation subgroup]
\label[theorem]{thm:posterior_subgroup}
Let $U\le H$. The posterior approaches the uniform distribution on a random coset $h_*(\sigma)+U$ if and only if
\begin{equation}
 \max_{u\in U}\Delta_\sigma(u)\longrightarrow0,
 \qquad
 \min_{g\notin U}\Delta_\sigma(g)\longrightarrow+\infty
\label{eq:partial_condensation}
\end{equation}
in probability, with the convention that the minimum over an empty set is $+\infty$. In that phase
\[
 p_{\rm succ}^{\ML}\longrightarrow\frac1{\abs U}.
\]
A character $\eta\in\widehat H$ survives exactly when it is trivial on $U$. The remaining dual logical algebra is the annihilator
\[
 U^\perp=\set{\eta\in\widehat H:\eta(u)=1\text{ for all }u\in U}.
\]
\end{theorem}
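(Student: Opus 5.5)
The plan is to write the posterior in the normalized form already used in \cref{thm:exact_failure}. Fix $\sigma$, set $p_g=\Prob(h_*+g\mid\sigma)$, and note that
\[
 p_g=\frac{\e^{-\Delta_\sigma(g)}}{1+R_\sigma},
 \qquad
 R_\sigma=\sum_{g\ne0}\e^{-\Delta_\sigma(g)},
\]
with $\Delta_\sigma\ge0$ and $\Delta_\sigma(0)=0$. Because $\abs H$ is fixed and finite, convergence of the posterior to the uniform law on $h_*+U$, in probability, is the same as the total variation distance
\[
 \delta_\sigma=\tfrac12\sum_{g\in H}\bigl\lvert p_g-\abs U^{-1}\mathbf 1_{g\in U}\bigr\rvert
\]
tending to zero in probability. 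One point needs care: the target coset is tied to the maximizer $h_*$. If the posterior is close to uniform on some random coset $c+U$, then $h_*$ must lie in that coset once the error is small enough, so $c+U=h_*+U$. I will record this as a first small step.

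For the direction ``\eqref{eq:partial_condensation} implies convergence'', apply the map $x\mapsto\e^{-x}$ entrywise. The first condition in \eqref{eq:partial_condensation} gives $\e^{-\Delta_\sigma(u)}\to1$ for every $u\in U$. The second gives $\e^{-\Delta_\sigma(g)}\to0$ for $g\notin U$. Since $H$ is finite, both limits hold jointly in probability, hence $1+R_\sigma\to\abs U$. This yields $p_u\to1/\abs U$ and $p_g\to0$, so $\delta_\sigma\to0$.

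For the converse, suppose $\delta_\sigma\to0$. Then $p_0=\max_g p_g\to1/\abs U$, and every $p_u$ with $u\in U$ tends to the same value, so $\Delta_\sigma(u)=\log(p_0/p_u)\to0$. For $g\notin U$ we have $p_g\to0$ while $p_0$ stays bounded below, so $\Delta_\sigma(g)=\log(p_0/p_g)\to+\infty$. Both implications are statements in probability about finitely many continuous functions of the vector $(p_g)$, so the continuous mapping theorem covers them. When $U=H$ the set $\{g\notin U\}$ is empty, and the convention that the empty minimum is $+\infty$ makes the second condition vacuous.

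The success probability is $p^{\ML}_{\rm succ}=\E_\sigma\,p_0=\E_\sigma(1+R_\sigma)^{-1}$. The integrand lies in $[0,1]$ and converges in probability to $1/\abs U$, so bounded convergence gives the limit $1/\abs U$.

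For the character statement, I will take ``$\eta$ survives'' to mean that the posterior expectation $\E_{h\mid\sigma}\eta(h-h_*)$ has modulus tending to one. By \cref{prop:environment}, this is the overlap of the environment states $\Phi^{w}$ and $\Phi^{w+\eta}$, and modulus one means those states are not distinguished. Under the limiting uniform law on $h_*+U$ this expectation equals $\abs U^{-1}\sum_{u\in U}\eta(u)$. By character orthogonality on the finite group $U$, that average is $1$ when $\eta|_U$ is trivial and $0$ otherwise. Hence the surviving characters are exactly $U^\perp$. Perfectness of the pairing in \cref{lem:pontryagin} then identifies $U^\perp\cong\widehat{H/U}$ as the remaining dual logical algebra.

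I expect the main obstacle to be making ``survives'' precise enough that the dichotomy (modulus $1$ versus modulus $0$) is a genuine theorem rather than a definition. I would fix it through the Fourier identity \eqref{eq:environment_fourier}, as above. The remaining step is routine uniform-integrability bookkeeping in the in-probability limits, which is straightforward here because $H$ is finite.
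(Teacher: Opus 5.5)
Your proposal is correct and follows the same route as the paper: you normalize the posterior by $Z_{\sigma,h_*}$ so that the conditions in \eqref{eq:partial_condensation} become equal weights on $h_*+U$ and vanishing weights off it (and conversely), and you obtain the surviving characters from character orthogonality on $U$. Your additions fill in details that the paper's three-line proof leaves implicit: pinning the limiting coset to $h_*+U$, the bounded-convergence argument for $p_{\rm succ}^{\ML}$, and making ``survives'' precise through the environment overlap of \cref{prop:environment}.
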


\begin{proof}
Equation \eqref{eq:partial_condensation} makes the weights in $h_*+U$ asymptotically equal and suppresses every weight outside that coset. Conversely, convergence to the uniform distribution on the coset gives these ratios. The statement about the dual algebra follows from character orthogonality on $U$.
\end{proof}

\subsection{Coefficient subgroup filtration}

For composite $G$, different quotients of the logical information can fail at different noise strengths. Let $K\le G$ and let $\pi_K:G\to G/K$ be the quotient map. We say that decoding succeeds at level $K$ when the decoder recovers the homology class of $\pi_K(e)$.

\begin{theorem}[Condensation filtration]
\label[theorem]{thm:coefficient_filtration}
Fix a code family and noise family.
\begin{enumerate}
\item If $K\le K'$, then level $K$ decodability implies level $K'$ decodability.
\item Decodability of the $G/K$ homological code under the pushforward noise implies level $K$ decodability.
\item Suppose both $H_k(X_L;\ZZ)$ and $H_{k-1}(X_L;\ZZ)$ are torsion free for every $L$, as on tori. Then the decodable coefficient subgroups are upward closed and closed under intersection. Hence there is a unique minimal subgroup $K_*(t)$, and the surviving logical algebra contains that of the $G/K_*(t)$ code.
\item Along a degradation family, $K_*(t)$ is nondecreasing. The memory can therefore fail through a chain of partial condensation transitions in the subgroup lattice of $G$.
\end{enumerate}
\end{theorem}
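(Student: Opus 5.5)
Each part reduces to a statement about the coefficient maps $H_k(X;G)\to H_k(X;G/K)$.

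For (1), note that $\pi_{K'}$ factors as $G\to G/K\to G/K'$. The induced map on chains commutes with $\partial$, so it descends to homology. Any decoder that outputs the $G/K$ class of $\pi_K(e)$ can be post-composed with the induced map $H_k(X;G/K)\to H_k(X;G/K')$, which yields the class of $\pi_{K'}(e)$. Its failure probability is therefore no larger.

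For (2), the syndrome $\partial e$ determines $\partial\pi_K(e)=\pi_K(\partial e)$. A decoder for the $G/K$ code under the pushforward noise $(\pi_K)_*Q$ therefore sees exactly the data it needs. Since $\pi_K(e)$ is distributed according to that pushforward, success of this decoder is exactly success at level $K$. The only point to check is that the coarsened syndrome, rather than the full one, suffices; extra information can only help the optimal decoder.

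For (3), upward closure is (1). For intersections, the plan uses the universal coefficient theorem. Under the torsion-freeness assumptions, $H_k(X;A)\cong H_k(X;\ZZ)\otimes A$ naturally in $A$, and the Tor term from $H_{k-1}$ vanishes. The injection $G/(K\cap K')\hookrightarrow G/K\times G/K'$ then induces an injection on homology, because $H_k(X;\ZZ)$ is free so tensoring with it is exact. A decoder at level $K\cap K'$ runs the level-$K$ and level-$K'$ decoders in parallel. By the union bound it recovers the pair of classes with vanishing failure probability, and injectivity recovers the class in $H_k(X;G/(K\cap K'))$.

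I expect this injectivity step to be the main obstacle, since it is exactly where the torsion-free hypothesis is used: without it, Tor terms could make the pair of quotient classes fail to determine the joint one. The family of decodable subgroups is then nonempty, containing $G$ trivially, finite, and closed under intersection. Its intersection $K_*(t)$ is therefore the unique minimal element. The claim about the logical algebra follows from (2) and \cref{lem:pontryagin}: the conjugate logicals dual to $H_k(X;G/K_*)$ survive.

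For (4), suppose level $K$ is decodable at $t+s$. Then it is decodable at $t$, by the data-processing monotonicity of \cref{prop:degradation} applied to the classical-information recovery quantity. This is justified because the extra channel $\cM_{t,s}$ can be simulated by the decoder itself. Hence the decodable family shrinks as $t$ grows, and $K_*(t)$, its intersection, is nondecreasing.
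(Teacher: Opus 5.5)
Your proposal is correct and follows essentially the same route as the paper: functoriality of the coefficient quotient for (1) and (2); for (3), the universal coefficient theorem with both torsion-free hypotheses, so that $G/(K\cap K')\hookrightarrow G/K\oplus G/K'$ induces an injection on $H_k$, combined with a union bound; and data processing along the degradation family for (4). One small correction: the claim in (3) about the surviving logical algebra follows from the definition of level $K_*$ decodability, together with surjectivity of $H_k(X;G)\to H_k(X;G/K_*)$ under the torsion-free hypothesis, rather than from part (2), which concerns decoding the $G/K$ code from the coarsened syndrome alone and is not what this claim needs.
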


\begin{proof}
The first two statements follow from functoriality of the quotient. To treat intersections, use the coefficient map
\[
 G/(K\cap K')\longrightarrow G/K\oplus G/K',
\]
which is injective. Under the stated torsion free hypotheses, the universal coefficient theorem reduces to
\[
 H_k(X;A)\cong H_k(X;\ZZ)\otimes A.
\]
The injection of coefficient groups then induces an injection on $H_k$. Successful decoders at levels $K$ and $K'$ determine a unique common class in $H_k(X;G/(K\cap K'))$. Their failure probabilities can be combined with a union bound. The last statement follows by applying \cref{prop:degradation} to every quotient.
\end{proof}

The torsion assumption in \cref{thm:coefficient_filtration} is stronger than the assumption that only $H_k$ is torsion free. Torsion in $H_{k-1}$ produces the $\operatorname{Tor}$ term in the universal coefficient theorem and can create a Bockstein ambiguity invisible in the separate quotient sectors.

\begin{example}[$\mathbb Z_4$ partial failure]
Consider noise supported on $2\mathbb Z_4\cong\mathbb Z_2$. The quotient $\mathbb Z_4/2\mathbb Z_4$ is noiseless, while the remaining fine sector is a $\mathbb Z_2$ homological code. Once the $\mathbb Z_2$ sector passes its threshold, but before the noise acts on the quotient, the subgroup $2\mathbb Z_4$ is mixed and the surviving memory is the quotient $\mathbb Z_2$. Small odd shifts can then produce a second transition, giving three distinct phases.
\end{example}

\subsection{Matter and screened order parameters}

Dynamical matter can terminate flux lines. In the decoding model, this occurs when the noise action contains summed charged variables or correlated processes that do not preserve the CSS decomposition. A bare Wilson loop may then obey a perimeter law on both sides of the transition, as in the Fradkin--Shenker continuity region \cite{FradkinShenker}. The appropriate order parameter is the dressed defect free energy, equivalently a Fredenhagen--Marcu horseshoe ratio \cite{FredenhagenMarcu}.

The exact posterior remains well defined:
\[
 \frac{Z_{\sigma,h}}{Z_{\sigma,0}}
 =\e^{-\Delta F_{\sigma,h}}
\]
is the screened free energy cost of changing the logical completion. The information theoretic criterion can therefore remain nontrivial even when no local thermodynamic singularity or useful bare loop criterion exists. A sharp transition then becomes a separate dynamical question. If fundamental matter explicitly breaks the relevant higher form symmetry, the syndrome must include the matter endpoints or the code must be formulated in relative homology. Otherwise the topological distance can collapse.

\part{Topological memories and confining phases}

\section{\texorpdfstring{Four dimensional $\ZN$ toric code}{Four dimensional ZN toric code}}
\label{sec:3d_code}

\subsection{Hamiltonian and logical algebra}

Place a qudit with $N$ levels on every edge of a periodic cubic lattice of linear size $L$. Let $X$ and $Z$ be the clock and shift operators, with $ZX=\omega XZ$ and $\omega=\e^{2\pi\ii/N}$. The vertex and plaquette operators are
\[
 A_v=\prod_{\ell\ni v}X_\ell^{\pm1},
 \qquad
 B_p=\prod_{\ell\in\partial p}Z_\ell^{\pm1}.
\]
The commuting projector Hamiltonian is
\begin{equation}
 H_0=J_A\sum_v(1-\Pi_A(v))+J_B\sum_p(1-\Pi_B(p)),
\label{eq:3d_toric_hamiltonian}
\end{equation}
with
\[
 \Pi_A(v)=\frac1N\sum_{j=0}^{N-1}A_v^j,
 \qquad
 \Pi_B(p)=\frac1N\sum_{j=0}^{N-1}B_p^j.
\]
Violations of $A_v$ are point charges, while violated plaquettes form magnetic flux loops. On $T^3$, choose coordinate one cycles $\gamma_i$ and dual two tori $\Sigma_i$. The logical operators
\begin{equation}
 \overline Z_i=\prod_{\ell\in\gamma_i}Z_\ell,
 \qquad
 \overline X_i=\prod_{\ell\perp\Sigma_i}X_\ell,
 \qquad
 \overline X_i\overline Z_j=\omega^{\delta_{ij}}\overline Z_j\overline X_i
\label{eq:3d_logicals}
\end{equation}
give three Weyl pairs.

\begin{proposition}[The fixed point memory]
\label[proposition]{prop:fixed_code}
On the cubical three torus,
\begin{enumerate}
\item $H_0$ defines a $[[3L^3,3,L]]_N$ stabilizer code. The shortest line logical has weight $L$, while a dual membrane logical has weight $L^2$.
\item The model satisfies local topological order on length scales proportional to $L$ and has a uniform spectral gap bounded below by $\min(J_A,J_B)>0$ in the projector normalization.
\item Under sufficiently weak bounded finite range perturbations, the low energy band of rank $N^3$ remains separated by a nonzero gap. Its splitting is superpolynomially small in $L$, and its logical algebra is carried by quasiadiabatically dressed versions of \eqref{eq:3d_logicals} \cite{BHM}.
\item Under independent $\ZN$ Pauli noise, each CSS sector has a positive optimal threshold by \cref{thm:peierls}; measurement errors are included by the four dimensional spacetime lift. For the $N=2$ cubic model, the corresponding random one form and two form gauge theories and their optimal phenomenological thresholds were analyzed explicitly in \cite{Xu3DToric}.
\end{enumerate}
\end{proposition}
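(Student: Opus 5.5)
The plan is to treat the four items in order, reducing each to an earlier result or to a standard citation. Item 1 follows from the homological framework of \cref{sec:general_code} with $X=T^3$ (cubical), $G=\ZN$, $k=1$. The number of physical qudits is the edge count $3L^3$. By \cref{lem:pontryagin} the code dimension is $\abs{H_1(T^3;\ZN)}=N^3$, so there are three logical qudits. The distance is the minimum of two quantities.
\begin{itemize}
\item The $Z$-type logicals are cocycles representing nontrivial classes of $H^1(T^3;\ZN)$. Their minimal weight is the minimal number of edges crossed by a dual membrane, which is $L^2$.
\item The $X$-type shift logicals are nontrivial $1$-cycles of $H_1$. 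Their minimal weight is $L$.
\end{itemize}
For the lower bound $L$, take a nontrivial cycle with a nonzero winding in direction $i$. Pairing it with the dual torus $\Sigma_i$ (a cocycle) must give a nonzero value, so the cycle meets each of the $L$ parallel translates of $\Sigma_i$ at least once. These translates are disjoint, which gives weight $\ge L$. The same projection argument, applied with translated line cocycles, gives $L^2$ for membranes. The line logical written in \eqref{eq:3d_logicals} as $\overline Z_i$ should be reconciled with this convention only up to the labeling of species; the weights $L$ and $L^2$ are unaffected.

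For item 2, the gap uses only that $H_0$ is a sum of commuting projectors. Ground states are the common $+1$ eigenspace. Any other joint eigenstate violates at least one projector, so its energy is at least $\min(J_A,J_B)$. The constraints $\prod_v A_v=1$ and the plaquette Bianchi identity do not lower this bound.

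Local topological order will be checked in the Bravyi--Hastings--Michalakis sense. For an operator $O$ supported in a ball of radius $r<cL$, we need $P O P=c_O P$. I would show that $POP$ commutes with all logicals. Any logical can be deformed by stabilizers to avoid a ball that is small compared to $L$, since both lines and membranes can be translated away. Hence $POP$ is proportional to $P$ on the irreducible logical algebra. The local indistinguishability (TQO-2) condition follows similarly, because the local reduced ground states are determined by stabilizer constraints inside the enlarged region.

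Item 3 is then a direct application of \cite{BHM}: frustration-free commuting projectors with TQO-1 and TQO-2 are stable under weak bounded finite range perturbations. Their result gives a persistent gap, a splitting superpolynomially small (in fact $\e^{-cL}$) in $L$, and a quasiadiabatic unitary that dresses the logicals. The main care needed is that BHM is usually stated for qubits. The $N$-level local dimension only changes constants, since every argument is via Lieb--Robinson bounds.

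Item 4 applies \cref{thm:peierls} separately to each CSS sector. The relevant data are:
\begin{itemize}
\item the edge complex for shift errors, with $d_{\min}=L$;
\item the dual complex for phase errors, with $d_{\min}=L^2$;
\item the cell counts $n_k$, which are polynomial in $L$;
\item the incidence degree $\Delta_X$, which is bounded.
\end{itemize}
Hence $d_{\min}/\log n_k\to\infty$, and the bound \eqref{eq:peierls_bound} vanishes once $\kappa_0 w(q)<1$, which holds for small noise by \eqref{eq:flat_bhattacharyya}. For measurement errors, I would use the spacetime complex $T^3\times[0,T]$. There the relevant spacetime distance is still at least $\min(L,T)$ for lines and grows correspondingly for membranes, so taking $T\propto L$ suffices.

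I expect the main obstacle to be the spacetime lift: defining the relative complex with temporal boundaries and bounding $d^{\rm st}_{\min}$ from below. I would handle it by projecting spacetime cycles onto spatial slices to recover nontrivial spatial homology. The $N=2$ remark is a citation to \cite{Xu3DToric}.
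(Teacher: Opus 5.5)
Your proposal is correct and follows the same route as the paper's proof. Both count the logical qudits via $H_1(T^3;\ZN)$, use the systolic bounds $L$ and $L^2$, get the gap from the commuting projectors, derive local indistinguishability from the stabilizer structure and feed it into \cite{BHM}, and apply \cref{thm:peierls} to each CSS sector. Your translate-counting argument for the distance and your Schur-lemma argument for TQO-1 just fill in details the paper leaves implicit.

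One small refinement concerns the spacetime lift. On the absolute complex $T^3\times[0,T]$, your projection onto a spatial slice already gives $d^{\rm st}_{\min}\ge L$ for lines and $\ge L^2$ for membranes, for any $T$ with $\log T=o(L)$. So the $\min(L,T)$ bound and the choice $T\propto L$ are unnecessary. You should also not take homology relative to both time boundaries, since that would remove the spatial logical classes.
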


\begin{proof}
The homology group $H_1(T^3;\ZN)\cong(\ZN)^3$ gives three logical qudits. A nontrivial one cycle has length at least $L$, and a coordinate loop attains this bound. A nontrivial dual two cocycle has area at least $L^2$. The minimum Pauli weight is therefore $L$. An operator with trivial topology and no syndrome is a stabilizer, which gives local indistinguishability and the local consistency condition used in the stability theorem. Every excited state violates at least one projector, so the commuting projector gap is at least $\min(J_A,J_B)$. Global constraints can only make the first excitation more expensive. The perturbative statement follows from the Bravyi--Hastings--Michalakis theorem \cite{BHM}, and the threshold follows from \cref{thm:peierls}.
\end{proof}

The notation $[[3L^3,3,L]]_N$ records the number of edge qudits, the number of logical qudits, and the minimum distance. The distance is neither an energy barrier nor a particle mass.

\subsection{Distance, spectral gap, and energy barrier}

The two logical species have different energy barriers. A line logical can be implemented by creating two pointlike endpoints and moving one of them around a noncontractible cycle. Its maximum energy cost is $O(1)$. A membrane logical instead passes through configurations bounded by a flux loop. At the widest stage, this loop has length $\Theta(L)$, so the barrier grows linearly when the flux loop has positive line tension.

For the clock Hamiltonian
\[
 -J\sum_p(B_p+B_p^\dagger),
\]
the minimum nonzero plaquette violation cost is
\begin{equation}
 \epsilon_B=2J\left(1-\cos\frac{2\pi}{N}\right).
\label{eq:epsilon_B}
\end{equation}

\begin{corollary}[Wrapped flux energy]
\label[corollary]{cor:wrapped_flux}
Let $s$ be the closed $\ZN$ valued cochain of plaquette violations. If $[s]\ne0$, then every state in that sector satisfies
\begin{equation}
 E-E_0\ge\epsilon_B d^*,
 \qquad
 d^*=L
\label{eq:wrapped_flux_bound}
\end{equation}
on the cubic three torus.
\end{corollary}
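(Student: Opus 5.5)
The argument has three steps: reduce the energy to a count of violated plaquettes, show that a nontrivial closed cochain has at least $L$ nonzero plaquettes, and combine the two.

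\emph{Energy reduction.} For the clock Hamiltonian $-J\sum_p(B_p+B_p^\dagger)$, all plaquette operators commute. Work in a simultaneous eigenbasis of the $B_p$, with eigenvalues $\omega^{s_p}$. The energy of such an eigenstate is
\[
 -2J\sum_p\cos(2\pi s_p/N),
\]
so relative to the ground energy $E_0$ (all $s_p=0$) it equals
\[
 \sum_p 2J\bigl(1-\cos(2\pi s_p/N)\bigr).
\]
Each nonzero $s_p$ contributes at least $\epsilon_B$ from \eqref{eq:epsilon_B}, because $1-\cos(2\pi j/N)$ is minimized over $j\ne0$ at $j=\pm1$. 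Hence $E-E_0\ge\epsilon_B\abs{\supp s}$.

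The sector label $[s]$ is fixed on the eigenspace. A general state in that sector is a superposition of eigenstates with the same class, so the bound passes to expectation values, which is what ``every state in that sector'' means here. If other commuting terms, such as the $A_v$ terms, are present, they are nonnegative after subtracting their minima and can only increase $E-E_0$.

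\emph{Counting nonzero plaquettes.} Since $s$ is closed ($\delta s=0$, the cube constraint), I would pass to the dual lattice. Plaquettes of the cubic $T^3$ are dual to edges, and $s$ becomes a $\ZN$ one-cycle $s^*$ on the dual cubic lattice, which is again an $L\times L\times L$ torus. Poincaré duality, using the cellular dual assumed in \cref{thm:nishimori_general}, identifies $[s]\ne0$ with $[s^*]\ne0$ in $H_1(T^3;\ZN)$.

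Then $\abs{\supp s}=\abs{\supp s^*}\ge d_{\min}$ for one-cycles. As in the proof of \cref{prop:fixed_code}, this minimum equals $L$. To make that step self-contained, I would argue as follows. Pair $s^*$ with the dual two-cocycle (membrane) $\Sigma_i$ on which it has nonzero intersection number; one exists by \cref{lem:pontryagin}. Translating $\Sigma_i$ along the $i$ direction gives $L$ disjoint parallel membranes, all cohomologous. Each has the same nonzero pairing with $s^*$, so each meets $\supp s^*$. The membranes are disjoint, so $s^*$ has at least $L$ edges.

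\emph{Conclusion and main obstacle.} Combining the two steps gives $E-E_0\ge\epsilon_B L$, with $d^*=L$. The step needing the most care is this disjoint-membranes argument, which has to hold for $\ZN$ coefficients with arbitrary labels. It works because the pairing of $s^*$ with a membrane is a sum over the edges of $s^*$ that cross the membrane. If $s^*$ missed a membrane entirely, that pairing would vanish, contradicting the nonzero, translation-invariant pairing that cohomology guarantees.
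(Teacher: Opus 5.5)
Your proposal is correct and follows the same route as the paper: each nonzero plaquette violation costs at least $\epsilon_B$, and closedness of $s$ together with Poincar\'e duality turns a nontrivial class into a noncontractible dual one-cycle of length at least $L$. The only difference is that you justify the length bound explicitly, using the $L$ disjoint, homologous translates of a coordinate membrane, each of which must meet $\supp s^*$; the paper simply asserts this bound.
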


\begin{proof}
Every nonzero plaquette violation costs at least $\epsilon_B$. The cube constraints imply that $s$ is closed. A nontrivial closed two cochain is Poincar\'e dual to a noncontractible dual line, whose length is at least $L$.
\end{proof}

Equation \eqref{eq:wrapped_flux_bound} bounds a wrapped flux sector. The ordinary Hamiltonian gap remains $O(\epsilon_B)$ because a local excitation has constant energy. Wrapped flux energy, code distance, and local particle gap are therefore distinct quantities.

\subsection{Memories at finite temperature}

The three dimensional toric code is not a fully self correcting quantum memory. Its point charge sector has an $O(1)$ barrier and cannot preserve quantum information passively at positive temperature. The flux loop sector can nevertheless retain a classical bit below the loop proliferation transition \cite{CastelnovoChamon}. More generally, passive stability depends on both dimension and excitation structure \cite{BravyiTerhal,Stahl}. We will therefore distinguish active decoding thresholds, passive memory times, and thermal transitions of the clean Hamiltonian.

The species rule in \cref{prop:replica_two} gives the same conclusion. At the fixed point, a closed logical fidelity loop is exactly one and does not diagnose the transition. An open probe in the conjugate species vanishes by superselection. The useful intrinsic observable is an open pair in the same species, normalized in the manner of Fredenhagen--Marcu.

\section{Composite and nonabelian sectors}
\label{sec:extensions}

\subsection{\texorpdfstring{Composite $N$ and global form filtrations}{Composite N and global form filtrations}}

For composite $N$, \cref{thm:coefficient_filtration} allows different divisor sectors to fail at different noise strengths. In the $\mathbb Z_4$ code, noise concentrated on doubled flux can destroy the fine $\mathbb Z_2$ sector while preserving the quotient sector. The intermediate memory is then $\mathbb Z_2$. Under the conditional Yang--Mills interpretation of \cref{sec:psu_code}, this resembles motion through the lattice of global forms between $\PSU(N)$ and $\SU(N)$. For example, a divisor biased $\PSU(4)$ channel can have an intermediate $\SU(4)/\mathbb Z_2$ logical algebra between two Nishimori transitions. This conclusion concerns the chosen channel and does not predict the phase structure of pure Yang--Mills theory.

\subsection{Finite error groupoids}
\label{sec:nonabelian}

The orbit argument can also be stated without an Abelian group law. Let a finite groupoid $\mathscr G\rightrightarrows\Omega$ describe physically trivial deformations among classically distinguishable errors. Its connected components are the logical classes.

\begin{proposition}[Finite error groupoid identity]
\label[proposition]{prop:groupoid}
For $x\in\Omega$, define
\begin{equation}
 Z_x=\frac{1}{\abs{\operatorname{Aut}(x)}}
 \sum_{\gamma:\,\operatorname{src}\gamma=x}
 P(\operatorname{tgt}\gamma).
\label{eq:groupoid_weight}
\end{equation}
Then
\begin{equation}
 Z_x=\sum_{y\in[x]}P(y).
\label{eq:groupoid_orbit}
\end{equation}
Maximum likelihood decoding again reduces to a comparison of orbit partition sums.
\end{proposition}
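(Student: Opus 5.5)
The plan is to prove \eqref{eq:groupoid_orbit} by regrouping the arrows out of $x$ according to their targets, as in the proof of \cref{thm:nishimori_general}, where the onto map $\partial_{k+1}$ had fibers of constant size $\abs{\ker\partial_{k+1}}$. Here the role of that map is played by the target map from arrows with source $x$ to the component $[x]$.

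The sum in \eqref{eq:groupoid_weight} is indexed by arrows $\gamma$ with $\operatorname{src}\gamma=x$. I will partition it by $y=\operatorname{tgt}\gamma$. Since the connected components of $\mathscr G$ are by definition the logical classes, $y$ runs exactly over $[x]$. Each $y\in[x]$ occurs, because connectedness provides at least one arrow $x\to y$, or one $y\to x$ whose inverse we use. The set of $y$ that occur is therefore all of $[x]$. This gives
\[
 Z_x=\frac{1}{\abs{\operatorname{Aut}(x)}}\sum_{y\in[x]}\abs{\operatorname{Hom}(x,y)}\,P(y).
\]

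The remaining step is to show $\abs{\operatorname{Hom}(x,y)}=\abs{\operatorname{Aut}(x)}$ for every $y\in[x]$. Fix one arrow $\gamma_0:x\to y$. Then $\alpha\mapsto\gamma_0\circ\alpha$ is a bijection $\operatorname{Aut}(x)\to\operatorname{Hom}(x,y)$. Its inverse is $\gamma\mapsto\gamma_0^{-1}\circ\gamma$, and it is well defined because every morphism in a groupoid is invertible. This is the torsor statement that $\operatorname{Hom}(x,y)$ is a right $\operatorname{Aut}(x)$ torsor, and it is the groupoid analogue of the constant fiber size in \cref{thm:nishimori_general}. Substituting cancels the prefactor and gives \eqref{eq:groupoid_orbit}. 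Finiteness of $\mathscr G$ keeps all sums finite.

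For the decoding statement, I will take the syndrome to be a groupoid invariant, so that each syndrome fiber is a union of components. Then $\Prob([x]\mid\sigma)=Z_x/\sum_{[y]\subseteq\sigma}Z_y$, and maximum likelihood decoding, as in \eqref{eq:posterior_general}, compares the orbit sums $Z_x$. The argument has no real obstacle. The only care needed is the bookkeeping: arrows from $x$ to itself count in $\operatorname{Aut}(x)$, and $\operatorname{Hom}(x,y)$ is nonempty exactly when $y\in[x]$. As a consistency check, the Abelian case with $\Omega=C_k$ and arrows $e\to e+b$ for $b\in B_k$ should recover \eqref{eq:general_sector}.
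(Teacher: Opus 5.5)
Your proof is correct and follows the paper's argument. The paper's proof consists exactly of the observation that $\operatorname{Hom}(x,y)$ is an $\operatorname{Aut}(x)$ torsor for each $y\in[x]$, so that each target appears $\abs{\operatorname{Aut}(x)}$ times in the sum. You have made explicit the regrouping by target and the bijection $\alpha\mapsto\gamma_0\circ\alpha$.
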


\begin{proof}
For every $y$ in the component of $x$, the set $\operatorname{Hom}(x,y)$ is a torsor for $\operatorname{Aut}(x)$. Thus each $y$ appears exactly $\abs{\operatorname{Aut}(x)}$ times in \eqref{eq:groupoid_weight}.
\end{proof}

As one restricted example, consider central flux noise in a quantum double $D(\Gamma)$. Take the errors to lie in the center $Z(\Gamma)$, and restrict to flat holonomies for which the action of $H^1(X;Z(\Gamma))$ is free. The syndrome and likelihood then reduce to those of a $Z(\Gamma)$ homological code. For $D(D_4)$ and $D(Q_8)$, the center is $\mathbb Z_2$, so this gives a protected central sector with the $\mathbb Z_2$ threshold. If the action is not free, stabilizer multiplicities make the logical dimension depend on the sector, and a tube algebra treatment is needed. Coherent nonabelian anyon noise also lies outside the scalar Gibbs description because syndrome measurement can preserve coherence in fusion spaces.

\subsection{Compact Abelian groups}

Replacing finite sums by Haar integrals gives a formal extension to compact Abelian groups and rotor codes \cite{RotorCodes}. The clean theories already suggest a sharp dimensional distinction. With faulty measurements, a two dimensional rotor memory gives a three dimensional compact $U(1)$ decoding theory. The Goepfert--Mack theorem shows that the clean theory confines at every coupling \cite{GoepfertMack}, which suggests a zero threshold for the lifted memory. We do not prove the necessary disordered and operator algebraic extension. In three spatial dimensions, the corresponding four dimensional compact $U(1)$ theory has a Coulomb phase \cite{Guth,FrohlichSpencer}, so a positive threshold is possible. The large $N$ behavior of $\mathbb Z_N$ codes also depends on the physical noise scale and cannot be inferred from the group order alone.

The positive scalar construction has clear limits. At nonzero theta angle, the Euclidean weight can be complex and cannot be used as an ordinary probability. Coherent nonabelian anyon errors also require operator valued sector weights because syndrome measurement may preserve coherence in fusion spaces. The groupoid identity applies to classical orbit problems, not to this operator valued setting.

\section{\texorpdfstring{Confining $\PSU(N)$ vacua and spatial codes}{Confining PSU(N) vacua and spatial codes}}
\label{sec:psu_code}

\subsection{Global form and generalized symmetry}

Consider pure $\SU(N)$ gauge theory on a spatial cubic lattice, with Kogut--Susskind Hamiltonian
\begin{equation}
 H_{\rm KS}=\frac{g^2}{2a}\sum_\ell E_\ell^2
 -\frac{1}{2g^2a}\sum_p\left(\tr U_p+\tr U_p^\dagger\right).
\label{eq:KS_parent}
\end{equation}
The simply connected theory has an electric $\ZN$ one form center symmetry. Gauging this symmetry changes the global form to $\PSU(N)=\SU(N)/\ZN$ and gives a dual magnetic $\ZN$ one form symmetry \cite{GKSW,KapustinThorngren,Tachikawa}. On a spatial three torus, the magnetic symmetry operators are surfaces $\eta(\Sigma_i)$, while the genuine charged 't Hooft operators are lines $T(\gamma_i)$. They obey
\begin{equation}
 \eta(\Sigma_i)T(\gamma_j)
 =\omega^{\delta_{ij}}T(\gamma_j)\eta(\Sigma_i).
\label{eq:psu_heisenberg}
\end{equation}
This is precisely the logical algebra of the $\ZN$ code in \eqref{eq:3d_logicals}.

The global form matters for the code interpretation. The $\SU(N)$ theory is expected to have a unique confining vacuum, so it does not itself give an $N^3$ dimensional topological ground space. The adjoint $\PSU(N)$ theory can support such a sector if its magnetic one form symmetry is spontaneously broken.

\subsection{Infrared code inheritance}

\begin{assumption}[Confining parent at fixed lattice spacing]
\label[assumption]{ass:ym_package}
At fixed lattice spacing and fixed bare coupling, assume that the parent $\SU(N)$ theory has the following properties uniformly in the size of the spatial three torus:
\begin{enumerate}
\item a unique gapped vacuum in the trivial flux sector;
\item unbroken electric center symmetry and a fundamental Wilson loop area law with string tension $\sigma>0$;
\item vanishing continuous and discrete theta angles and no obstructing anomaly, so gauging the center produces the untwisted finite gauge sector.
\end{enumerate}
\end{assumption}

These are the standard lattice properties associated with confinement through the Wilson loop criterion \cite{tHooftConfinement}. At sufficiently strong coupling, the local gauge invariant sector can be controlled as in \cref{sec:strong_psu}. The same uniform statements are not known at the continuum scaling point.

\begin{proposition}[Code in a confining $\PSU(N)$ vacuum]
\label[proposition]{prop:psu_inheritance}
Assume \cref{ass:ym_package}, and assume that gauging an unbroken finite one form symmetry of a unique gapped phase produces the corresponding untwisted finite gauge theory in the infrared. Then the confining $\PSU(N)$ vacuum on $T^3$ has the following properties.
\begin{enumerate}
\item a topological ground state sector of dimension $N^3$, up to finite size splitting;
\item logical algebra generated by the wrapped 't Hooft lines and magnetic symmetry surfaces in \eqref{eq:psu_heisenberg};
\item local indistinguishability on contractible regions of diameter proportional to $L$, together with a quasilocally dressed logical algebra and an approximate correctability length proportional to $L$;
\item an $O(1)$ energy barrier for the line logical and an $\Omega(L)$ barrier for the conjugate membrane logical when the magnetic flux loop has positive line tension.
\end{enumerate}
Thus its magnetic infrared sector is in the phase of the $[[3L^3,3,L]]_N$ code of \cref{prop:fixed_code}.
\end{proposition}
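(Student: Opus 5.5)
The plan is to reduce everything to the fixed point code by means of the infrared assumption, and then to transport the properties of \cref{prop:fixed_code} across the phase.

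\textbf{Step 1: identify the infrared theory.} By \cref{ass:ym_package}, the parent $\SU(N)$ theory has a unique gapped vacuum in which the electric $\ZN$ one form symmetry is unbroken. Item 3 of the assumption guarantees that there is no anomaly and no discrete theta. Gauging then produces the untwisted $\ZN$ two form gauge theory, i.e.\ a $\ZN$ one form gauge field in $3+1$ dimensions, stacked with an invertible gapped sector. This is exactly the hypothesis added in the proposition. I will take this untwisted theory to be the continuum or lattice description whose fixed point is $H_0$ in \eqref{eq:3d_toric_hamiltonian}. Its ground space on $T^3$ is labelled by the $\ZN$ holonomies through the three cycles, which gives $\abs{H_1(T^3;\ZN)}=N^3$ states by \cref{lem:pontryagin}. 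This establishes item 1, with finite size splitting coming from tunnelling of the gapped confining degrees of freedom.

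\textbf{Step 2: match the operator algebras.} Under gauging, the wrapped 't Hooft lines $T(\gamma_i)$ become the holonomy operators $\overline Z_i$. The magnetic symmetry surfaces $\eta(\Sigma_i)$ become the membrane operators $\overline X_i$. \Cref{eq:psu_heisenberg} reproduces \eqref{eq:3d_logicals} exactly, which gives item 2.

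Item 3 then follows from the Bravyi--Hastings--Michalakis framework already used in item 3 of \cref{prop:fixed_code}. The claim that the phase is the same is interpreted as the existence of a quasiadiabatic, quasilocal unitary connecting the infrared Hamiltonian to $H_0$ plus a trivial sector. Conjugating the fixed point logicals and local indistinguishability statements by this unitary gives the dressed logical algebra, with correctability length proportional to $L$ up to Lieb--Robinson tails.

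\textbf{Step 3: the energy barriers (item 4).} For the line logical, I will create a pair of 't Hooft endpoints and transport one endpoint around $\gamma_i$. In the confining phase these endpoints are gapped local excitations of the gauged theory, so the energy is $O(1)$ along the path, mirroring the fixed point argument.

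For the membrane, I will copy the argument of \cref{cor:wrapped_flux}. Any local path implementing $\eta(\Sigma_i)$ must pass through configurations whose magnetic flux loop encloses a region that eventually sweeps across $\Sigma_i$. An isoperimetric argument on $T^3$ shows that at the midpoint this loop has length $\Omega(L)$. The assumed positive line tension then converts this length into the energy bound. Collecting Steps 1 through 3 gives the concluding sentence of the proposition.

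\textbf{Main obstacle.} The hard step is Step 1, namely making precise that gauging yields a Hamiltonian in the same gapped phase as $H_0$, uniformly in $L$. The proposition simply posits this. What remains is to check that the posited untwisted infrared theory is quasiadiabatically connected to the fixed point. I would try to take this as the definition of the phase, so that items 1 through 3 become corollaries of BHM stability. A subtler point is item 4: energy barriers are not obviously invariant under quasilocal dressing, so the line tension assumption must be imposed directly on the dressed flux loops rather than inherited from the fixed point.
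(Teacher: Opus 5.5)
Your proposal follows the paper's own argument. The infrared hypothesis identifies the untwisted $\ZN$ gauge theory, whose $N^3$ sectors and Heisenberg algebra match the fixed point code. Items 3 and 4 then come from quasilocal dressing under stability of the gapped phase, and from pointlike 't Hooft endpoints versus a swept flux loop of length $\Theta(L)$. The only differences are that you count sectors via $H_1(T^3;\ZN)$ holonomies and sketch an isoperimetric argument for the membrane barrier, whereas the paper uses the $H^2$ flux basis and its Poincar\'e dual $H^1$ basis and simply asserts the $\Theta(L)$ loop.
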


\begin{proof}[Argument]
After gauging, unbroken electric center symmetry in a unique gapped phase becomes broken dual magnetic symmetry. The long distance topological theory is the untwisted $\ZN$ finite gauge theory. Its two form flux basis is labeled by $H^2(T^3;\ZN)$. The dual basis associated with the broken magnetic symmetry is Poincar\'e dual to $H^1(T^3;\ZN)$. Both spaces have $N^3$ elements, and their line and surface operators obey \eqref{eq:psu_heisenberg}. Stability of a gapped topological phase dresses the fixed point logical operators by quasilocal unitaries and preserves macroscopic local indistinguishability. It does not, however, assign a literal Pauli weight to an operator in the microscopic Yang--Mills Hilbert space. In the magnetic description, the line logical can be implemented with pointlike endpoints, while the conjugate membrane sweeps a flux loop of length $\Theta(L)$.
\end{proof}

The fixed point code in \cref{prop:fixed_code} is independent of this Yang--Mills interpretation. The inheritance statement in \cref{prop:psu_inheritance} uses \cref{ass:ym_package} and the infrared symmetry argument. None of these assumptions enter the Euclidean center sheet construction in \cref{sec:finite_k_model}.

\subsection{Dimensional dependence}

\begin{lemma}[Dimensional obstruction for the adjoint vacuum mechanism]
\label[lemma]{lem:dimensional_obstruction}
In $d$ spacetime dimensions, pure $\PSU(N)$ gauge theory has a magnetic $\ZN$ symmetry of form degree $d-3$. If the confining vacuum breaks this symmetry, the ground state sectors on $M_{d-1}$ are labeled by $H^{d-3}(M_{d-1};\ZN)$. For $d=3$, the order parameter is local and the degeneracy gives a classical $\ZN$ register. For $d\ge4$, the order parameter and its conjugate are extended and obey a nontrivial Heisenberg algebra, so the sectors form a quantum code.
\end{lemma}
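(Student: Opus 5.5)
The plan is to follow the same three-step pattern as the four dimensional case treated in \cref{prop:psu_inheritance}, with the form degrees kept general.

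\textbf{Step 1: degree of the magnetic symmetry.} In $d$ spacetime dimensions the $\SU(N)$ theory has an electric $\ZN$ one form center symmetry. Its charged objects are Wilson lines, and its symmetry operators are supported on codimension two surfaces. Gauging it produces $\PSU(N)$, and the dual symmetry has form degree $d-1-1-1=d-3$. This is the general rule that gauging a $p$ form symmetry yields a $(d-p-2)$ form dual symmetry \cite{GKSW}.

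\textbf{Step 2: charged objects and the ground space labels.} The charged objects of the dual symmetry are $(d-3)$ dimensional 't Hooft operators. Its symmetry operators have dimension $d-(d-3)-1=2$ and are the magnetic flux surfaces $\eta$. On a closed spatial slice $M_{d-1}$, the magnetic flux of a $\PSU(N)$ bundle is the $w_2$ class in $H^2(M_{d-1};\ZN)$. If the magnetic symmetry is spontaneously broken, the vacua are instead labeled by the eigenvalues of the wrapped charged operators on $(d-3)$ cycles. Equivalently, by Poincar\'e duality $H_{d-3}\cong H^2$ on an oriented $(d-1)$ manifold, they are labeled by $H^{d-3}(M_{d-1};\ZN)$, as in the argument for \cref{prop:psu_inheritance}. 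The infrared theory is the untwisted $\ZN$ $(d-3)$ form gauge theory, whose Hilbert space on $M_{d-1}$ has dimension $\abs{H^{d-3}(M_{d-1};\ZN)}$.

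\textbf{Step 3: case split on $d$.} For $d=3$ the form degree is zero, so the symmetry is an ordinary $\ZN$ global symmetry. Its order parameter is a local 't Hooft vertex operator, and $H^0=\ZN$ on a connected slice. In this case the $N$ sectors are distinguished by a local operator. Cluster decomposition makes them superselection sectors: the conjugate symmetry operator $\eta(M_2)$ mixes them, but it cannot act coherently within a local algebra. A local operator can distinguish the vacua, so the Knill--Laflamme condition fails for local errors. The degeneracy therefore stores only a classical register, like the repetition code or Ising ferromagnet.

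For $d\ge4$ the wrapped 't Hooft operators $T(\gamma)$ have dimension $d-3\ge1$ and the flux surfaces $\eta(\Sigma)$ have dimension $2$. Within $M_{d-1}$ these dimensions sum to $d-1$. The intersection pairing then gives $\eta(\Sigma)T(\gamma)=\omega^{\Sigma\cdot\gamma}T(\gamma)\eta(\Sigma)$, generalizing \eqref{eq:psu_heisenberg}. By the perfect pairing of \cref{lem:pontryagin} applied to $H^{d-3}$ and $H_{d-3}\cong H^2$, this is a nondegenerate Heisenberg algebra on the $\abs{H^{d-3}}$ dimensional space. Both generators are extended with size growing with $L$. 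Local indistinguishability then follows as in \cref{prop:fixed_code} and \cref{prop:psu_inheritance}, so the sectors form a quantum code.

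\textbf{Main obstacle.} The hardest part is the $d=3$ classical claim. One must argue that a local order parameter precludes a quantum code even though a global conjugate operator exists. I would make this precise by showing that the local order parameter violates the Knill--Laflamme condition, and therefore bounds the distance by $O(1)$ for one conjugate species. I would state this at the same level of rigor as the infrared assumption in \cref{prop:psu_inheritance}.
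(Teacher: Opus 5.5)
Your proposal is correct and follows essentially the same route as the paper. The magnetic defects have codimension three, which gives the $(d-3)$ form degree. In $d=3$ the broken zero form symmetry has locally distinguishable vacua and no protected conjugate basis. For $d\ge4$ the intersection pairing between the extended 't Hooft operators and the flux surfaces gives the noncommuting logical algebra.

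Your additions are the $(d-p-2)$ gauging rule, the Knill--Laflamme failure for $d=3$, and nondegeneracy via \cref{lem:pontryagin}. These are elaborations of the paper's terse argument. The one small slip is in how the vacuum labels are justified. Labeling the vacua by eigenvalues of $T(\gamma)$ gives $\operatorname{Hom}(H_{d-3},\ZN)\cong H^{d-3}$ directly by that pairing, not by Poincar\'e duality. Poincar\'e duality is needed only to match these labels with the $H^2$ flux labels.
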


\begin{proof}
Magnetic defects have codimension three, so their worldvolume has dimension $d-3$. When a zero form symmetry breaks, its vacua are locally distinguishable and there is no protected conjugate basis. A broken higher form symmetry instead has extended charged operators and symmetry operators. Their intersection pairing gives the noncommuting logical algebra.
\end{proof}

This observation concerns the magnetic symmetry of pure adjoint Yang--Mills theory. It does not rule out other lower dimensional topological codes.

\subsection{\texorpdfstring{Uniform magnetic phases and the sign for even $N$}{Uniform magnetic phases and the sign for even N}}

The strong coupling center Hamiltonian can generate a magnetic term with a uniform phase,
\[
 -J\sum_\alpha\left(\omega^sM_\alpha+\omega^{-s}M_\alpha^\dagger\right).
\]
A depth one product of $X$ operators changes this phase by a coboundary. We prove the exact statement, including the finite volume obstruction, in \cref{app:uniform_phase}. In the two form center description, the phase can be removed when $N\mid sL^3$. A real strong coupling coefficient has sign $\pm1$. On these volumes, its sign is therefore a gauge choice when $N$ is even. This does not show that the coefficient is nonzero, and the remaining terms must still be controlled. For odd $N$, including $N=3$, the sign cannot be removed by this argument.

\section{Strong coupling reduction of the spatial Hamiltonian}
\label{sec:strong_psu}

\subsection{Local strong coupling sector}

At small Euclidean Wilson coupling, the character expansion is a convergent gas of closed plaquette polymers. Standard strong coupling methods then give analyticity, a Wilson loop area law, reflection positivity, and exponential clustering for local gauge invariant operators \cite{OsterwalderSeiler,Munster}. A flat center two form background changes the polymer activities only by phases. The absolute convergence estimates are therefore uniform over the topological flux sectors.

\begin{proposition}[Strong coupling local core]
\label[proposition]{prop:strong_local_core}
For every $N$, there is a $\beta_0(N)>0$ such that the following statements hold for $0<\beta<\beta_0(N)$.
\begin{enumerate}
\item the free energy density is analytic and the cluster expansion is uniform in volume and in flat center flux sector;
\item connected local gauge invariant correlators decay exponentially;
\item fundamental Wilson loops obey an area law;
\item reflection positivity gives a positive transfer matrix, and the local operator channel has a positive glueball scale mass.
\end{enumerate}
These conclusions concern the center neutral local sector. They do not by themselves establish a topological gap after the center is gauged.
\end{proposition}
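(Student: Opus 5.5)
The plan is to run the standard Osterwalder--Seiler strong coupling analysis for the Wilson action and then check that each estimate survives the insertion of a flat center background. I start from the character expansion of the plaquette weight,
\[
 \e^{\frac{\beta}{N}\operatorname{Re}\tr U_p}
 =c_0(\beta)\Bigl[1+\sum_{r\ne0}d_r\,a_r(\beta)\,\chi_r(U_p)\Bigr],
 \qquad
 \abs{a_r(\beta)}\le C\beta^{\abs r}.
\]
Expanding the product over plaquettes and integrating the links with Haar measure gives a representation of the partition function. In that representation only closed plaquette surfaces (polymers) survive, because of orthogonality of characters.

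Each polymer $P$ has an activity bounded by $(C'\beta)^{\abs P}$. A flat center two form background multiplies $\chi_r(U_p)$ by $\omega^{n(r)B_p}$, where $n(r)$ is the $N$-ality of $r$. The activities are therefore changed only by phases, and their absolute values are unchanged.

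The Kotecký--Preiss criterion is applied to these absolute values, using the entropy bound $K^{\abs P}$ on connected polymers through a given plaquette. It holds once $C'\beta K\le\epsilon_0$, and this fixes $\beta_0(N)$. Item (1) then follows. The logarithm of the partition function is an absolutely convergent sum of truncated cluster functionals, and the bounds depend only on $\abs{\text{activity}}$. Hence the bounds are uniform in the volume and in the flat flux sector, and analyticity in $\beta$ follows from uniform convergence.

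For item (2), I insert local gauge invariant observables as modified activities on their supports. The connected correlator is then a sum over clusters linking the two supports. The decay rate is of order $\log(1/\beta)$ per plaquette of separation, which gives exponential clustering.

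For item (3), the Wilson loop in a fundamental representation forces a surface of nontrivial-rep plaquettes spanning the loop. The leading term is $a_f(\beta)^{\text{Area}}$, and the cluster expansion controls the corrections. This yields $\sigma\ge-\log(C\beta)-O(1)>0$.

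For item (4), I use link reflection positivity of the Wilson action, which is the Osterwalder--Seiler result. It gives a positive self-adjoint transfer matrix $T$ with $\norm{T}\le1$ on the gauge invariant Hilbert space. The exponential clustering from (2), applied to time-separated plaquette operators, bounds the spectral measure of $T$ on the vacuum-orthogonal local sector. This gives $\norm{T|_{\Omega^\perp,\rm loc}}\le \e^{-m}$ with $m\ge c\log(1/\beta)$.

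The main obstacle I anticipate is uniformity in the flux sector together with the positivity statement. For twisted backgrounds the polymer weights are complex, so reflection positivity of the twisted measure is not automatic. I would therefore restrict the transfer matrix claim to the untwisted, or reflection-invariant, sectors. The twisted sectors would be handled only through the absolute cluster bounds. This is consistent with the closing caveat that the conclusions concern the center neutral local sector.

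A second delicate point is the entropy constant for polymers that wrap the torus. In a fixed flux sector these polymers can acquire phases $\omega^{\pm1}$. Since Kotecký--Preiss uses only $\abs{\text{activity}}$, I would verify that wrapping polymers are counted with the same entropy factor and introduce no volume-dependent combinatorics.
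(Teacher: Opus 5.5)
Your proposal is correct and follows essentially the same route as the paper's sketch. Both arguments expand in characters and integrate out the links to leave closed polymers whose activities are bounded by powers of $\beta$. Both observe that a flat center background only multiplies these activities by unit modulus phases, so the Kotecky--Preiss bounds hold uniformly in volume and flux sector. Both then use reflection positivity and exponential clustering to get a positive mass in the local operator channel. Your extra restriction of the positive transfer matrix statement to untwisted or reflection invariant sectors is a sensible tightening, and it agrees with the proposition's caveat that the conclusions concern the center neutral local sector.
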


\begin{proof}[Sketch]
Expand each plaquette Boltzmann factor in irreducible characters. Link integration leaves closed spin foam polymers, whose activities are bounded by a constant times $\beta$ raised to their area. The Kotecky--Preiss criterion is satisfied for sufficiently small $\beta$. Since a flat center background contributes only center phases, it does not change the absolute activity bounds. Reflection positivity gives a positive transfer matrix, and exponential clustering gives a positive mass in every local operator channel with nonzero overlap.
\end{proof}

In the Hamiltonian formulation, finite volume Kato perturbation theory about the electric vacuum gives a unique ground state and a gap of order $g^2/a$. The lightest gauge invariant excitation is the shorter of an elementary four link electric loop and a wrapped fundamental line of length $L$. The wrapped line is lighter when $L\le3$. The fundamental string tension is of order $g^2/a^2$. To make these statements uniform in volume for the constrained infinite dimensional nonabelian Hamiltonian, one would still need a Hamiltonian cluster expansion.

\subsection{Effective center Hamiltonian}

After the center is gauged, the electric strong coupling limit has $N^3$ flux sector vacua on $T^3$. Let $P$ project onto this band, and let $V$ denote the magnetic plaquette perturbation. Schrieffer--Wolff perturbation theory gives
\begin{equation}
 H_{\rm eff}=PVP+PV(E_0-H_E)^{-1}VP+\cdots.
\label{eq:SW_effective}
\end{equation}
The first term vanishes because one plaquette insertion creates electric flux. A local term that acts within the flux band must be invariant under the center one form gauge symmetry. In the two form description, the smallest such product is the six plaquette boundary of an elementary cube. The leading local term therefore has the form
\begin{equation}
 -J_B\sum_c\left(B_c+B_c^\dagger\right),
 \qquad
 B_c=\prod_{p\subset\partial c}b_p,
\label{eq:center_cube_term}
\end{equation}
and cannot appear before sixth order on a periodic lattice with $L\ge3$. When $L=2$, a wrapped coordinate plane contains only four distinct plaquettes and produces an earlier finite size splitting. For this reason, the smallest clean numerical test has $L\ge3$.

The center factor for the elementary cube process is the positive group integral
\begin{equation}
 \int\prod_{\ell\subset c}\dd U_\ell
 \prod_{p\subset\partial c}\chi_f(U_p)=\frac1{N^4}.
\label{eq:cube_integral}
\end{equation}
Each of the twelve links carries one fundamental and one antifundamental matrix element. Haar integration gives a factor $1/N$ for each edge, and the remaining contraction has eight index loops.

\subsection{Uniformity conditions}

The strong coupling construction would establish the spatial code at fixed lattice spacing once the following two estimates are proved.

\begin{description}[leftmargin=3.7em,style=nextline]
\item[(R1)] The Schrieffer--Wolff series converges uniformly in $L$, the cube coefficient $J_B$ is nonzero, and all remaining terms stay inside the topological stability radius. For even $N$ and divisible volumes, \cref{app:uniform_phase} removes the sign of $J_B$ as an independent issue.
\item[(R2)] The effective center Hamiltonian lies in the untwisted deconfined $\ZN$ phase rather than a confined or twisted phase, equivalently its genuine 't Hooft line has a perimeter law and its lightest topological excitation has a positive gap of order $\abs{J_B}$.
\end{description}

The main difficulties are the unbounded electric Casimir, the Gauss law constraint, and the nonabelian recoupling coefficients for virtual closed magnetic processes. The finite $\kappa$ Euclidean model in the next part avoids these issues. It uses the full Yang--Mills path integral as a hidden variable likelihood and does not identify the microscopic vacuum with the code.

\part{Finite curvature Yang--Mills decoding}

\section{\texorpdfstring{Finite curvature $\SU(N)$ center sheet model}{Finite curvature SU(N) center sheet model}}
\label{sec:finite_k_model}

The spatial construction in the previous part asks whether a microscopic $\PSU(N)$ Hamiltonian flows to a topological code. We now take a different approach. We define the code algebraically and use the Euclidean $\SU(N)$ path integral as a positive hidden variable model for correlated Pauli errors. The resulting identity is exact at finite volume for every Wilson coupling and every finite curvature fugacity. No infrared reduction to a finite gauge theory is required.

\subsection{Two form code}

Let $\Lambda$ be a finite oriented four dimensional cell complex. We will usually take it to be a periodic hypercubic lattice. Consider the cochain complex
\begin{equation}
 C^1(\Lambda;\ZN)\xrightarrow{\dd_1}
 C^2(\Lambda;\ZN)\xrightarrow{\dd_2}
 C^3(\Lambda;\ZN),
 \qquad \dd_2\dd_1=0.
\label{eq:euclidean_cochain_complex}
\end{equation}
We place a qudit with $N$ levels on every plaquette. A $Z$ type sheet error is labeled by
\[
 B\in C^2(\Lambda;\ZN),
 \qquad Z(B)=\prod_p Z_p^{B_p}.
\]
Its syndrome and stabilizer equivalence are
\begin{equation}
 J=\dd B\in C^3(\Lambda;\ZN),
 \qquad
 B\sim B+\dd\lambda,
 \quad \lambda\in C^1(\Lambda;\ZN).
\label{eq:euclidean_code_data}
\end{equation}
For a fixed syndrome $J$, the logical completions form a torsor for
\begin{equation}
 \cL_\Lambda=H^2(\Lambda;\ZN)=\ker\dd_2/\im\dd_1.
\label{eq:euclidean_logical_group}
\end{equation}
On $T^4$, $\cL_\Lambda\cong(\ZN)^6$. The same construction works on a slab by using relative cohomology. This allows physical boundaries and prescribed defect worldlines.

This Euclidean code defines an inference problem for center vortex sheets. The sheet boundary is the measured syndrome. Under Poincar\'e duality, $\star J$ is a conserved one dimensional current on the dual lattice. The purpose of the construction is to formulate this inference problem, not to propose a four dimensional hardware architecture.

\subsection{Center twisted Wilson action}

Place an $\SU(N)$ matrix $U_\ell$ on each oriented link, with $U_{-\ell}=U_\ell^{-1}$, and let $U_p$ denote the oriented plaquette holonomy. We write
\[
 \omega=\exp(2\pi\ii/N).
\]
For any $B\in C^2(\Lambda;\ZN)$, whether flat or not, define
\begin{equation}
 \cZ_\beta[B]
 =\int\prod_\ell\dd U_\ell\,
 \exp\left[
  \frac{\beta}{N}\sum_p
  \operatorname{Re}\tr\!\left(\omega^{-B_p}U_p\right)
 \right].
\label{eq:ym_partition_B_merged}
\end{equation}
When $B$ is flat, this is the Wilson partition function in a background two form gauge field for the electric $\ZN$ one form symmetry. Equivalently, it is a fixed 't Hooft flux sector \cite{tHooftFlux,GKSW,AbeEtAl}. When $B$ is not flat, the same positive local integral remains well defined and inserts a center sheet whose boundary is $\star\dd B$.

\begin{lemma}[One form orbit invariance]
\label[lemma]{lem:one_form_orbit}
For every $\lambda\in C^1(\Lambda;\ZN)$,
\begin{equation}
 \cZ_\beta[B+\dd\lambda]=\cZ_\beta[B].
\label{eq:one_form_orbit_invariance}
\end{equation}
\end{lemma}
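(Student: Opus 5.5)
The plan is to absorb the shift $B\mapsto B+\dd\lambda$ into a change of variables in the link integral. Given $\lambda\in C^1(\Lambda;\ZN)$, define new link variables
\[
 U'_\ell=\omega^{-\lambda_\ell}U_\ell ,
\]
with the convention that $\lambda_{-\ell}=-\lambda_\ell$. This is consistent with $U_{-\ell}=U_\ell^{-1}$. Because $\omega^{k}\mathbf 1$ lies in the center of $\SU(N)$ (indeed $\det(\omega^k\mathbf 1)=\omega^{kN}=1$), the map $U_\ell\mapsto U'_\ell$ is left multiplication by a fixed group element on each link. Haar measure is invariant under it, so $\prod_\ell\dd U_\ell=\prod_\ell\dd U'_\ell$.

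Next we compute the plaquette holonomy. Write an oriented plaquette $p$ with boundary links $\ell_1,\dots,\ell_4$, taken with their induced orientations. Since the central phases commute past every matrix, we get
\[
 U'_p=\omega^{-\sum_{i}\lambda_{\ell_i}}U_p=\omega^{-(\dd\lambda)_p}U_p ,
\]
where $(\dd\lambda)_p=\lambda(\partial p)$ is exactly the coboundary in \eqref{eq:euclidean_cochain_complex}. Consequently $\omega^{-(B+\dd\lambda)_p}U_p=\omega^{-B_p}U'_p$, and this holds pointwise for every plaquette. Substituting into \eqref{eq:ym_partition_B_merged} shows that the integrand for $\cZ_\beta[B+\dd\lambda]$, written in the variables $U$, equals the integrand for $\cZ_\beta[B]$ written in the variables $U'$. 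Since the measure is unchanged, the two integrals coincide. No flatness of $B$ is used anywhere in this argument.

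The only step requiring care is the sign and orientation bookkeeping. We must check that the phase collected around $\partial p$ is $\lambda$ evaluated on the oriented boundary chain, and that reversed links contribute $-\lambda_\ell$ through $U_{-\ell}^{\prime}=U_\ell^{\prime\,-1}=\omega^{\lambda_\ell}U_\ell^{-1}$. This fixes the sign convention for $\dd_1$, and with it the statement holds exactly. Everything else is routine. We also note that the real part in the action causes no trouble: the identity already holds at the level of the complex trace, for each plaquette.
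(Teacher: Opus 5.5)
Your proposal is correct and is essentially the paper's proof: both absorb $\dd\lambda$ into a central Haar change of variables on each link, so that $\omega^{-B_p}U_p$ is unchanged plaquette by plaquette. The only difference is direction. The paper substitutes $U_\ell\mapsto\omega^{\lambda_\ell}U_\ell$, while you use the inverse substitution, and your added checks on orientation and on flatness not being needed are accurate.
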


\begin{proof}
Use the normalized Haar change of variables
\[
 U_\ell\longmapsto\omega^{\lambda_\ell}U_\ell.
\]
Then $U_p\mapsto\omega^{(\dd\lambda)_p}U_p$. The combination $\omega^{-B_p}U_p$ is therefore unchanged when $B\mapsto B+\dd\lambda$.
\end{proof}

Thus stabilizer equivalence in the code is the same as one form gauge equivalence in the Yang--Mills integral.

\subsection{Curvature fugacity and local parent model}

Choose a strictly positive local function $r_\kappa:\ZN\to\RR_{>0}$, and define
\begin{equation}
 \nu_\kappa(J)=\prod_{c\in\Lambda_3}r_\kappa(J_c).
\label{eq:curvature_fugacity}
\end{equation}
We will often use the center symmetric choice
\begin{equation}
 r_\kappa(j)=\exp\left[\kappa\cos\left(\frac{2\pi j}{N}\right)\right].
\label{eq:cosine_fugacity}
\end{equation}
Large positive $\kappa$ suppresses sheet boundaries. At any finite $\kappa$, every realizable syndrome still has nonzero probability.

We define the correlated Pauli distribution and the corresponding channel by
\begin{align}
 P_{\beta,\kappa}(B)
 &=\frac{1}{\cN_{\beta,\kappa}}
 \nu_\kappa(\dd B)\cZ_\beta[B],
\label{eq:center_sheet_probability}\\
 \cE_{\beta,\kappa}(\rho)
 &=\sum_{B\in C^2}P_{\beta,\kappa}(B)
 Z(B)\rho Z(B)^\dagger.
\label{eq:center_sheet_channel}
\end{align}
The marginal distribution of $B$ can be long range correlated. It is nevertheless obtained from the positive finite range field model
\begin{equation}
 \dd\Prob_{\beta,\kappa}(U,B)
 =\frac{1}{\cN_{\beta,\kappa}}
 \left(\prod_\ell\dd U_\ell\right)
 \exp\left[
 \frac{\beta}{N}\sum_p\operatorname{Re}\tr(\omega^{-B_p}U_p)
 \right]\nu_\kappa(\dd B).
\label{eq:joint_local_parent}
\end{equation}
The nonabelian link field therefore remains part of the local parent model. We have not replaced it by an effective $\ZN$ action.

For $N=2$, writing $\sigma_p=(-1)^{B_p}$ gives
\begin{equation}
 \exp\left[
 \frac{\beta}{2}\sum_p\sigma_p\tr U_p
 +\kappa\sum_c\prod_{p\subset\partial c}\sigma_p
 \right],
\label{eq:su2_villain_merged}
\end{equation}
up to an overall normalization. The cube product is the $\Ztwo$ monopole variable. Related auxiliary plaquette fields and monopole suppression terms appear in Villain formulations of $\SO(3)$ and $\SU(2)$ lattice gauge theory \cite{HallidayPhase,HallidayMonopoles,deForcrandJahn}. In the present construction, the plaquette field is an error label, the cube variable is the syndrome, and the global completion is the logical value.

\begin{figure}[t]
\centering
\begin{tikzpicture}[node distance=1.28cm and 1.55cm,>=Latex,thick]
\tikzset{
 qnode/.style={draw=deepblue,rounded corners,fill=softblue,minimum width=3.18cm,minimum height=0.95cm,align=center},
 hnode/.style={draw=darkgray,rounded corners,fill=softgray,minimum width=3.18cm,minimum height=0.95cm,align=center},
 lab/.style={font=\small,align=center}
}
\node[qnode] (Bq) {$B\in C^2(\Lambda;\ZN)$\\plaquette Pauli sheet};
\node[qnode,below=of Bq] (Jq) {$J=\dd B\in C^3$\\measured syndrome};
\node[qnode,below=of Jq] (Hq) {$[B]\in H^2(\Lambda;\ZN)$\\logical completion};
\node[hnode,right=3.35cm of Bq] (Bh) {center twisted plaquettes\\two form background};
\node[hnode,below=of Bh] (Jh) {dual conserved current $\star J$\\monopole worldlines};
\node[hnode,below=of Jh] (Hh) {'t Hooft flux\\topological sector};
\draw[->] (Bq)--node[left,lab]{$\dd$}(Jq);
\draw[->] (Jq)--node[left,lab]{choose a\\global completion}(Hq);
\draw[->] (Bh)--node[right,lab]{$\dd$}(Jh);
\draw[->] (Jh)--node[right,lab]{choose a\\topological sheet}(Hh);
\draw[<->,deepblue] (Bq)--node[above,lab]{same variable}(Bh);
\draw[<->,deepblue] (Jq)--node[above,lab]{Poincar\'e duality}(Jh);
\draw[<->,deepblue] (Hq)--node[above,lab]{posterior $\propto\cZ_\beta$}(Hh);
\end{tikzpicture}
\caption{The finite $\kappa$ decoding dictionary. The nonabelian links are hidden variables in the correlated Pauli distribution and are not code qudits.}
\label{fig:finite_k_dictionary}
\end{figure}

\subsection{Logical posterior and failure probability}

\begin{theorem}[Finite $\kappa$ Yang--Mills decoding identity]
\label[theorem]{thm:finite_k_posterior}
Let $J\in\im\dd_2$. Choose $B_J$ with $\dd B_J=J$, and choose cocycle representatives $B_h$ for $h\in\cL_\Lambda$. Then the channel in \eqref{eq:center_sheet_channel} obeys
\begin{equation}
 \boxed{
 \Prob_{\beta,\kappa}(h\mid J)=
 \frac{\cZ_\beta[B_J+B_h]}
 {\displaystyle\sum_{k\in\cL_\Lambda}\cZ_\beta[B_J+B_k]}.}
\label{eq:finite_k_posterior}
\end{equation}
Conditioning on the measured current $J$ removes the dependence on $\kappa$.

If $h_*(J)$ maximizes the numerator and
\begin{equation}
 \Delta F_J(g)=
 -\log\frac{\cZ_\beta[B_J+B_{h_*+g}]}
 {\cZ_\beta[B_J+B_{h_*}]},
\label{eq:ym_defect_free_energy}
\end{equation}
then
\begin{equation}
 \boxed{
 p_{\rm fail}^{\ML}(J)=
 \frac{\displaystyle\sum_{g\ne0}\e^{-\Delta F_J(g)}}
 {\displaystyle1+\sum_{g\ne0}\e^{-\Delta F_J(g)}}.}
\label{eq:finite_k_failure}
\end{equation}
\end{theorem}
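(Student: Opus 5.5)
The plan is to reduce the statement to the general orbit-sum posterior \eqref{eq:posterior_general} and then use \cref{lem:one_form_orbit} to collapse each orbit sum to a single twisted partition function.

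First, I would write the probability of a syndrome $J$ together with a logical class $h$. The errors $B$ with $\dd B=J$ form the affine space $B_J+Z^2(\Lambda;\ZN)$. Splitting $Z^2=\bigsqcup_h(B_h+\im\dd_1)$, the sheets in class $h$ are exactly $B_J+B_h+\dd\lambda$ with $\lambda\in C^1(\Lambda;\ZN)$. Each sheet is counted $\abs{\ker\dd_1}$ times, which mirrors the fiber count in \cref{thm:nishimori_general}. The unnormalized sector weight is then
\[
Z_{J,h}=\frac{1}{\cN_{\beta,\kappa}}\sum_{b\in\im\dd_1}\nu_\kappa(\dd(B_J+B_h+b))\,\cZ_\beta[B_J+B_h+b].
\]

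Next comes the key simplification. Every term has $\dd(B_J+B_h+b)=J$, because $B_h$ is a cocycle and $\dd\dd_1=0$. So the fugacity factor equals the constant $\nu_\kappa(J)$. By \cref{lem:one_form_orbit}, the Yang--Mills factor equals $\cZ_\beta[B_J+B_h]$ for every $b$. Hence
\[
Z_{J,h}=\frac{\abs{\im\dd_1}\,\nu_\kappa(J)}{\cN_{\beta,\kappa}}\,\cZ_\beta[B_J+B_h].
\]
The prefactor does not depend on $h$, so it cancels when normalizing over $k\in\cL_\Lambda$. This gives the boxed formula \eqref{eq:finite_k_posterior} and shows that $\kappa$ drops out. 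One remaining point is that the relevant decoding distribution is the Pauli error distribution $P_{\beta,\kappa}$ of \eqref{eq:center_sheet_probability}. Since the channel \eqref{eq:center_sheet_channel} is a mixture of the $Z(B)$ with these weights, the syndrome measurement and the logical-class posterior are governed exactly by this distribution. Different $B$ in the same stabilizer class act identically on the code space, so no coherent interference between them arises. This matches the setting of \eqref{eq:posterior_general}, except that positivity is supplied by $\cZ_\beta>0$ and $r_\kappa>0$ rather than by the form \eqref{eq:local_noise}.

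Finally, I would obtain the failure formula as a direct instance of \cref{thm:exact_failure}, taking $\Delta_J(g)=\Delta F_J(g)$. To do this I divide numerator and denominator by $\cZ_\beta[B_J+B_{h_*}]$ and use the fact that $g\mapsto h_*+g$ permutes $\cL_\Lambda$. The success probability of the ML decoder at $J$ is $\max_h\Prob(h\mid J)=1/(1+\sum_{g\ne0}\e^{-\Delta F_J(g)})$, which gives \eqref{eq:finite_k_failure}.

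I expect the main obstacle to be bookkeeping rather than anything deep. The essential point is that the orbit invariance in \cref{lem:one_form_orbit} holds for non-flat $B$ as well, which the lemma's proof gives directly since the Haar shift argument never uses flatness. The independence of the prefactor from $h$ also needs checking: it requires the counting to use the full stabilizer orbit, with multiplicity $\abs{\ker\dd_1}$ uniform across classes.
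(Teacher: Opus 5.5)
Your proposal is correct and follows essentially the same route as the paper: decompose the syndrome fiber $B_J+\ker\dd_2$ into exact-cochain orbits, use $\dd b=0$ and \cref{lem:one_form_orbit} to make every term in an orbit equal to $\nu_\kappa(J)\cZ_\beta[B_J+B_h]/\cN_{\beta,\kappa}$, cancel the $h$-independent prefactor $\abs{\im\dd_1}\nu_\kappa(J)/\cN_{\beta,\kappa}$, and then read off the failure formula from \cref{thm:exact_failure}. Your extra checks are also correct, and the paper leaves them implicit: the Haar shift argument works for non-flat $B$, and stabilizer-equivalent $Z(B)$ act identically on the code space.
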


\begin{proof}
The solutions of $\dd B=J$ form the affine space $B_J+\ker\dd_2$. Decompose $\ker\dd_2$ into cohomology classes modulo $\im\dd_1$. The unnormalized weight of the class $h$ is
\[
 W_{\beta,\kappa}(J,h)=
 \sum_{b\in\im\dd_1}P_{\beta,\kappa}(B_J+B_h+b).
\]
For exact $b$, we have $\dd b=0$, and \cref{lem:one_form_orbit} gives
\[
 \nu_\kappa(\dd(B_J+B_h+b))=\nu_\kappa(J),
 \qquad
 \cZ_\beta[B_J+B_h+b]=\cZ_\beta[B_J+B_h].
\]
All terms in the orbit sum are equal. Hence
\begin{equation}
 W_{\beta,\kappa}(J,h)=
 \frac{\abs{\im\dd_1}}{\cN_{\beta,\kappa}}
 \nu_\kappa(J)\cZ_\beta[B_J+B_h].
\label{eq:joint_sector_finite_k}
\end{equation}
Normalizing over $h$ proves \eqref{eq:finite_k_posterior}. Equation \eqref{eq:finite_k_failure} then follows from \cref{thm:exact_failure}.
\end{proof}

\begin{remark}[Two roles of $\kappa$]
At fixed syndrome, $\kappa$ does not change the relative topological completions. It changes the averaged failure probability
\[
 p_{\rm fail}^{\ML}(\beta,\kappa)
 =\sum_J\Prob_{\beta,\kappa}(J)p_{\rm fail}^{\ML}(J)
\]
because it changes the distribution of defect currents. The finite $\kappa$ phase diagram is therefore not determined by the thermodynamics of flat backgrounds alone.
\end{remark}

At finite $\kappa$, decoding compares the Yang--Mills weights of global center vortex sheets that end on the observed defect worldlines.

\section{Strong coupling dynamics of the center sheet model}
\label{sec:finite_k_strong_dynamics}

The posterior identity leaves the phase structure dynamical. At sufficiently small $\beta$, the convergent character expansion gives additional dynamical information. Integrating out the nonabelian links produces a positive local penalty for center sheet curvature, beginning at sixth order in the fundamental character coefficient. The same expansion gives exponential decay of local syndrome correlations, while the conditional distribution of global logical completions approaches the uniform distribution. Thus local syndrome correlations and global logical information can have different infrared behavior.

\subsection{Character expansion and effective syndrome action}

Write the normalized character expansion for one Wilson plaquette as
\begin{equation}
 \exp\!\left[\frac{\beta}{N}\operatorname{Re}\tr U\right]
 =c_0(\beta)\left[1+\sum_{\rho\ne\mathbf 1}
 d_\rho a_\rho(\beta)\chi_\rho(U)\right],
 \qquad u(\beta):=a_f(\beta),
\label{eq:focused_character_expansion}
\end{equation}
where $f$ denotes the fundamental representation. If $q_\rho\in\ZN$ is the $N$ ality of $\rho$, then
\begin{equation}
 \chi_\rho(\omega^{-B_p}U_p)
 =\omega^{-q_\rho B_p}\chi_\rho(U_p).
\label{eq:center_phase_character}
\end{equation}
For $\SU(2)$, $u=I_2(\beta)/I_1(\beta)=\beta/4+O(\beta^3)$; for $N\ge3$, $u=\beta/(2N^2)+O(\beta^2)$.

Link integration imposes conservation of $N$ ality. Every connected spin foam polymer $\gamma$ therefore carries a closed $\ZN$ valued plaquette cycle $q_\gamma$. Its dependence on the background is the phase
\begin{equation}
 \omega^{-\langle B,q_\gamma\rangle}.
\label{eq:polymer_background_phase}
\end{equation}
If $q_\gamma=\partial V_\gamma$ is contractible, then
\begin{equation}
 \langle B,q_\gamma\rangle
 =\langle \dd B,V_\gamma\rangle
 =\langle J,V_\gamma\rangle.
\label{eq:stokes_syndrome_polymer}
\end{equation}
Every contractible polymer is therefore a local function of the measured syndrome $J$. Only a homologically nontrivial polymer can distinguish two logical completions at fixed $J$.

Let $A_{\rm sys}(\Lambda)$ be the minimum plaquette area of a nonzero class in $H_2(\Lambda;\ZN)$. On an isotropic four torus of linear size $L$, we have $A_{\rm sys}=L^2$.

\begin{theorem}[Syndrome action and topological remainder at strong coupling]
\label{thm:strong_coupling_syndrome_action}
For every $N$, there are positive constants $u_0,C,\tau,$ and $\mu$ such that the following statements hold when $|u(\beta)|<u_0$ and every periodic extent is at least three.
\begin{equation}
 \log \cZ_\beta[B]
 =C_\beta(\Lambda)+\cF_\beta[J]+\cR_\beta[B],
 \qquad J=\dd B.
\label{eq:local_plus_topological_decomposition}
\end{equation}
Here $\cF_\beta$ is an exponentially local, translation covariant potential of $J$, and for every flat $h\in Z^2(\Lambda;\ZN)$,
\begin{equation}
 \sup_B\abs{\cR_\beta[B+h]-\cR_\beta[B]}
 \le C\abs{\Lambda}\,\e^{-\tau A_{\rm sys}(\Lambda)}.
\label{eq:topological_remainder_bound}
\end{equation}
The leading local term that depends on the background is the elementary cube:
\begin{equation}
 \cF_\beta[J]
 =\lambda_N(\beta)\sum_c
 \cos\!\left(\frac{2\pi J_c}{N}\right)
 +\sum_X\Phi_{\beta,X}(J|_X),
\label{eq:syndrome_effective_action}
\end{equation}
with
\begin{equation}
 \lambda_N(\beta)=
 \begin{cases}
 4u(\beta)^6+O(u^8),&N=2,\\[1mm]
 2N^2u(\beta)^6+O(u^7),&N\ge3,
 \end{cases}
\label{eq:induced_curvature_coupling}
\end{equation}
and, after the displayed cube term is removed,
\begin{equation}
 \sup_c\sum_{X\ni c}\e^{\mu\operatorname{diam}X}
 \norm{\Phi_{\beta,X}}_\infty=O(u^7)
\label{eq:local_potential_norm}
\end{equation}
(with $O(u^8)$ in the $\SU(2)$ case).
\end{theorem}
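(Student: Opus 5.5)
The plan is to run the standard strong coupling polymer expansion uniformly in the background $B$, and then separate polymers by the homology class of their $N$-ality cycle.

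\textbf{Step 1: polymer representation.} Insert the character expansion \eqref{eq:focused_character_expansion} in every plaquette of \eqref{eq:ym_partition_B_merged}. Using \eqref{eq:center_phase_character}, each plaquette carrying representation $\rho$ picks up the phase $\omega^{-q_\rho B_p}$. Perform the Haar integrals over links and group the resulting spin foams into connected polymers $\gamma$. The result is
\[
 \cZ_\beta[B]=c_0^{\abs{\Lambda_2}}\sum_{\{\gamma\}\ \mathrm{compatible}}\prod_\gamma w(\gamma)\,\omega^{-\langle B,q_\gamma\rangle},
\]
with $\abs{w(\gamma)}\le (K|u|)^{\abs\gamma}$. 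This uses the standard bounds $d_\rho a_\rho\le (K|u|)^{\text{(number of boxes)}}$ for small $\beta$. Since the phases have modulus one, the Kotecký--Preiss criterion holds uniformly in $B$ once $|u|<u_0$, exactly as in \cref{prop:strong_local_core}. The cluster expansion then gives
\[
 \log\cZ_\beta[B]=C_\beta(\Lambda)+\sum_{\text{clusters }X}\phi^T(X)\,\omega^{-\langle B,q_X\rangle},
\]
with $\sum_{X\ni p}e^{\mu\abs{X}}\abs{\phi^T(X)}\le Cu$.

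\textbf{Step 2: the split.} Write $q_X=\sum_{\gamma\in X}q_\gamma$, which is a closed $\ZN$ plaquette cycle. There are two cases.
\begin{enumerate}
\item If $q_X$ is trivial in $H_2(\Lambda;\ZN)$, choose a canonical filling $V_X$ supported in a neighbourhood of $\supp X$. Such a filling exists because a cluster of diameter smaller than the periodic extent lies in a contractible box. By \eqref{eq:stokes_syndrome_polymer} the phase is $\omega^{-\langle J,V_X\rangle}$, so these terms define $\cF_\beta[J]$. This functional is exponentially local and translation covariant. Clusters with $q_X=0$ carry no $B$ dependence and go into $C_\beta$.
\item If $q_X$ is nontrivial, its support has at least $A_{\rm sys}$ plaquettes, so $\abs X\ge A_{\rm sys}$. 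Call the sum of these terms $\cR_\beta$. Under $B\mapsto B+h$ with $h$ flat, only these terms can change, since for trivial $q_X$ we have $\langle h,\partial V\rangle=\langle \dd h,V\rangle=0$. Summing the cluster bound over a starting plaquette gives $\abs{\cR_\beta[B+h]-\cR_\beta[B]}\le 2\abs\Lambda C e^{-\mu A_{\rm sys}}$, which is \eqref{eq:topological_remainder_bound}.
\end{enumerate}
A subtle point is that a contractible cycle whose diameter is comparable to $L$ may not admit a local filling. I would therefore assign every cluster of diameter at least $L/2$ to $\cR_\beta$ as well. Such clusters are also exponentially small, and this keeps $\cF_\beta$ genuinely local at a negligible cost to the bound.

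\textbf{Step 3: the leading coefficient.} This is the main obstacle, because it is a concrete enumeration. The smallest closed nonzero $N$-ality surface is the boundary of an elementary 3-cube: six plaquettes, giving order $u^6$, with $V=\pm c$. Its two orientations carry $q=\pm\partial c$.
\begin{itemize}
\item For $N\ge3$, fundamental and antifundamental are distinct. Each orientation contributes $d_f^6u^6$ times the Haar integral \eqref{eq:cube_integral}, which is $N^6\cdot N^{-4}u^6=N^2u^6$. Together they give $2N^2u^6\cos(2\pi J_c/N)$.
\item For $N=2$ the representation is real. I expect the doubled count $4u^6$ after accounting for $d_f=2$ and the integral $1/16$, with the two orientations coinciding. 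I would verify this constant directly, since $2N^2=8$ is not the right answer here.
\end{itemize}
The error terms also need care.
\begin{itemize}
\item For $N\ge3$, the next $B$-dependent contributions appear at order $u^7$. They come from cube surfaces with one plaquette in a higher $N$-ality-one representation, or in an $N$-ality-two representation (as in a $\mathbf 6$ or $\bar{\mathbf 3}$ combination). They are also present because $a_\rho=O(\beta^2)$ while $u=O(\beta)$, so mixed expansions arise.
\item For $\SU(2)$, $N$-ality parity forces an even number of additional plaquettes, so the corrections are $O(u^8)$.
\item Disconnected products of cubes are removed by the truncation in $\phi^T$. Their overlaps also enter only at order $u^{12}$.
\end{itemize}
Taking these together gives \eqref{eq:induced_curvature_coupling} and the norm bound \eqref{eq:local_potential_norm}. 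Extents of at least three ensure that no wrapped surface with fewer than seven plaquettes exists, which is the same caveat as for $L=2$ in \cref{sec:strong_psu}.
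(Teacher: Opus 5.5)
Your proposal is correct and follows essentially the same route as the paper: a character expansion in which the background enters only through unimodular phases $\omega^{-\langle B,q_\gamma\rangle}$, so the cluster expansion converges uniformly in $B$. You then use Stokes' identity to make homologically trivial clusters local functionals of $J$, the systolic area to bound the wrapping clusters (the only ones that feel a flat shift), and the elementary cube amplitude $N^{-4}$, which gives $2N^2u^6$ for $N\ge3$ and $(2u)^6/16=4u^6$ for the single self-conjugate $\SU(2)$ polymer. One small correction: the order-$u^7$ configurations you name for $N\ge3$ (a bare cube with one face in a different $N$-ality-one or $N$-ality-two representation) actually vanish by Schur orthogonality on the edges of that face, but the $O(u^7)$ statements still hold because the only six-plaquette polymers are cube boundaries with one representation on every face, so every other background-dependent cluster has at least seven powers of $u$.
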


\begin{proof}
Insert \eqref{eq:focused_character_expansion} on every plaquette. Haar integration at a link projects the incident representations onto their invariant subspace, so the $N$ ality labels form closed plaquette cycles. The strong coupling cluster expansion converges absolutely for $|u|<u_0$ \cite{OsterwalderSeiler,Munster,LangelageEtAl}. The background phases in \eqref{eq:polymer_background_phase} have unit modulus and do not change the convergence constants. By \eqref{eq:stokes_syndrome_polymer}, every contractible cluster contributes to an exponentially local potential of $J$. A cluster whose phase changes under $B\mapsto B+h$ must contain a nontrivial $N$ ality cycle and therefore has at least $A_{\rm sys}$ plaquettes. The rooted cluster bound then gives \eqref{eq:topological_remainder_bound}.

The smallest nonzero contractible plaquette cycle on a hypercubic lattice is the oriented boundary of an elementary cube. Put the fundamental representation on its six faces. Each of the twelve links then carries one fundamental and one antifundamental matrix element, so
\begin{equation}
 \int\prod_{\ell\subset c}\dd U_\ell
 \prod_{p\subset\partial c}\chi_f(U_p)
 =N^{-4}.
\label{eq:strong_cube_integral}
\end{equation}
The six character coefficients give $(Nu)^6$. For $N\ge3$, the two orientations have complex conjugate phases and combine to give $2N^2u^6\cos(2\pi J_c/N)$. For $N=2$, the fundamental representation is self conjugate, and the single unoriented surface gives $4u^6(-1)^{J_c}$. Every other connected polymer with background dependence has higher activity. This proves \eqref{eq:induced_curvature_coupling} through \eqref{eq:local_potential_norm}.
\end{proof}

Combining \eqref{eq:local_plus_topological_decomposition} with the bare curvature fugacity, we find the marginal sheet measure
\begin{equation}
 \Prob_{\beta,\kappa}(B)
 \propto
 \exp\!\left[
 (\kappa+\lambda_N(\beta))\sum_c
 \cos\!\left(\frac{2\pi J_c}{N}\right)
 +\sum_X\Phi_{\beta,X}(J|_X)+\cR_\beta[B]
 \right].
\label{eq:strong_coupling_marginal_B}
\end{equation}
The coefficient in \eqref{eq:induced_curvature_coupling} is positive. Thus strong coupling Yang--Mills fluctuations suppress center monopole syndrome defects.

\subsection{\texorpdfstring{The $\beta=0$ phase transition}{The beta=0 phase transition}}

At $\beta=0$, $\cZ_0[B]=1$, and the $\SU(N)$ links decouple. For $N=2$, the remaining model is the four dimensional two form Ising gauge theory
\begin{equation}
 Z_{\text{2 form}}(\kappa)
 =\sum_{b_p=\pm1}
 \exp\!\left[\kappa\sum_c\prod_{p\subset\partial c}b_p\right].
\label{eq:two_form_ising_gauge}
\end{equation}
Wegner duality maps it to the nearest neighbor four dimensional Ising model with coupling $K$ satisfying \cite{Wegner}
\begin{equation}
 \e^{-2K}=\tanh\kappa.
\label{eq:twoform_ising_duality}
\end{equation}
Using the large volume estimate for $K_c$ from \cite{LundowMarkstrom2023} gives
\begin{equation}
 \kappa_c(0)
 =-\frac12\log\tanh K_c
 =0.953297052(33).
\label{eq:kappa_critical_anchor}
\end{equation}
The duality relation gives an exact anchor for the phase diagram of the finite $\kappa$ parent theory, with the quoted number inherited from the numerical Ising critical coupling. This transition is not a decoding threshold. Since $\cZ_0[B]$ is independent of $B$, the logical posterior is uniform on both sides of \eqref{eq:kappa_critical_anchor}.

Turning on $\beta$ shifts the leading local coupling by $\kappa\mapsto\kappa+4u^6$. If the critical manifold through $(0,\kappa_c(0))$ is differentiable under this symmetry preserving perturbation, its strong coupling tangent is
\begin{equation}
 \kappa_c(\beta)
 =\kappa_c(0)-4u(\beta)^6+O(u^8)
 =0.953297052(33)-\frac{\beta^6}{1024}+O(\beta^8).
\label{eq:formal_critical_tangent}
\end{equation}
Equation \eqref{eq:formal_critical_tangent} is a local strong coupling result. Determining the global critical curve and its universality class requires a continuation beyond the series. Standard expansions about both phases show that they persist in open regions at sufficiently small $\beta$, so a phase boundary remains between them.

\subsection{Logical sector mixing}

\begin{theorem}[Strong coupling logical mixing]
\label{thm:strong_coupling_logical_mixing}
Let $K_L=\abs{H^2(\Lambda_L;\ZN)}$ and set
\begin{equation}
 \varepsilon_L=C\abs{\Lambda_L}\e^{-\tau A_{\rm sys}(\Lambda_L)}.
\label{eq:strong_mixing_epsilon}
\end{equation}
In the domain of \cref{thm:strong_coupling_syndrome_action}, for every finite $\kappa$, every realizable syndrome $J$, and every logical completion $h$,
\begin{equation}
 \frac{\e^{-2\varepsilon_L}}{K_L}
 \le \Prob(h\mid J)
 \le \frac{\e^{2\varepsilon_L}}{K_L}.
\label{eq:posterior_uniform_bound}
\end{equation}
Hence, whenever $\varepsilon_L\to0$,
\begin{equation}
 p_{\rm succ}^{\ML}(J)=\frac1{K_L}+O(\varepsilon_L)
\label{eq:strong_coupling_success}
\end{equation}
uniformly in $J$ and $\kappa$.  On a fixed topology four torus this tends to $1/N^6$.
\end{theorem}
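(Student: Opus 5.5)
The plan is to combine the posterior identity of \cref{thm:finite_k_posterior} with the decomposition \eqref{eq:local_plus_topological_decomposition} of \cref{thm:strong_coupling_syndrome_action}. Fix a realizable syndrome $J$ and a reference $B_J$ with $\dd B_J=J$. By \eqref{eq:finite_k_posterior},
\[
 \Prob(h\mid J)=\frac{\cZ_\beta[B_J+B_h]}{\sum_{k\in\cL_\Lambda}\cZ_\beta[B_J+B_k]}.
\]
This expression is already independent of $\kappa$, which gives uniformity in $\kappa$ for free. Every completion $B_J+B_k$ has the same boundary $J$. Hence, in $\log\cZ_\beta$, the constant $C_\beta(\Lambda)$ and the local syndrome potential $\cF_\beta[J]$ are common to all sectors and cancel between numerator and denominator. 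What remains is
\[
 \Prob(h\mid J)=\frac{\e^{\cR_\beta[B_J+B_h]}}{\sum_k \e^{\cR_\beta[B_J+B_k]}}.
\]

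Next, apply \eqref{eq:topological_remainder_bound} with $B=B_J$ and the flat cocycle $B_k$. This gives $\abs{\cR_\beta[B_J+B_k]-\cR_\beta[B_J]}\le\varepsilon_L$ for every $k$, including $k=0$. Writing $r_k=\cR_\beta[B_J+B_k]-\cR_\beta[B_J]\in[-\varepsilon_L,\varepsilon_L]$, we get
\[
 \Prob(h\mid J)=\frac{\e^{r_h}}{\sum_k\e^{r_k}}.
\]
The numerator lies in $[\e^{-\varepsilon_L},\e^{\varepsilon_L}]$ and the denominator lies in $[K_L\e^{-\varepsilon_L},K_L\e^{\varepsilon_L}]$, which yields \eqref{eq:posterior_uniform_bound}. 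The proof needs the sector count $K_L=\abs{\cL_\Lambda}$, and it needs the statement that the cocycle representatives $B_k$ exhaust the logical completions; both come from the torsor structure in \eqref{eq:euclidean_logical_group}. The choice of $B_J$ and of the representatives only relabels sectors, and the bound is sup over $B$, so it does not depend on these choices.

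For \eqref{eq:strong_coupling_success}, use that the maximum-likelihood success probability at fixed $J$ is $\max_h\Prob(h\mid J)$. By the bound just proved this lies between $K_L^{-1}\e^{-2\varepsilon_L}$ and $K_L^{-1}\e^{2\varepsilon_L}$. When $\varepsilon_L\to0$, we have $\e^{\pm2\varepsilon_L}=1+O(\varepsilon_L)$, so $p^{\ML}_{\rm succ}(J)=K_L^{-1}(1+O(\varepsilon_L))$. This is within $O(\varepsilon_L)$ of $1/K_L$ since $K_L\ge1$, and the constants are uniform in $J$ and $\kappa$. On a four torus, $K_L=\abs{H^2(T^4;\ZN)}=N^6$ by the universal coefficient theorem, and $A_{\rm sys}=L^2$. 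Then $\varepsilon_L=C L^4\e^{-\tau L^2}\to0$, so the success probability tends to $1/N^6$.

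The only substantive input is \eqref{eq:topological_remainder_bound}, which is assumed from \cref{thm:strong_coupling_syndrome_action}. The one point needing care is confirming that the bound applies to nonflat $B=B_J$. It does, because the supremum there is over all $B$ and $h$ is required to be flat. It is also worth checking that $\cF_\beta$ depends on $B$ only through $J$, so that the cancellation is exact. Beyond these points, the argument is elementary bookkeeping.
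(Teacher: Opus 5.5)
Your proposal is correct and follows the paper's proof. At fixed $J$ you cancel the common $C_\beta+\cF_\beta[J]$, use the remainder bound to control each sector weight, normalize over the $K_L$ sectors, and take the largest posterior to get the success probability. The only cosmetic difference is that you compare every sector to the single reference $B_J$, whereas the paper compares sectors pairwise; both give the same $\e^{\pm2\varepsilon_L}$ factors.
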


\begin{proof}
At fixed $J$, the term $C_\beta+\cF_\beta[J]$ in \eqref{eq:local_plus_topological_decomposition} is the same for every completion. Equation \eqref{eq:topological_remainder_bound} therefore bounds the difference between the logarithms of any two sector weights by $2\varepsilon_L$. Normalizing $K_L$ positive weights with pairwise ratios between $\e^{-2\varepsilon_L}$ and $\e^{2\varepsilon_L}$ gives \eqref{eq:posterior_uniform_bound}. Taking the largest weight gives \eqref{eq:strong_coupling_success}.
\end{proof}

This theorem adds dynamical information to the exact posterior identity. In the strong coupling confining region, contractible spin foams determine the local syndrome distribution. Only a wrapping surface can detect the logical center flux, and its activity is exponentially small in the systolic area. The syndrome sector may undergo the transition anchored at \eqref{eq:kappa_critical_anchor}, while the conditional logical sector remains asymptotically uniform throughout the convergent region at small $\beta$.

\subsection{Syndrome correlations}

Let $J_{k,r}$ be the conserved pair current of \cref{sec:syndrome_monopoles}, with two parallel worldlines separated by $r$. Let $\cC_k(r)$ be the core renormalized likelihood. The same local expansion controls this disorder correlator.

\begin{theorem}[Strong coupling syndrome decoupling]
\label{thm:strong_coupling_syndrome_decoupling}
For $|u(\beta)|<u_0$, there are constants $A,m_{\rm pol}>0$ such that, after taking the transverse thermodynamic limit in a fixed pure phase,
\begin{equation}
 \abs{\log\cC_k(r)-\log\cC_k(\infty)}
 \le A\e^{-m_{\rm pol}r}.
\label{eq:polymer_syndrome_decoupling}
\end{equation}
At finite periodic volume there is an additional correction bounded by
$C\abs{\Lambda}\e^{-\tau A_{\rm sys}}$.  Transfer matrix positivity then gives
\begin{equation}
 0\le \cC_k(r)-\cC_k(\infty)
 \le A'\e^{-m_{\rm pol}r}
\label{eq:positive_syndrome_decoupling}
\end{equation}
and the vortex source mass obeys
\begin{equation}
 a m_k^{(V)}\ge m_{\rm pol}>0.
\label{eq:polymer_mass_lower_bound}
\end{equation}
These constants are independent of $\kappa$ because the known endpoint fugacity cancels from $\cC_k$.
\end{theorem}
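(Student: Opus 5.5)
The plan is to combine the decomposition of \cref{thm:strong_coupling_syndrome_action} with a cluster expansion written relative to the pair current. Write $\cC_k(r)=\cZ_\beta[B_{k,r}]/\cZ_\beta[0]$, with the known endpoint fugacity $\nu_\kappa(J_{k,r})/\nu_\kappa(0)$ divided out. Here $B_{k,r}$ is a sheet bounded by the two worldlines of $J_{k,r}$. By \eqref{eq:local_plus_topological_decomposition},
\[
\log\cC_k(r)=\cF_\beta[J_{k,r}]-\cF_\beta[0]+\cR_\beta[B_{k,r}]-\cR_\beta[0].
\]
The $\kappa$ independence is then immediate, because $\nu_\kappa$ enters only through the cancelled fugacity. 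The remainder difference is handled as in the proof of \cref{thm:strong_coupling_logical_mixing}. Contractible clusters depend on $B$ only through $J$, by \eqref{eq:stokes_syndrome_polymer}. Any residual dependence on the choice of completion must therefore come from wrapping clusters, which gives the finite volume correction $C\abs{\Lambda}\e^{-\tau A_{\rm sys}}$. After this step we take the transverse thermodynamic limit in a fixed pure phase.

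The main step is the $r$ dependence of $\cF_\beta[J_{k,r}]$. I would write $\cF_\beta[J]-\cF_\beta[0]=\sum_X\bigl(\Phi_X(J|_X)-\Phi_X(0)\bigr)$, where the cube term is included among the $\Phi_X$. Then split the clusters $X$ into three types: those meeting only the first worldline, those meeting only the second, and those meeting both. The first two types give sums that are independent of $r$ by translation covariance. Together these two sums equal $2\times$ the single worldline self energy, and this is $\log\cC_k(\infty)$. A cluster meeting both worldlines has diameter at least $r$. By \eqref{eq:local_potential_norm}, the total contribution of such clusters per unit length of worldline is $O(\e^{-\mu r})$. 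The worldlines have temporal extent $T$, which makes the bound extensive in $T$. So for the pointwise statement \eqref{eq:polymer_syndrome_decoupling} I would either normalize by the time extent, as the transfer matrix reading requires, or interpret $\cC_k$ as the core renormalized correlator per unit time. I expect this normalization to be the main obstacle. I would first try to show that the worldline self energy is exactly cancelled by the core renormalization, so that only the connecting clusters remain. Rooting them at the pair of endpoints where the worldlines are closest would then give a $T$-independent constant $A$. The result is \eqref{eq:polymer_syndrome_decoupling} with any $m_{\rm pol}<\mu$.

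For \eqref{eq:positive_syndrome_decoupling}, I would use reflection positivity of the twisted Wilson measure across a hyperplane bisecting the two worldlines. This writes $\cC_k(r)=\langle\psi|\mathsf T^{r}|\psi\rangle$, with $\mathsf T$ the positive transfer matrix in the direction of separation and $\psi$ the vortex source state. The spectral decomposition then gives $\cC_k(r)-\cC_k(\infty)=\sum_{n}\abs{c_n}^2\e^{-rE_n}\ge0$, where the sum runs over the nonvacuum states in the center flux channel. The upper bound follows from the log bound, using $\abs{\e^x-1}\le 2\abs x$ for small $x$ together with the boundedness of $\cC_k(\infty)$. Finally, the rate of decay of this positive spectral sum is governed by the lowest $E_n$ with $c_n\ne0$, which equals $a m_k^{(V)}$. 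An exponential upper bound with rate $m_{\rm pol}$ forces $a m_k^{(V)}\ge m_{\rm pol}$, which is \eqref{eq:polymer_mass_lower_bound}.
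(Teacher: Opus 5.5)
Your main line is the paper's argument, and it is correct. You expand $\log\cZ_\beta[B]$ in connected clusters and use the Stokes identity so that contractible clusters see only $J$. Clusters attached to a single worldline build $\cC_k(\infty)$. Clusters reaching both worldlines are bounded through their diameter $\ge r$, and wrapping clusters through the systolic term. You then pass to the positive transfer matrix representation and compare decay rates.

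The ``main obstacle'' you flag is not an obstacle, and the remedy you propose would fail. The theorem is stated at fixed time extent $N_t$ (the worldlines wrap the $t$ circle), so $A$ may depend on $N_t$. Summing the rooted cluster bound over the $2N_t$ cubes carrying the current gives $A\propto N_t$, which is what the paper's proof implicitly uses. The mass bound \eqref{eq:polymer_mass_lower_bound} uses only the rate, so no per unit time normalization is needed.

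Your alternative fix is wrong on two counts.
\begin{itemize}
\item The core renormalization divides out only the bare fugacity $\nu_\kappa$. It does not cancel the self energy induced by the $\SU(N)$ links, namely the cube term $\lambda_N$ and the single worldline part of $\sum_X\Phi_X$. That self energy is exactly what makes up $\log\cC_k(\infty)$, as you note earlier yourself.
\item The two worldlines are parallel at constant separation $r$, so there is no closest pair of points at which to root connecting clusters. The leading one plaquette tubes of order $u^{4r}$ can sit at any of the $N_t$ time positions, and the connected part is genuinely of order $N_t u^{4r}$.
\end{itemize}
A $T$ independent $A$ is therefore false as well as unnecessary.

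There are also some smaller corrections.
\begin{itemize}
\item The sums over clusters meeting only one worldline are not exactly $r$ independent. They differ from the isolated self energies by clusters that reach the other worldline's position. Those clusters have diameter $\ge r$ and are absorbed into the same bound.
\item $\cR_\beta[B_{k,r}]-\cR_\beta[0]$ is not a shift by a flat cochain, so the difference bound \eqref{eq:topological_remainder_bound} invoked in the logical mixing proof does not apply as stated. What you need, and what your wrapping cluster sentence actually supplies, is $\abs{\cR_\beta[B]}\le C\abs{\Lambda}\e^{-\tau A_{\rm sys}}$ for every $B$. This holds because $\cR_\beta$ collects only clusters whose $N$ ality cycle is homologically nontrivial.
\item The intermediate states in your spectral sum lie in the vacuum sector, not a center flux channel, since $\cC_k(\infty)=\abs{\bra{0}V_k\ket{0}}^2\ne0$.
\end{itemize}
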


\begin{proof}
Expand $\log\cZ_\beta[B]$ in connected clusters. By \eqref{eq:stokes_syndrome_polymer}, a contractible cluster in the pair background depends only on the endpoint current intersected by its filling. Clusters near a single endpoint contribute to the corresponding one defect free energy and build $\cC_k(\infty)$. A cluster that contributes to the difference at finite separation must either connect the two endpoint neighborhoods or have nontrivial homology. A connecting cluster has diameter at least $r$ and is bounded by the exponentially weighted cluster norm. A homologically nontrivial cluster gives the finite volume term in \eqref{eq:topological_remainder_bound}. This proves \eqref{eq:polymer_syndrome_decoupling}. The positive transfer matrix representation gives \eqref{eq:positive_syndrome_decoupling}, and comparison of decay rates gives \eqref{eq:polymer_mass_lower_bound}.
\end{proof}

For $\SU(2)$, the leading connected cluster is a tube with a one plaquette cross section that joins the two source worldlines. It contains four fundamental plaquettes per lattice step and begins at order $u^{4r}$. The even source has nonzero leading overlap with the strong coupling $A_1^{++}$ state. At sufficiently small $u$, the decoder mass lies on the same isolated transfer matrix branch and has the usual scalar strong coupling expansion \cite{Munster,LangelageEtAl}:
\begin{equation}
 a m_{\rm dec}
 =a m_{A_1^{++}}
 =-4\log u+2u^2-\frac{98}{3}u^4+O(u^6).
\label{eq:decoder_and_glueball_series}
\end{equation}
This equality compares the decoder mass with the glueball effective mass at fixed cutoff in the strong coupling region. Extending it to the continuum scaling region requires estimates that remain uniform along the scaling trajectory.

\section{Thermal center flux sectors}
\label{sec:flat_thermal}

Take $\Lambda=T_L^3\times S^1_{N_t}$ and send $\kappa\to+\infty$. Only the flat sector $J=0$ remains. The K\"unneth decomposition is
\begin{equation}
 H^2(T^3\times S^1;\ZN)
 \cong H^2(T^3;\ZN)
 \oplus\left[H^1(T^3;\ZN)\otimes H^1(S^1;\ZN)\right].
\label{eq:kunneth_temporal_merged}
\end{equation}
The second factor is $(\ZN)^3$ and labels temporal 't Hooft twists. After fixing the purely spatial fluxes, we obtain a subsystem of three logical qudits. Let $\bm{k}\in(\ZN)^3$, and write
\[
 Z_{\bm{k}}(L,T)=\cZ_\beta[B_{\bm{k}}].
\]
Then
\begin{equation}
 \Prob(\bm{k})=\frac{Z_{\bm{k}}}{\sum_{\bm{k}'}Z_{\bm{k}'}}.
\label{eq:thermal_twist_posterior}
\end{equation}

\begin{corollary}[Thermal twist decoding]
\label[corollary]{cor:thermal_twist_decoding}
Suppose the trivial temporal twist is most likely.
\begin{enumerate}[label=(\roman*)]
\item If every nontrivial twist has an interface free energy
\begin{equation}
 -\log\frac{Z_{\bm{k}}}{Z_{\bm{0}}}
 =\tau_{\bm{k}}(T)L^2+o(L^2),
 \qquad \tau_{\bm{k}}(T)>0,
\label{eq:twist_interface_law}
\end{equation}
then
\begin{equation}
 p_{\rm fail}^{\ML}
 \le(N^3-1)\exp[-\tau_{\min}(T)L^2+o(L^2)]\longrightarrow0.
\label{eq:twist_decoding_bound}
\end{equation}
\item If $\max_{\bm{k}}\abs{Z_{\bm{k}}/Z_{\bm{0}}-1}\to0$, then the posterior becomes uniform and
\begin{equation}
 p_{\rm succ}^{\ML}\longrightarrow N^{-3}.
\label{eq:twist_complete_mixing}
\end{equation}
\end{enumerate}
\end{corollary}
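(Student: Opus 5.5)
The plan is to read \cref{cor:thermal_twist_decoding} as the specialization of \cref{thm:exact_failure} and \cref{thm:finite_k_posterior} to the flat sector. When $\kappa\to+\infty$ only $J=0$ survives, so we take $B_J=0$. Within the subsystem of temporal twists, with the spatial fluxes held fixed, the logical group is $(\ZN)^3$. \Cref{thm:finite_k_posterior} then gives exactly the posterior \eqref{eq:thermal_twist_posterior}. Since the syndrome is deterministic, the averaged failure probability equals the conditional one. By hypothesis the maximizing class is $h_*=\bm{0}$, so in the notation of \eqref{eq:ym_defect_free_energy}
\[
 \Delta F(\bm{k})=-\log\frac{Z_{\bm{k}}}{Z_{\bm{0}}},\qquad
 p_{\rm fail}^{\ML}=\frac{R}{1+R},\quad R=\sum_{\bm{k}\ne0}\frac{Z_{\bm{k}}}{Z_{\bm{0}}}.
\]

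For part (i), I would first use $R/(1+R)\le R$. I would then bound each term by \eqref{eq:twist_interface_law}: $Z_{\bm{k}}/Z_{\bm{0}}=\exp[-\tau_{\bm{k}}L^2+o(L^2)]\le\exp[-\tau_{\min}L^2+o(L^2)]$. Summing over the $N^3-1$ nontrivial twists reproduces \eqref{eq:defect_entropy} with $\abs H=N^3$ and gives \eqref{eq:twist_decoding_bound}. There are finitely many twists, so the $o(L^2)$ errors can be made uniform by taking the maximum over $\bm{k}$, and $\tau_{\min}=\min_{\bm{k}\ne0}\tau_{\bm{k}}>0$. Hence the bound tends to zero.

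For part (ii), I would write $Z_{\bm{k}}/Z_{\bm{0}}=1+\epsilon_{\bm{k}}$ with $\max_{\bm{k}}\abs{\epsilon_{\bm{k}}}\to0$. Then $\Prob(\bm{k})=(1+\epsilon_{\bm{k}})/(N^3+\sum_{\bm{k}'}\epsilon_{\bm{k}'})\to N^{-3}$ uniformly in $\bm{k}$. The success probability of the maximum likelihood decoder is $\max_{\bm{k}}\Prob(\bm{k})=1/(1+R)$, and since $R\to N^3-1$ this tends to $N^{-3}$. This is the special case $U=H$ of \cref{thm:posterior_subgroup}.

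The only delicate step is justifying the $\kappa\to+\infty$ reduction together with the restriction to a subsystem. As $\kappa\to\infty$, $\Prob_{\beta,\kappa}(J\ne0)\to0$, and the conditional posterior at $J=0$ does not depend on $\kappa$ by \cref{thm:finite_k_posterior}. So the limiting failure probability is that of the $J=0$ block. For the spatial fluxes, I would condition on the fixed $H^2(T^3;\ZN)$ component and restrict the posterior to the $(\ZN)^3$ temporal factor of the K\"unneth decomposition \eqref{eq:kunneth_temporal_merged}. The same normalization argument then applies with $\cL$ replaced by that factor, and everything else is bookkeeping.
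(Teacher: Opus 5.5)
Your proposal is correct and follows the paper's argument: apply \eqref{eq:finite_k_failure} to the $N^3$ temporal sectors, bound each nontrivial fugacity using the smallest interface tension in (i), and use asymptotic equality of the sector weights plus normalization in (ii). Your additional steps make explicit what the paper's short proof leaves implicit: taking the $o(L^2)$ corrections uniform over the finitely many twists, using $R/(1+R)\le R$, and reducing the $\kappa\to+\infty$ limit to the $\kappa$-independent $J=0$ posterior.
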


\begin{proof}
Apply \eqref{eq:finite_k_failure} to the $N^3$ temporal sectors. In the first case, every nontrivial fugacity is bounded by the smallest interface tension. In the second case, the sector weights are asymptotically equal, and normalization gives the uniform posterior.
\end{proof}

For $\SU(2)$, maximal 't Hooft loops and fixed electric flux ensembles show that temporal twists are suppressed by a center interface free energy in the deconfined phase and become degenerate in the confined phase \cite{deForcrandSmekal,deForcrandNoth}. The magnetic center sheet channel is therefore correctable in the deconfined phase and mixed in the confined phase. This behavior reflects the choice of the magnetic sheet as the error variable. The discrete Fourier transform
\begin{equation}
 Z_e(\bm e)=\frac1{N^3}
 \sum_{\bm k\in(\ZN)^3}\omega^{\bm e\cdot\bm k}Z_{\bm k}
\label{eq:electric_flux_fourier}
\end{equation}
gives the fixed electric flux sectors \cite{tHooftFlux,deForcrandSmekal}. In this basis, confinement makes nonzero electric flux costly. The invariant statement is that the topological defects representing logical failure must be suppressed.

The flat theory gives an exact quantum error correction interpretation of a familiar high energy observable, but it has no nonzero syndromes. At finite $\kappa$, open sheets occur as stochastic errors, and correlations between their endpoints probe the neutral spectrum.

\part{Syndrome spectroscopy and mass gaps}

\section{Logical sectors and spectral gaps}
\label{sec:global_versus_gap}

We now ask whether error correction in a confining theory can imply a mass gap. The first point is that global logical information and local spectral information are different.

\subsection{Neutral excitations}

\begin{proposition}[Global twist data do not imply a full gap]
\label[proposition]{prop:neutral_sector_no_go}
A property that depends only on ratios $\cZ[B]/\cZ[B']$ of center background partition functions cannot imply a mass gap for the full theory unless one adds a completeness assumption.
\end{proposition}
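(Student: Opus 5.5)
The plan is to prove \cref{prop:neutral_sector_no_go} by exhibiting a counterexample, or more precisely a structural obstruction. I will take two lattice theories (or two transfer matrices) that assign identical ratios $\cZ[B]/\cZ[B']$ to every center background $B$, yet differ in whether the full spectrum is gapped. A property that is a function of those ratios alone must then take the same value on both theories, so it cannot decide the gap.

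First I will make the decomposition precise. Using reflection positivity, write the twisted partition functions through a transfer matrix $T$ acting on a Hilbert space graded by $\ZN$ center charge. Via the electric flux Fourier transform \eqref{eq:electric_flux_fourier}, every ratio $\cZ[B]/\cZ[B']$ is then a combination of traces of $T^{N_t}$ restricted to flux sectors, dressed by the twist operators. The key observation is that a flat twist acts on center neutral states as the identity: center neutral states carry trivial phases under $U_\ell\mapsto\omega^{\lambda_\ell}U_\ell$. More generally, contractible clusters cancel in the ratios by \eqref{eq:stokes_syndrome_polymer}.

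Second, I will build the counterexample by tensoring. Take any theory $\mathcal T$ and form $\mathcal T\otimes\mathcal S$, where $\mathcal S$ is a decoupled, center neutral sector, for example a massless free scalar or a gapless neutral lattice system with positive transfer matrix. The center background $B$ couples only to $\mathcal T$, so
\[
 \cZ_{\mathcal T\otimes\mathcal S}[B]=\cZ_{\mathcal T}[B]\,\cZ_{\mathcal S},
\]
and every ratio is unchanged. If $\mathcal T$ is gapped, the product nevertheless has no gap, since $\mathcal S$ supplies neutral states of arbitrarily low energy. To keep the example inside pure Yang--Mills language, I would instead phrase it abstractly: the ratios determine only the spectral data of states with nonzero overlap onto twisted sectors. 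Adding neutral eigenvalues with vanishing overlap, a direct summand invisible to every center twist, leaves all ratios fixed. Here I will use \cref{thm:finite_k_posterior} to note that posteriors depend only on these ratios.

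The main obstacle is to state precisely which completeness assumption suffices. Without it, the neutral spectrum is invisible to the twist data. I will formulate the assumption as follows: the family of twisted or syndrome sources has nonzero overlap with every transfer matrix eigenspace, which is the same condition flagged in the introduction. Under this assumption the argument goes through, and the tensor-product example shows that the assumption cannot simply be dropped.
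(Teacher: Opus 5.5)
Your tensor-product construction is the paper's proof. The paper multiplies a center-sensitive $\cZ_{\rm conf}[B]$ by a decoupled, center-neutral massless factor $\cZ_{\rm crit}$. It notes that every ratio $\cZ[B]/\cZ[B']$, and hence every posterior, twist free energy and decoding probability, is unchanged, and concludes that the product theory is nevertheless gapless. That part of your proposal is correct and is sufficient by itself.

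Your first step is not needed. The appeal to \eqref{eq:stokes_syndrome_polymer} is also not valid in general. It holds only inside the convergent strong-coupling expansion, and it cancels contractible clusters only between backgrounds with the same syndrome.

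The variant you propose ``to keep the example inside pure Yang--Mills language'' fails, and you should drop it. Twisted partition functions are traces, not source correlators, so no state can be `invisible' to a twist by having zero overlap. Suppose you adjoin a direct summand $\mathcal H'$ of center-neutral states with transfer matrix $T'$. For a temporal twist, the twist operator acts trivially on $\mathcal H'$, so every $Z_{\bm k}$ is shifted by the same additive constant $c=\Tr (T')^{N_t}>0$. Then $Z_{\bm k}/Z_{\bm 0}$ becomes $(Z_{\bm k}+c)/(Z_{\bm 0}+c)$. This differs from the original ratio unless the sectors were already degenerate, and it pushes the posterior toward uniform. Only a factor common to all backgrounds and multiplying them leaves the ratios invariant. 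That is why the neutral sector must be tensored in rather than added. The paper accepts that this takes the example outside pure Yang--Mills and says so explicitly.

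Finally, the proposition is purely a no-go statement, so you do not need to show that some completeness assumption suffices. In the paper, the positive direction is \cref{thm:complete_full_gap}. It rests on uniform exponential decay of connected syndrome-pair correlators for a spectrally complete family of sources, not on global twist ratios.
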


\begin{proof}
Let $\cZ_{\rm conf}[B]$ be a center sensitive theory with the desired twist ratios. Let $\cZ_{\rm crit}$ be a decoupled, center neutral, massless theory. The product
\[
 \cZ_{\rm tot}[B]=\cZ_{\rm conf}[B]\cZ_{\rm crit}
\]
has
\[
 \frac{\cZ_{\rm tot}[B]}{\cZ_{\rm tot}[B']}
 =\frac{\cZ_{\rm conf}[B]}{\cZ_{\rm conf}[B']}.
\]
has exactly the same logical posteriors, twist free energies, and global decoding probabilities as $\cZ_{\rm conf}$. The full theory is nevertheless gapless.
\end{proof}

The counterexample uses a decoupled sector, so it is not meant as a claim about pure Yang--Mills theory. Its role is to isolate the logical obstruction. Center flux observables probe sectors charged under the one form symmetry, while glueballs are center neutral. Any implication from a global threshold to the neutral spectrum must supply a relation between these two kinds of observables.

\subsection{Charged flux sectors}

A logical decoding exponent does give an exact statement about charged sectors. At finite spatial volume, let a positive transfer matrix define the decomposition into fixed electric center flux sectors
\begin{equation}
 \cH=\bigoplus_{e\in\cL}\cH_e,
 \qquad
 Z_e(\tau)=\Tr_{\cH_e}\e^{-\tau H}.
\label{eq:flux_sector_traces}
\end{equation}
We use the normalized $Z_e$ as the posterior for a logical flux variable and assume that the vacuum lies in the sector $e=0$.

\begin{proposition}[Logical exponent equals a charged sector gap]
\label[proposition]{prop:charged_sector_gap}
At fixed spatial volume define
\[
 R(\tau)=\sum_{e\ne0}\frac{Z_e(\tau)}{Z_0(\tau)},
 \qquad
 p_{\rm fail}(\tau)=\frac{R(\tau)}{1+R(\tau)}.
\]
If each sector has finite ground state degeneracy, then
\begin{equation}
 -\lim_{\tau\to\infty}\frac1\tau
 \log\frac{p_{\rm fail}(\tau)}{1-p_{\rm fail}(\tau)}
 =\min_{e\ne0}\bigl(E_{e,0}-E_{0,0}\bigr).
\label{eq:charged_sector_gap}
\end{equation}
In particular, $p_{\rm fail}(\tau)\le C\e^{-\gamma\tau}$ implies $E_{e,0}-E_{0,0}\ge\gamma$ for every $e\ne0$.
\end{proposition}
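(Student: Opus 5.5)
The odds ratio $p_{\rm fail}/(1-p_{\rm fail})$ is exactly $R(\tau)$, because $p=R/(1+R)$ inverts to $R=p/(1-p)$. So the whole task is the large-$\tau$ asymptotics of
\[
 R(\tau)=\sum_{e\ne0}\frac{Z_e(\tau)}{Z_0(\tau)}.
\]
The plan is to expand each trace over the spectrum of the positive transfer matrix restricted to $\cH_e$. At fixed spatial volume each sector has a lowest eigenvalue $E_{e,0}$ with finite multiplicity $n_e\ge1$. We assume $\cH$ is finite dimensional, or at least that $\Tr_{\cH_e}\e^{-\tau H}<\infty$ for all $\tau>0$, as for a trace-class transfer matrix. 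Under that assumption
\[
 Z_e(\tau)=n_e\e^{-\tau E_{e,0}}\bigl(1+\epsilon_e(\tau)\bigr),\qquad \epsilon_e(\tau)\to0 .
\]

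The tail estimate is done one sector at a time. Let $\delta_e>0$ be the distance from $E_{e,0}$ to the next distinct eigenvalue in $\cH_e$. Then
\[
 0\le\epsilon_e(\tau)\le n_e^{-1}\e^{-(\tau-1)\delta_e}\,\e^{E_{e,0}}\,Z_e(1),
\]
which follows from $\e^{-\tau E}\le \e^{-(\tau-1)(E_{e,0}+\delta_e)}\e^{-E}$ on the excited part. Consequently
\[
 \frac1\tau\log Z_e(\tau)\to-E_{e,0}.
\]
A gap $\delta_e>0$ is needed here. At fixed finite volume with discrete spectrum it is automatic; with continuous spectrum above the bottom we would need the weaker statement that the bottom is an isolated eigenvalue. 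This is where the finite-degeneracy hypothesis enters.

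The remaining argument combines sectors. The logical group $\cL$ is finite, so the sum over $e\ne0$ has finitely many terms. For a finite sum of positive functions each behaving like $\e^{-\tau a_e+o(\tau)}$, we have
\[
 \frac1\tau\log\sum_e(\cdot)\to-\min_e a_e .
\]
This follows from the elementary bounds $\max\le\sum\le|\cL|\max$. Dividing by $Z_0$, with $\frac1\tau\log Z_0\to-E_{0,0}$, gives \eqref{eq:charged_sector_gap}.

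For the corollary, suppose $p_{\rm fail}\le C\e^{-\gamma\tau}$. For large $\tau$ we have $1-p_{\rm fail}\ge\tfrac12$, so $R(\tau)\le 2C\e^{-\gamma\tau}$. Each single term obeys $Z_e/Z_0\le R$, and its exponent is $-(E_{e,0}-E_{0,0})$, so $E_{e,0}-E_{0,0}\ge\gamma$.

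The main obstacle is the trace-class and isolated-bottom issue. A genuine Yang--Mills transfer matrix on a finite spatial lattice acts on an infinite-dimensional $L^2$ space. I would handle this by invoking compactness of the transfer matrix $T=\e^{-H}$ for compact gauge group at fixed volume, which gives a discrete spectrum with finite multiplicities. I would state this as part of the hypothesis "finite ground state degeneracy", interpreted as an isolated eigenvalue of finite multiplicity.
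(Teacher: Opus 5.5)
Your proof is correct and follows the paper's argument: you identify the odds ratio with $R(\tau)$, use $Z_e(\tau)=g_e\e^{-\tau E_{e,0}}(1+o(1))$ in each sector, and read off the logarithmic rate of a finite sum of positive exponentials. Your explicit tail bound and trace-class discussion only make rigorous the $o(1)$ that the paper asserts without proof; the gap $\delta_e$ is not actually needed, since for $\tau\ge1$ the bounds $n_e\e^{-\tau E_{e,0}}\le Z_e(\tau)\le\e^{-(\tau-1)E_{e,0}}Z_e(1)$ already give the rate.
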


\begin{proof}
For large $\tau$,
\[
 Z_e(\tau)=g_e\e^{-\tau E_{e,0}}(1+o(1)).
\]
Since $R(\tau)$ is a finite sum of positive exponentials, its logarithmic rate is the smallest energy difference. The identity $p_{\rm fail}/(1-p_{\rm fail})=R$ then gives \eqref{eq:charged_sector_gap}.
\end{proof}

In a confining theory, a wrapped electric flux tube has energy $\sigma L+o(L)$. Equation \eqref{eq:charged_sector_gap} therefore turns a decoding exponent into a gap between one form charge sectors. It says nothing about excitations within $\cH_0$. To probe neutral excitations, we instead study the interaction between local boundary components of an open sheet.

\section{Center monopole syndrome pairs}
\label{sec:syndrome_monopoles}

\subsection{The defect geometry}

Choose a lattice direction $z$ for transfer matrix evolution and a periodic direction $t$ of extent $N_t$. On the dual lattice, place two oppositely oriented loops that wrap the $t$ circle at $z=0$ and $z=r$. Let $\Sigma_{k,r}$ be a dual sheet of center charge $k\in\{1,\dots,N-1\}$ joining the loops. Its Poincar\'e dual on the primal lattice is the plaquette cochain
\[
 B_{k,r}\in C^2(\Lambda;\ZN),
\]
with syndrome
\begin{equation}
 J_{k,r}=\dd B_{k,r}.
\label{eq:pair_syndrome_merged}
\end{equation}
The dual current $\star J_{k,r}$ describes a center monopole and antimonopole pair wrapping Euclidean time. Moving the interior of the sheet changes $B_{k,r}$ by an exact cochain, so \cref{lem:one_form_orbit} leaves the partition function unchanged.

\begin{figure}[t]
\centering
\begin{tikzpicture}[scale=0.91,>=Latex]
\draw[->,thick] (-0.5,0) -- (8.1,0) node[right] {$z$};
\draw[->,thick] (0,-0.4) -- (0,5.4) node[above] {$t$ (periodic)};
\fill[blue!14] (1.4,0.35) rectangle (6.1,4.85);
\draw[deepblue,thick] (1.4,0.35) rectangle (6.1,4.85);
\draw[very thick,red!70!black] (1.4,0.35) -- (1.4,4.85);
\draw[very thick,red!70!black] (6.1,0.35) -- (6.1,4.85);
\draw[->,red!70!black,thick] (1.4,2.0)--(1.4,3.1);
\draw[->,red!70!black,thick] (6.1,3.1)--(6.1,2.0);
\draw[dashed] (-0.15,0.35)--(7.3,0.35);
\draw[dashed] (-0.15,4.85)--(7.3,4.85);
\node[align=center] at (3.75,2.65) {twisted plaquette sheet\\$B_{k,r}$};
\node[below] at (1.4,-0.08) {$0$};
\node[below] at (6.1,-0.08) {$r$};
\draw[<->,thick] (1.4,-0.55)--node[below] {separation $r$}(6.1,-0.55);
\node[align=right,left] at (1.25,3.6) {monopole\\worldline};
\node[align=left,right] at (6.25,3.6) {antimonopole\\worldline};
\node[align=center] at (3.75,5.2) {$t=0\sim N_t$};
\end{tikzpicture}
\caption{A nonzero syndrome used as a spectral probe. The sheet boundary is the measured current $J_{k,r}$ in the code and a heavy center monopole pair in Yang--Mills theory. Evolution in the $z$ direction gives a positive transfer matrix representation.}
\label{fig:monopole_sheet_merged}
\end{figure}

\subsection{Likelihood identity}

Let $W_{\beta,\kappa}(J,h)$ be the joint weight in \eqref{eq:joint_sector_finite_k}. For the syndrome $J_{k,r}$, choose $B_{k,r}$ as the reference completion $h=0$. For $J=0$, choose $B_0=0$.

\begin{theorem}[Syndrome and monopole identity]
\label[theorem]{thm:syndrome_monopole_merged}
Define the core renormalized likelihood for the pair by
\begin{equation}
 \cC_{k,L_z}(r)=
 \frac{\nu_\kappa(0)}{\nu_\kappa(J_{k,r})}
 \frac{W_{\beta,\kappa}(J_{k,r},0)}
 {W_{\beta,\kappa}(0,0)}.
\label{eq:renormalized_pair_likelihood}
\end{equation}
For every finite lattice, finite $\kappa$, and $\beta\ge0$,
\begin{equation}
 \boxed{
 \cC_{k,L_z}(r)=
 \frac{\cZ_\beta[B_{k,r}]}{\cZ_\beta[0]}.}
\label{eq:syndrome_monopole_merged}
\end{equation}
The right hand side is the twisted plaquette center monopole correlator studied in \cite{MonopolePairs}.
\end{theorem}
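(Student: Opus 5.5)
The plan is to substitute the closed form \eqref{eq:joint_sector_finite_k} for the joint sector weight, established in the proof of \cref{thm:finite_k_posterior}, into both the numerator and the denominator of \eqref{eq:renormalized_pair_likelihood}. Write $h=0$ for the reference completion, represented by $B_{k,r}$ when $J=J_{k,r}$ and by $B_0=0$ when $J=0$. That formula gives
\[
 W_{\beta,\kappa}(J_{k,r},0)=\frac{\abs{\im\dd_1}}{\cN_{\beta,\kappa}}\,\nu_\kappa(J_{k,r})\,\cZ_\beta[B_{k,r}],
 \qquad
 W_{\beta,\kappa}(0,0)=\frac{\abs{\im\dd_1}}{\cN_{\beta,\kappa}}\,\nu_\kappa(0)\,\cZ_\beta[0].
\]
In the ratio, the common factor $\abs{\im\dd_1}/\cN_{\beta,\kappa}$ cancels. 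The remaining fugacity ratio $\nu_\kappa(J_{k,r})/\nu_\kappa(0)$ is removed exactly by the prefactor $\nu_\kappa(0)/\nu_\kappa(J_{k,r})$ in the definition of $\cC_{k,L_z}(r)$. What is left is $\cZ_\beta[B_{k,r}]/\cZ_\beta[0]$.

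Two checks make this legitimate. First, $J_{k,r}$ must lie in $\im\dd_2$ so that \eqref{eq:joint_sector_finite_k} applies. This holds by construction, since $J_{k,r}=\dd B_{k,r}$ with $B_{k,r}$ dual to a sheet bounded by the two wrapped loops. Second, the orbit sum defining $W$ runs over $B_{k,r}+\im\dd_1$. Here every term has the same syndrome, and by \cref{lem:one_form_orbit} every term has the same Yang--Mills weight. So the weight depends only on the class, not on the chosen sheet interior, and the choice of $B_{k,r}$ is well defined as the $h=0$ label. All quantities are strictly positive, because $r_\kappa>0$ and the Wilson integrand is positive. The divisions are therefore valid for every finite lattice, finite $\kappa$ and $\beta\ge0$; at $\beta=0$ both sides equal $1$.

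The final step identifies the right hand side with the twisted plaquette monopole correlator. By \eqref{eq:ym_partition_B_merged}, $\cZ_\beta[B_{k,r}]$ is the Wilson integral with the plaquettes on the Poincar\'e dual of $\Sigma_{k,r}$ multiplied by $\omega^{-k}$. This is exactly the twisted action whose twist boundary is the center monopole pair $\star J_{k,r}$, normalized by the untwisted partition function, as in \cite{MonopolePairs}.

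I expect no analytic obstacle, since the identity is exact and algebraic. The only delicate point is bookkeeping of conventions. The orientation and the sign of $k$ in $\omega^{-B_p}$ must match the twist convention of the cited correlator; at worst this amounts to $k\mapsto N-k$, equivalently complex conjugation, under which the real positive partition function is invariant.
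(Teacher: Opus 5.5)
Your proposal is correct and follows the paper's proof: substitute the closed form \eqref{eq:joint_sector_finite_k} into the numerator and the denominator, so that $\abs{\im\dd_1}$, $\cN_{\beta,\kappa}$ and the endpoint fugacity ratio cancel and leave $\cZ_\beta[B_{k,r}]/\cZ_\beta[0]$. Your extra checks (realizability of $J_{k,r}$, positivity, and the harmless $k\mapsto N-k$ convention, since the change of variables $U\mapsto\overline U$ gives $\cZ_\beta[-B]=\cZ_\beta[B]$) are sound but not needed for the identity itself.
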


\begin{proof}
From \eqref{eq:joint_sector_finite_k},
\begin{align*}
 W_{\beta,\kappa}(J_{k,r},0)
 &=\frac{\abs{\im\dd_1}}{\cN_{\beta,\kappa}}
 \nu_\kappa(J_{k,r})\cZ_\beta[B_{k,r}],\\
 W_{\beta,\kappa}(0,0)
 &=\frac{\abs{\im\dd_1}}{\cN_{\beta,\kappa}}
 \nu_\kappa(0)\cZ_\beta[0].
\end{align*}
Taking the ratio cancels the orbit size, the channel normalization, and the known endpoint fugacity.
\end{proof}

For the product fugacity in \eqref{eq:cosine_fugacity}, the two endpoint worldlines have fixed length $N_t$. The ratio $\nu_\kappa(J_{k,r})/\nu_\kappa(0)$ is therefore independent of $r$, and all nontrivial dependence on the separation comes from the $\SU(N)$ field. On a geometry or relative complex with no nontrivial $H^2$ logical completion, \eqref{eq:renormalized_pair_likelihood} can be written directly as a ratio of syndrome probabilities:
\begin{equation}
 \cC_{k,L_z}(r)=
 \frac{\nu_\kappa(0)}{\nu_\kappa(J_{k,r})}
 \frac{\Prob(J_{k,r})}{\Prob(0)}.
\label{eq:open_syndrome_probability}
\end{equation}
On a torus, one must also resolve the global completion, since the same endpoints can be joined by sheets in different $H^2$ classes.

At $\kappa=+\infty$, the pair must be inserted as an external defect because its syndrome has zero probability in the channel. At finite $\kappa$, the pair occurs as a stochastic syndrome event. The normalization in \eqref{eq:renormalized_pair_likelihood} removes the known local production cost and leaves the interaction generated by the Yang--Mills field.

\section{Transfer matrix spectrum}
\label{sec:decoder_spectrum}

For the spectral discussion, take the periodic hypercubic Wilson lattice at theta angle zero, with the reflection plane and sheet geometry used in the standard twisted plaquette construction. The Wilson action is reflection positive and has a positive transfer matrix in the $z$ direction \cite{Luscher}. The twisted sheet is represented by a vortex operator $V_k$ on a transfer slice. The relation between the sheet ratio and this operator was constructed in \cite{MonopolePairs}. Up to the conventional choice between isospectral transfer matrices,
\begin{equation}
 \cC_{k,L_z}(r)=
 \frac{\Tr\!\left(T^{L_z-r}V_kT^rV_k^\dagger\right)}
 {\Tr(T^{L_z})}.
\label{eq:transfer_pair_correlator}
\end{equation}
Averaging over the transverse directions projects to zero transverse momentum.

Taking $L_z\to\infty$ and writing $T=\e^{-aH_z}$ gives
\begin{equation}
 \cC_k(r)=\sum_n
 \abs{\bra nV_k^\dagger\ket0}^2
 \e^{-ar(E_n-E_0)}.
\label{eq:pair_spectral_decomposition}
\end{equation}
Hence
\begin{align}
 \cC_k(\infty)&=\abs{\bra0V_k\ket0}^2,
\label{eq:pair_asymptote}\\
 \cC_k^{\rm conn}(r)
 &:=\cC_k(r)-\cC_k(\infty)
 =\sum_{n>0}c_{n,k}\e^{-ar(E_n-E_0)},
 \quad c_{n,k}\ge0.
\label{eq:pair_connected_spectrum}
\end{align}
Without the transverse momentum projection, a continuum of momenta can produce a power law prefactor. It does not change the exponential decay rate.

\begin{definition}[Decoder correlation length]
\label[definition]{def:decoder_correlation_length}
When the connected likelihood is nonzero for arbitrarily large $r$, define
\begin{equation}
 \xi_{{\rm dec},k}^{-1}
 =-\limsup_{r\to\infty}\frac1r
 \log\cC_k^{\rm conn}(r).
\label{eq:decoder_correlation_length}
\end{equation}
This is the scale on which the core renormalized evidence for two distant syndrome components becomes additive.
\end{definition}

\begin{theorem}[Decoder spectral theorem]
\label[theorem]{thm:decoder_spectral_theorem}
Let $P_0$ denote the vacuum spectral projection and define
\begin{equation}
 m_{V_k}=\inf\set{E-E_0>0:
 \text{the spectral measure of }(1-P_0)V_k^\dagger\ket0
 \text{ has support at }E},
\label{eq:vortex_coupled_mass}
\end{equation}
with the convention that the infimum of the empty set is $+\infty$. If the asymptotic exponential rate exists, then
\begin{equation}
 \boxed{m_{V_k}=\frac{1}{a\xi_{{\rm dec},k}}.}
\label{eq:mass_decoder_equality}
\end{equation}
More generally, if
\begin{equation}
 0\le\cC_k^{\rm conn}(r)\le A\e^{-r/\xi_*}
\label{eq:exponential_syndrome_decoupling}
\end{equation}
for all sufficiently large $r$, then the nonvacuum spectral measure of $V_k^\dagger\ket0$ is supported in
\begin{equation}
 E-E_0\ge\frac{1}{a\xi_*}.
\label{eq:vortex_gap_lower_bound}
\end{equation}
\end{theorem}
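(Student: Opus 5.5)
The plan is to derive the theorem directly from the spectral representation \eqref{eq:pair_spectral_decomposition}. We write it as an integral against the spectral measure of the vector $\psi:=(1-P_0)V_k^\dagger\ket0$ with respect to $H_z$. By \eqref{eq:pair_connected_spectrum},
\[
 \cC_k^{\rm conn}(r)=\int_{(0,\infty)}\e^{-ar\epsilon}\,\dd\mu_\psi(\epsilon),
 \qquad \epsilon=E-E_0 .
\]
Here $\mu_\psi$ is a finite positive measure of total mass $\norm{\psi}^2$. Reflection positivity of the Wilson action makes $T$ a positive self adjoint contraction, so the sum in \eqref{eq:pair_spectral_decomposition} is legitimate. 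In the $L_z\to\infty$ limit the vacuum projection $P_0$ separates the disconnected term $\cC_k(\infty)$. The transverse momentum projection only affects prefactors, as already remarked in the paper. Everything then reduces to a Laplace transform statement about a positive measure, and $m_{V_k}$ is exactly $\inf\supp\mu_\psi$ by \eqref{eq:vortex_coupled_mass}.

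\emph{Upper bound on the decay rate.} For any $\eta>0$, positivity of the integrand gives
\[
 \cC_k^{\rm conn}(r)\ge \mu_\psi\bigl([m_{V_k},m_{V_k}+\eta]\bigr)\e^{-ar(m_{V_k}+\eta)} .
\]
The measure of this interval is strictly positive by the definition of the infimum of the support. Taking $\frac1r\log$ and then $\liminf$ shows that the decay rate is at most $a(m_{V_k}+\eta)$. Letting $\eta\to0$ gives $\xi_{{\rm dec},k}^{-1}\le a\,m_{V_k}$. This bound holds for every $r$, with no limit assumption.

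\emph{Lower bound on the decay rate.} Since $\mu_\psi$ is supported in $[m_{V_k},\infty)$,
\[
 \cC_k^{\rm conn}(r)\le\norm{\psi}^2\e^{-ar\,m_{V_k}} .
\]
This gives $\xi_{{\rm dec},k}^{-1}\ge a\,m_{V_k}$ with the $\limsup$ convention. Combined with the previous step, this proves \eqref{eq:mass_decoder_equality} whenever the rate exists, in fact even without that assumption. The case $m_{V_k}=+\infty$ corresponds to $\psi=0$, so the connected likelihood vanishes identically. That case is excluded by the nonvanishing hypothesis in \cref{def:decoder_correlation_length}.

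\emph{The general bound.} Suppose \eqref{eq:exponential_syndrome_decoupling} holds. If $\mu_\psi$ charged an interval $[\epsilon_1,\epsilon_2]$ with $\epsilon_2<1/(a\xi_*)$, the first lower bound would give
\[
 \cC_k^{\rm conn}(r)\ge c\,\e^{-ar\epsilon_2},
\]
which contradicts $A\e^{-r/\xi_*}$ for large $r$. Hence $\supp\mu_\psi\subset[1/(a\xi_*),\infty)$, which is \eqref{eq:vortex_gap_lower_bound}.

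The main obstacle is not the Laplace argument but justifying the infinite volume spectral representation. Two things are needed. First, the operator identity \eqref{eq:transfer_pair_correlator} must pass to $L_z\to\infty$, with $\Tr(T^{L_z-r}\cdots)/\Tr T^{L_z}$ converging to a vacuum expectation value. This requires a nondegenerate top eigenvalue of $T$ at fixed transverse volume; Perron–Frobenius for the positive transfer matrix supplies this. Second, $V_k^\dagger\ket0$ must be normalizable so that $\mu_\psi$ is finite. I would take both as given at fixed finite transverse volume. The transverse thermodynamic limit would then be addressed separately, through uniform bounds, in the way \cref{thm:strong_coupling_syndrome_decoupling} does.
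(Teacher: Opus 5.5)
Your proposal is correct and follows the paper's argument: both reduce the statement to the fact that the connected likelihood is the Laplace transform of the positive spectral measure of $(1-P_0)V_k^\dagger\ket0$. The logarithmic decay rate of such a transform is the infimum of the measure's support, and the bound \eqref{eq:vortex_gap_lower_bound} follows by the same contradiction with a slower-decaying positive contribution. Your explicit two-sided estimate also shows that the rate always exists for a finite positive measure, and the $L_z\to\infty$ and normalizability issues you flag are the ones the paper treats in its finite volume spectral measure appendix.
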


\begin{proof}
At finite transverse volume, \eqref{eq:pair_connected_spectrum} is a positive sum of exponentials. Its logarithmic decay rate is the infimum of the support of the positive spectral measure. If a term with $c_{n,k}>0$ satisfied $a(E_n-E_0)<1/\xi_*$, it would eventually decay more slowly than the assumed bound. This is a contradiction. The same argument applies to the positive spectral integral in infinite volume.
\end{proof}

\begin{definition}[Exponential syndrome decoupling]
We say that the finite $\kappa$ decoder has exponential syndrome decoupling for a defect family when the connected core renormalized likelihood of two separated components obeys a bound of the form \eqref{eq:exponential_syndrome_decoupling}, uniformly over the allowed local shapes and decorations.
\end{definition}

Exponential syndrome decoupling is stronger than global correctability. Global decoding asks which topological completion is most likely once the full syndrome is known. Syndrome decoupling instead asks how rapidly the posterior interaction between separated components vanishes. It is a local mixing property of the Bayesian model.

\subsection{The scalar glueball channel}

The positive pair correlator in \eqref{eq:transfer_pair_correlator} is generated by $V_k$ and $V_k^\dagger$. Parity and charge conjugation exchange the vortex source with its adjoint. We may therefore form the even and odd combinations
\begin{equation}
 V_{k,\pm}=V_k\pm V_k^\dagger.
\label{eq:even_vortex_source}
\end{equation}
The transfer matrix analysis of \cite{MonopolePairs} places the magnetic screening mass from this twisted sheet correlator in the finite temperature channel $J_R^{PC}=0_+^{++}$. In the isotropic zero temperature limit, Euclidean rotations identify the relevant reflection and parity quantum numbers, leaving the four dimensional scalar $0^{++}$ channel. For $\SU(2)$, $k=-k$, so the even projection is automatic.

Let $m_{0^{++}}^{(V_k)}$ be the infimum of the nonvacuum $0^{++}$ spectral support in the vortex pair correlator. This notation distinguishes the mass seen by the source from the lightest scalar mass of the theory. A particular source need not overlap with the lightest state in its symmetry channel.

\begin{corollary}[Source coupled scalar gap]
\label[corollary]{cor:conditional_scalar_gap}
Assume the zero temperature limit in which the exact vortex pair correlator has the $0^{++}$ quantum numbers just described. If its connected syndrome likelihood obeys \eqref{eq:exponential_syndrome_decoupling}, then
\begin{equation}
 m_{0^{++}}^{(V_k)}\ge\frac{1}{a\xi_*}>0.
\label{eq:scalar_gap_bound}
\end{equation}
When the asymptotic rate exists, $m_{0^{++}}^{(V_k)}=1/(a\xi_{{\rm dec},k})$. If, in addition, the vortex source has nonzero overlap with the lightest physical $0^{++}$ state, then $m_{0^{++}}^{(V_k)}=m_{0^{++}}$ and the same bound applies to the scalar glueball mass itself.
\end{corollary}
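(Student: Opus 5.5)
The plan is to reduce \cref{cor:conditional_scalar_gap} to \cref{thm:decoder_spectral_theorem}, applied to the vortex pair correlator restricted to the $0^{++}$ channel. \emph{First step:} start from the spectral representation \eqref{eq:pair_spectral_decomposition}. Use the symmetry decomposition \eqref{eq:even_vortex_source} to split $V_k^\dagger\ket0$ into components in the irreducible sectors of the transfer matrix symmetry group. These are parity, charge conjugation, and the lattice rotation group of the transfer slice at zero transverse momentum. Since $T$ commutes with these symmetries, the spectral measures of different channels are mutually orthogonal. The connected correlator \eqref{eq:pair_connected_spectrum} is then a sum of nonnegative channel contributions. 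By the assumed zero temperature identification from \cite{MonopolePairs}, the relevant even component carries $0^{++}$ quantum numbers. For $\SU(2)$ the projection is automatic.

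\emph{Second step (the bound):} each $0^{++}$ coefficient $c_{n,k}\ge0$ is bounded above by the full connected correlator, because every other term in the sum is nonnegative. So the hypothesis \eqref{eq:exponential_syndrome_decoupling} gives
\[
 0\le\sum_{n>0,\;n\in 0^{++}}c_{n,k}\,\e^{-ar(E_n-E_0)}\le A\e^{-r/\xi_*}.
\]
Suppose some $n$ had $c_{n,k}>0$ and $a(E_n-E_0)<1/\xi_*$. Then this single term would eventually exceed the bound. This is exactly the argument in the proof of \cref{thm:decoder_spectral_theorem}, and it yields $m_{0^{++}}^{(V_k)}\ge 1/(a\xi_*)$. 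In infinite transverse volume the same argument applies to the positive spectral integral. There I would note that the transverse averaging removes the momentum continuum, so no power-law prefactor spoils the comparison.

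\emph{Third step (equality):} when the rate in \cref{def:decoder_correlation_length} exists, \eqref{eq:mass_decoder_equality} gives $m_{V_k}=1/(a\xi_{{\rm dec},k})$. I still have to check that the infimum of the full support is attained in the $0^{++}$ part, i.e.\ that the source has no lighter component in another channel. Here I would invoke the assumed identification that the exact correlator lies in $0^{++}$, so $m_{V_k}=m_{0^{++}}^{(V_k)}$. \emph{Final step:} if $\bra{0^{++}_{\min}}V_k^\dagger\ket0\ne0$, the lowest $0^{++}$ level lies in the support. Then $m^{(V_k)}_{0^{++}}\le m_{0^{++}}$, and the reverse inequality holds by definition of the lightest state, so the two masses are equal.

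The main obstacle is the first step. One has to justify rigorously that the vortex correlator projects onto the $0^{++}$ sector. On the lattice, reflection-positive $z$-evolution with a periodic $t$ direction has only the finite-temperature symmetry $0_+^{++}$, and the passage to four-dimensional $0^{++}$ requires $N_t\to\infty$ with Euclidean rotation restoration. I would treat this as the stated assumption rather than prove it. The only care needed is that the limit commutes with the exponential bound, which requires $A$ and $\xi_*$ to be uniform in $N_t$.
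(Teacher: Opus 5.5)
Your proposal is correct and follows the paper's route. The corollary is an immediate application of the positive Laplace transform argument of \cref{thm:decoder_spectral_theorem} to the vortex pair correlator: the $0^{++}$ identification is taken as the stated zero temperature assumption, and the final equality follows from the definition of the source's spectral support. Splitting $V_k^\dagger\ket0$ into orthogonal symmetry channels is a harmless strengthening. It shows that the lower bound on the source's $0^{++}$ support holds even when the source also overlaps other channels, as $V_k$ generically does for $N\ge3$ through its $C$-odd part. The purity assumption is therefore needed only for the equality $m_{V_k}=m_{0^{++}}^{(V_k)}$.
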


Symmetry determines the channel, but it does not guarantee overlap with the lightest state in that channel. The overlap can be tested numerically by comparing the effective mass from \eqref{eq:pair_connected_spectrum} with a variational basis of conventional glueball operators.

\section{Conditions for a full transfer matrix gap}
\label{sec:full_gap_criterion}

Let $\{V_\alpha\}_{\alpha\in A}$ be a family of defect operators obtained by varying the charge, orientation, shape, transverse representation, and local decoration of the syndrome insertion. Their connected correlators have the form
\[
 \cC_\alpha^{\rm conn}(r)
 =\sum_{n>0}c_{n,\alpha}\e^{-ar(E_n-E_0)}.
\]

\begin{definition}[Spectrally complete syndrome family]
\label[definition]{def:spectrally_complete_family}
Let $P_I$ denote the spectral projection of $H_z$ onto a Borel interval $I\subset(E_0,\infty)$. The family $\{V_\alpha\}$ is \emph{spectrally complete above the vacuum} if, whenever $P_I$ is nonzero, there exists an $\alpha$ such that
\begin{equation}
 P_I V_\alpha^\dagger\ket0\ne0.
\label{eq:spectral_completeness}
\end{equation}
At finite volume, this means that every nonvacuum energy eigenspace receives nonzero spectral weight from at least one source.
\end{definition}

\begin{theorem}[Uniform local correctability implies a full finite volume gap]
\label[theorem]{thm:complete_full_gap}
Suppose $\{V_\alpha\}$ is spectrally complete and that for some $\mu>0$,
\begin{equation}
 0\le\cC_\alpha^{\rm conn}(r)
 \le A_\alpha\e^{-a\mu r}
\label{eq:uniform_family_decay}
\end{equation}
for every $\alpha$ and all sufficiently large $r$. Then the transfer Hamiltonian has
\begin{equation}
 \Delta_{\YM}:=\inf(\operatorname{spec}H_z\setminus\{E_0\})-E_0\ge\mu.
\label{eq:full_finite_lattice_gap}
\end{equation}
For a sequence of transverse volumes, if spectral completeness and the same exponent $\mu$ hold in every volume, then the finite volume transfer gaps are uniformly bounded below by $\mu$.
\end{theorem}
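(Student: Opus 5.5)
The plan is a contradiction argument that applies the single-source bound of \cref{thm:decoder_spectral_theorem} to each member of the family. Spectral completeness then converts these source-by-source statements into a statement about the whole spectrum of $H_z$.

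Fix a finite transverse volume, so that $H_z$ has discrete spectrum. For each $\alpha$, the connected correlator has the positive expansion
\[
 \cC_\alpha^{\rm conn}(r)=\sum_{n>0}c_{n,\alpha}\e^{-ar(E_n-E_0)},
 \qquad c_{n,\alpha}=\norm{P_{E_n}V_\alpha^\dagger\ket0}^2\ge0,
\]
where $P_{E_n}$ is the eigenprojection onto the level $E_n$. This is the same expansion as \eqref{eq:pair_connected_spectrum}, read off from \eqref{eq:transfer_pair_correlator} after taking $L_z\to\infty$. We group degenerate eigenvectors into their eigenspace projection so that each coefficient is a norm squared.

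Now suppose, for contradiction, that $\Delta_{\YM}<\mu$. Then there is an eigenvalue $E_1$ with $0<E_1-E_0<\mu$. Take $I$ to be a small interval around $E_1$ contained in $(E_0,E_0+\mu)$, so that $P_I\ne0$. Spectral completeness \eqref{eq:spectral_completeness} then supplies an index $\alpha$ with $P_IV_\alpha^\dagger\ket0\ne0$, which means $c_{1,\alpha}>0$. Every term in the expansion is nonnegative, so
\[
 \cC_\alpha^{\rm conn}(r)\ge c_{1,\alpha}\e^{-ar(E_1-E_0)}.
\]
This lower bound decays strictly more slowly than $A_\alpha\e^{-a\mu r}$, so it violates \eqref{eq:uniform_family_decay} for large $r$. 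Hence $\Delta_{\YM}\ge\mu$.

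The step is exactly the argument in the proof of \cref{thm:decoder_spectral_theorem}. Equivalently, that theorem applied to $V_\alpha$ places the nonvacuum spectral support of each $V_\alpha^\dagger\ket0$ in $[E_0+\mu,\infty)$. Completeness applied to the interval $I=(E_0,E_0+\mu)$ then forces $P_I=0$.

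The only point needing care is that the infimum in \eqref{eq:full_finite_lattice_gap} is taken over $\operatorname{spec}H_z\setminus\{E_0\}$. The vacuum must therefore be nondegenerate, or any further vacuum-level states must be counted as part of the ground level $E_0$; I would state this explicitly. In the general spectral-measure setting, I would repeat the argument with the spectral measure replacing the sum. If $P_I\ne0$ for some $I\subset(E_0,E_0+\mu)$, shrink $I$ to a compact subinterval $[E_0+\delta,E_0+\mu-\delta]$ on which some $V_\alpha^\dagger\ket0$ still has positive weight $w$. Then $\cC_\alpha^{\rm conn}(r)\ge w\,\e^{-ar(\mu-\delta)}$, which gives the same contradiction.

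The uniform statement then follows volume by volume. The constant $\mu$ is the same in every volume, so each finite-volume gap is at least $\mu$. The prefactors $A_\alpha$ may depend on the volume without affecting the bound, since only the exponential rate enters.

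I expect the main obstacle to be bookkeeping rather than analysis. The bound \eqref{eq:uniform_family_decay} holds only for ``sufficiently large $r$'', possibly with an $\alpha$-dependent threshold, and this is harmless because the contradiction needs only one $\alpha$ at a time. One must also check that the positivity of the coefficients $c_{n,\alpha}$ relies on the reflection positivity assumed before \eqref{eq:transfer_pair_correlator}.
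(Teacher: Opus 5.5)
Your proposal is correct and follows the paper's own proof. Both arguments assume spectrum in $(E_0,E_0+\mu)$, use spectral completeness to find a source $V_\alpha^\dagger\ket0$ with positive weight there, and note that the nonnegative exponential (Laplace) representation of $\cC_\alpha^{\rm conn}$ then decays more slowly than the assumed bound. Your additional care is sound and only refines that argument: you shrink to a subinterval in the non-discrete case, allow $\alpha$-dependent large-$r$ thresholds, and obtain the uniform statement volume by volume.
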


\begin{proof}
Suppose the spectrum intersects $(E_0,E_0+\mu)$. Then there is a nonzero spectral projection $P_I$ in that interval. Spectral completeness gives a source with positive spectral weight in $I$. Its positive Laplace contribution decays more slowly than the bound in \eqref{eq:uniform_family_decay}, which is a contradiction.
\end{proof}

The basic center sheet gives a scalar source. It is not known to generate a spectrally complete family for the full gauge invariant Hilbert space. Such a family would require decorated sheets carrying the lattice quantum numbers that approach all continuum $J^{PC}$ channels. A full gap therefore requires a uniform exponential bound for a complete set of syndrome probes, not merely a global logical threshold.

\subsection{Thermodynamic and continuum limits}

At fixed transverse volume and lattice spacing, \cref{thm:decoder_spectral_theorem,thm:complete_full_gap} are transfer matrix statements. To take the infinite volume and zero temperature limits, the exponential bound must be uniform in the transverse dimensions and the Euclidean time extent. A continuum statement also requires a sequence of bare couplings $\beta\to\infty$, a scale setting $a(\beta)\to0$, construction of the continuum theory, and constants $m_*>0$, $\beta_0$, and $\ell_0$ such that
\begin{equation}
 \frac{1}{a(\beta)\xi_{{\rm dec},\alpha}
 (\beta;L_\perp,N_t)}\ge m_*
\label{eq:continuum_decoder_condition}
\end{equation}
for all $\beta\ge\beta_0$ and all boxes with $L_\perp a(\beta)\ge\ell_0$ and $N_t a(\beta)\ge\ell_0$. For a scalar mass, this bound is needed for the scalar source. For the full gap, it must hold uniformly over a spectrally complete family. A finite correlation length at each fixed cutoff or finite volume is not sufficient. Finite boxes are always spectrally discrete, and the correlation length in lattice units diverges in any massive continuum limit.

Equation \eqref{eq:continuum_decoder_condition} is the estimate needed to pass from a fixed cutoff transfer matrix to the continuum. In this form, the mass gap question becomes a uniform bound on decoder locality in volume and in physical units.

\section{Numerical tests beyond strong coupling}
\label{sec:numerical_continuation}

The joint action is
\begin{equation}
 S_{\beta,\kappa}[U,B]
 =-\frac{\beta}{N}\sum_p
 \operatorname{Re}\tr(\omega^{-B_p}U_p)
 -\kappa\sum_c\cos\left(\frac{2\pi(\dd B)_c}{N}\right).
\label{eq:finite_kappa_action_merged}
\end{equation}
Existing algorithms sample dynamical center flux variables together with $\SU(N)$ links, including flat 't Hooft sectors and their distributions in the confined and deconfined phases \cite{MorikawaSuzuki,MorikawaSuzuki2026}. A finite $\kappa$ simulation can combine link updates, local plaquette updates, global sheet moves, and constrained or umbrella ensembles for rare prescribed syndromes.

\begin{table}[H]
\centering
\caption{Observables in the decoding and Yang--Mills dictionary.}
\label{tab:merged_observables}
\begin{tabularx}{0.96\textwidth}{>{\raggedright\arraybackslash}p{0.24\textwidth} >{\raggedright\arraybackslash}X >{\raggedright\arraybackslash}X}
\toprule
QEC quantity & Lattice estimator & HEP interpretation\\
\midrule
$\Prob(h\mid J)$ & conditional histogram of global sheet class & relative twisted partition functions\\
$\Delta F_J(g)$ & log ratio of conditional sector counts & defect or interface free energy\\
$p_{\rm fail}^{\ML}(J)$ & $1-\max_h\Prob(h\mid J)$ & total fugacity of nontrivial completions\\
$\cC_k(r)$ & core renormalized $W(J_{k,r},0)/W(0,0)$ & center monopole correlator\\
$m_{\rm eff}(r)$ & adjacent separation log ratio of connected $\cC_k$ & vortex channel effective mass\\
\bottomrule
\end{tabularx}
\end{table}

A convenient effective mass estimator is
\begin{equation}
 m_{\rm eff}(r)=\frac1a
 \log\frac{\cC_k^{\rm conn}(r)}
 {\cC_k^{\rm conn}(r+1)}.
\label{eq:decoder_effective_mass}
\end{equation}
This quantity should be compared with conventional scalar glueball operators at the same bare parameters.

A direct numerical study can proceed in four stages.
\begin{enumerate}[label=\textbf{Stage \arabic*:},leftmargin=5.7em]
\item reproduce the flat temporal twist distributions and the finite temperature center transition;
\item lower $\kappa$ and map the averaged failure surface $p_{\rm fail}^{\ML}(\beta,\kappa,N_t)$, separating changes in syndrome density from changes in conditional topological completion;
\item impose the pair current $J_{k,r}$, measure \eqref{eq:renormalized_pair_likelihood}, and compare its asymptotic mass with the conventional $0^{++}$ spectrum;
\item follow $a\xi_{\rm dec}$ along a scale setting trajectory and test whether the decoder correlation length approaches a finite physical limit.
\end{enumerate}

The global threshold surface and the syndrome correlation surface need not coincide. The first measures ambiguity among topological sectors. The second measures the neutral local spectrum.

\section{Discussion}
\label{sec:discussion}

We have described decoding in terms of the weights of topological sectors. This description continues to apply when the logical group grows with system size, when only a subgroup of sectors remains distinguishable, and when matter screens a bare Wilson loop. Fourier dual disorder amplitudes, distinguishability in the environment, and coherent information then give complementary descriptions of the logical posterior.

The two Yang--Mills applications use this structure in different ways. The spatial $\ZN$ code is a fixed point model for the magnetic symmetry algebra expected in a confining adjoint theory, subject to \cref{ass:ym_package}. The Euclidean center sheet model instead defines a correlated Pauli inference problem directly from the Wilson path integral. One form invariance gives an exact posterior at finite volume. The strong coupling expansion separates three contributions. Contractible polymers determine the local syndrome action. Wrapping polymers distinguish global logical sectors. Polymers joining syndrome endpoints determine a transfer matrix mass.

These effects need not change at the same critical surface. In particular, confinement of the underlying gauge field can coexist with maximal mixing of magnetic sheet sectors because the result depends on which center object is used as the error variable. Exact superselection is the limit in which transitions between sectors vanish \cite{BaoSuperselection}. When the sector weights are finite, protection becomes a Bayesian inference problem conditioned on the syndrome.

Several problems remain. The Hamiltonian $\PSU(N)$ construction requires uniform control of the effective center Hamiltonian and its phase. The finite curvature phase boundary and the decoder mass should be continued beyond the convergent strong coupling region. Controlling the full transfer spectrum requires a spectrally complete family of decorated syndrome sources. A circuit level threshold analysis must include the complementary Pauli sector and faulty syndrome extraction. At nonzero theta angle, or for coherent nonabelian fusion errors, one needs an operator valued extension because scalar positive sector weights are no longer sufficient.

The continuum question is local rather than purely topological. Twist free energies determine global logical inference, while syndrome correlations determine neutral spectroscopy. A continuum mass statement requires bounds that are uniform in transverse volume, Euclidean time extent, and physical lattice units along a scaling trajectory. The formulation developed here gives a finite volume observable, a strong coupling region in which it can be controlled, and a direct comparison with conventional glueball operators. The uniform continuum estimate remains a dynamical problem.

\appendix

\section{Orbit counting and relative variants}
\label{app:orbit_counting}

We record the finite chain complex identities used above.

Let
\[
 C_{k+1}\xrightarrow{\partial_{k+1}}C_k\xrightarrow{\partial_k}C_{k-1}
\]
be a chain complex of finite Abelian groups. Fix a realizable syndrome $\sigma$, and choose a reference $e_\sigma$. Then
\[
 \set{e\in C_k:\partial e=\sigma}=e_\sigma+Z_k,
 \qquad Z_k=\ker\partial_k.
\]
Choose one representative $\gamma_h$ in each class of $H_k=Z_k/B_k$. The syndrome fiber then has the disjoint decomposition
\begin{equation}
 e_\sigma+Z_k
 =\bigsqcup_{h\in H_k}\bigl(e_\sigma+\gamma_h+B_k\bigr).
\label{eq:affine_coset_decomposition}
\end{equation}
Equation \eqref{eq:affine_coset_decomposition} proves the posterior formula \eqref{eq:posterior_general}. The map $\partial_{k+1}$ is onto $B_k$, and every fiber is a coset of $\ker\partial_{k+1}$. Therefore
\begin{equation}
 \sum_{a\in C_{k+1}}f(\partial a)
 =\abs{\ker\partial_{k+1}}\sum_{b\in B_k}f(b),
\label{eq:equal_fiber_counting}
\end{equation}
This is \eqref{eq:gauge_partition_general}.

The same proof applies to the relative chain complex $C_\bullet(X,R;G)$. Relative cycles may end on the distinguished boundary $R$, and the logical sectors are $H_k(X,R;G)$. Poincar\'e--Lefschetz duality gives the corresponding absolute or relative cohomology group in the dual description. Relative chains provide the natural language for rough and smooth boundaries, lattice surgery, external probe worldlines, and the open slab version of \eqref{eq:open_syndrome_probability}.

For the finite $\kappa$ cochain model, the same argument uses $\dd$ in place of $\partial$. The factor $\abs{\im\dd_1}$ in \eqref{eq:joint_sector_finite_k} is the size of the orbit under addition of exact sheets. Parametrizing the orbit by all $\lambda\in C^1$ introduces an additional common factor $\abs{\ker\dd_1}$, which cancels from every posterior.

\section{Uniform magnetic phases and finite volume obstructions}
\label{app:uniform_phase}

We give the uniform phase statement used in \cref{sec:psu_code}. Let $\omega=\e^{2\pi\ii/N}$, and consider the translation invariant magnetic term
\begin{equation}
 H_B(s)=-J\sum_\alpha
 \left(\omega^sM_\alpha+\omega^{-s}M_\alpha^\dagger\right),
 \qquad J>0,
\label{eq:uniform_magnetic_phase}
\end{equation}
Assume that every Gauss term is of $X$ type. We will use two cellular realizations.

\begin{description}[leftmargin=3.2em,style=nextline]
\item[Edge picture.] Qudits live on edges and $M_p$ is a plaquette operator of $Z$ type. A depth one product $U_m=\prod_\ell X_\ell^{m_\ell}$ shifts the plaquette phases by the coboundary $\dd m\in C^2(T_L^3;\ZN)$.
\item[Two form picture.] Qudits live on plaquettes and $M_c=\prod_{p\subset\partial c}Z_p^{\epsilon(c,p)}$ is a cube operator. A depth one product $U_m=\prod_pX_p^{m_p}$ shifts the cube phases by $\dd m\in C^3(T_L^3;\ZN)$.
\end{description}

The energy cost of a charge on one term is
\begin{equation}
 \epsilon_B(j)=2J\left[1-\cos\left(\frac{2\pi j}{N}\right)\right],
 \qquad
 \epsilon_B=\min_{j\ne0}\epsilon_B(j).
\label{eq:magnetic_charge_cost}
\end{equation}

\begin{lemma}[Uniform phase lemma]
\label[lemma]{lem:uniform_phase_full}
On the cubical three torus of linear size $L$:
\begin{enumerate}[label=(\roman*)]
\item If $N\mid sL^2$ in the edge picture, or $N\mid sL^3$ in the two form picture, there exists a depth one $U_m$ that commutes with every term of $X$ type and maps $H_B(s)$ to $H_B(0)$. Thus the uniform phase is a gauge choice on those volumes.
\item If $N\nmid sL^3$ in the two form picture, the obstruction is the class
\[
 c_*=sL^3\pmod N\in H^3(T^3;\ZN)\cong\ZN.
\]
All cube terms can be made unfrustrated except for a fixed finite set whose charges sum to $c_*$. The minimum energy penalty is
\begin{equation}
 \min\set{\sum_i\epsilon_B(j_i):
 j_i\ne0,\ \sum_i j_i=c_*\pmod N},
\label{eq:top_form_obstruction_cost}
\end{equation}
which is $O_N(J)$ and independent of $L$.
\item In the edge picture the failed divisibility obstruction is
\[
 (sL^2,sL^2,sL^2)\in H^2(T^3;\ZN)\cong(\ZN)^3.
\]
For $N=2$, $s=1$, and odd $L$, every representative has at least $3L$ frustrated plaquettes, and three straight dual lines attain this bound. The energy shift is therefore $\Theta(JL)$: subextensive, but not $O(1)$.
\end{enumerate}
\end{lemma}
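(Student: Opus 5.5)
The plan is to reduce every statement to a cohomology computation for the phase cochain. First I fix orientation conventions. Each $M_\alpha$ is a $Z$-type operator, built from the incidence signs $\epsilon(\alpha,\cdot)$, so $X^{m}$ conjugation acts by
\[
 U_m^\dagger M_\alpha U_m=\omega^{\pm(\dd m)_\alpha}M_\alpha .
\]
The sign depends only on the clock convention $ZX=\omega XZ$. Hence $U_m^\dagger H_B(s)U_m$ is the magnetic term with the nonuniform phase cochain $s\mathbf 1+\dd m$, where $\mathbf 1$ is the constant cochain on $2$-cells (edge picture) or $3$-cells (two form picture). Since $U_m$ is a product of $X$'s, it commutes with every $X$-type Gauss term. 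So part (i) is the question of whether $s\mathbf 1$ is exact. Parts (ii) and (iii) ask how small the support of a representative of the class $[s\mathbf 1]$ can be.

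For the two form picture, every $3$-cochain on $T^3$ is closed. The map $H^3(T^3;\ZN)\to\ZN$ is evaluation on the fundamental class, that is, summing over all oriented cubes. It is an isomorphism, by \cref{lem:pontryagin} applied to $H_3(T^3;\ZN)\cong\ZN$. Thus $s\mathbf 1$ has class $sL^3$ mod $N$, and it is exact exactly when $N\mid sL^3$; this gives (i). For the edge picture, I will first check that $s\mathbf 1$ on plaquettes is closed. On each cube the oppositely oriented faces cancel pairwise in $\dd$. Its periods on the three coordinate two tori are $sL^2$ each. Since $H_2(T^3;\ZZ)$ is free, the pairing with $H_2(T^3;\ZN)$ is perfect, so exactness is equivalent to $N\mid sL^2$.

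For (ii), any cochain $f$ with the same total sum $c_*$ is cohomologous to $s\mathbf 1$. So I choose $m$ making the residual phase equal to a prescribed $f$ supported on finitely many cubes with charges $j_i$, where $\sum_i j_i=c_*$. Conversely, every representative has total charge $c_*$. Next I must identify the spectrum. The cube operators commute, and their only relation is $\prod_c M_c=1$: each plaquette lies in two cubes with opposite orientation. Therefore the joint eigenvalue patterns $\omega^{a_c}$ are exactly those with $\sum_c a_c=0$, each realized within the gauge-invariant sector. A term with net phase $j$ then costs $\epsilon_B(j)$ above $-2J$. Minimizing over admissible residual patterns gives \eqref{eq:top_form_obstruction_cost}. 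This is an $L$-independent minimum over at most $N-1$ charges, hence $O_N(J)$.

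For (iii), with $N=2$, $s=1$ and $L$ odd, every residual closed $2$-cochain $f$ has period $1$ on every coordinate $2$-torus. There are $L$ parallel such tori in each of the three directions, one per transverse coordinate. Each must contain a frustrated plaquette. The $3L$ tori have pairwise disjoint plaquette sets, since a plaquette has one orientation and one transverse coordinate. This gives at least $3L$ frustrated plaquettes. Three straight dual lines, one per direction, realize a closed cochain with the correct periods and exactly $3L$ plaquettes. Each frustration costs $\epsilon_B=4J$, so the shift is $\Theta(JL)$. Realizability again uses that the plaquette eigenvalue patterns are exactly the closed cochains.

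I expect the main obstacle to be bookkeeping rather than topology. One must verify that the constant cochain is genuinely closed and has the stated periods under the chosen orientation and incidence conventions. One must also confirm that the allowed joint eigenvalue patterns of the $M_\alpha$ are precisely the closed cochains, neither more nor fewer, so that the cohomological lower bounds are actual energy bounds.
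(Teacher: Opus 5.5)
Your route is the paper's. You reduce (i) to solvability of $\dd m=-s\mathbf 1$ and read off the class by evaluating on the fundamental class or on the coordinate tori. You get (ii) from the fact that the cokernel of $\dd:C^2\to C^3$ is the total charge, and (iii) by counting the $3L$ disjoint parallel tori. Your explicit checks, that $s\mathbf 1$ is closed and that the energy is a minimum over joint eigenvalue patterns of the $M_\alpha$, make precise what the paper leaves implicit.

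There is one incorrect claim, and it sits exactly at the step you flagged as the crux. In (iii) and in your closing paragraph you say the allowed joint eigenvalue patterns are ``precisely the closed cochains.'' They are the \emph{exact} cochains $\im\dd$. On a $Z$-basis state one has $M_\alpha\ket b=\omega^{(\dd b)_\alpha}\ket b$, up to orientation conventions, so only coboundaries occur. Equivalently, the relations among the $M_\alpha$ are indexed by all cycles, not only boundaries. In the edge picture, besides the cube relations $\prod_{p\subset\partial c}M_p^{\pm1}=1$, there are global relations $\prod_{p\subset\Sigma_i}M_p^{\pm1}=1$ on each coordinate two torus, because every edge of $\Sigma_i$ occurs twice with opposite signs. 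If every closed pattern $a$ were allowed, you could take $a=-s\mathbf 1$. The residual $s\mathbf 1+a$ would then vanish and the $3L$ lower bound in (iii) would be false. Your statement that every residual has period $1$ on every coordinate torus silently uses that $a$ has zero periods, i.e.\ that $a$ is exact.

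In (ii) you actually used the correct set. There, $\sum_c a_c=0$ characterizes the exact $3$-cochains. Every $3$-cochain on $T^3$ is closed, so ``closed'' would impose no constraint at all, contradicting $\prod_c M_c=1$. The fix is therefore to replace ``closed'' by ``exact'' throughout. The achievable residual phase patterns are then exactly the representatives of $[s\mathbf 1]$. This makes your cohomological lower bounds in (ii) and (iii) genuine energy bounds. Your explicit representatives give the matching upper bounds: a finite set of cubes with charges summing to $c_*$, and three straight dual lines.
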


\begin{proof}
Conjugation by $U_m$ multiplies $M_\alpha$ by $\omega^{(\dd m)_\alpha}$ and leaves every $X$ type term unchanged. Removing the uniform phase is equivalent to the cochain equation
\begin{equation}
 \dd m=-s\mathbf 1.
\label{eq:uniform_coboundary_equation}
\end{equation}
In the edge picture, the constant two cochain pairs with each coordinate two torus to give $sL^2$. In the two form picture, the constant three cochain pairs with the fundamental class to give $sL^3$. This proves part (i).

For part (ii), the top coboundary $\dd:C^2\to C^3$ has cokernel $H^3(T^3;\ZN)\cong\ZN$. The only invariant of a cube phase pattern is therefore its total charge. A target pattern with residual charges summing to $c_*$ differs from $s\mathbf1$ by an exact cochain. Minimizing the local clock energy over these residual patterns gives \eqref{eq:top_form_obstruction_cost}.

For part (iii), solvability is controlled by the three periods in $H^2$. When $N=2$ and $L$ is odd, a residual plaquette set representing $(1,1,1)$ must intersect every translate of each coordinate two cycle an odd number of times. There are $L$ translates in each of three orientations, so at least $3L$ plaquettes are required. Three axis aligned dual lines attain this bound.
\end{proof}

A real Schrieffer--Wolff coefficient has phase $+1$ or $-1$. The phase $-1$ equals $\omega^{N/2}$ only when $N$ is even. Thus, on volumes satisfying the two form divisibility condition, the sign of the generated cube term can be removed for even $N$. The lemma does not show that the coefficient is nonzero, does not control the remaining Schrieffer--Wolff terms, and does not determine the sign for odd $N$.

\section{Elementary cube amplitude}
\label{app:cube_amplitude}

The group integral in \eqref{eq:cube_integral} can be evaluated directly. Orient the six plaquettes on the boundary of a cube consistently, and write each fundamental character as a trace of four link matrices. Each of the twelve edges appears once as $U_\ell$ and once as $U_\ell^\dagger$. The identity
\begin{equation}
 \int_{\SU(N)}\dd U\,U_{ij}U^\dagger_{k\ell}
 =\frac1N\delta_{i\ell}\delta_{jk},
\label{eq:haar_two_matrix}
\end{equation}
gives a factor $N^{-12}$ from the link integrals. The remaining contraction has eight independent closed index loops and gives $N^8$. Therefore
\begin{equation}
 \int\prod_{\ell\subset c}\dd U_\ell
 \prod_{p\subset\partial c}\chi_f(U_p)
 =N^{8-12}=\frac1{N^4}.
\label{eq:cube_amplitude_proof}
\end{equation}
This elementary contribution is positive. The complete sixth order coefficient also includes every allowed ordering and intermediate representation, so the calculation does not by itself establish the sign or nonvanishing of the full coefficient.

\section{Finite volume spectral measure}
\label{app:spectral_measure}

Let $T$ be a strictly positive transfer matrix on a finite dimensional or trace class Hilbert space, and let $V$ be bounded. At finite extent $L_z$, define
\[
 C_{L_z}(r)=\frac{\Tr(T^{L_z-r}VT^rV^\dagger)}{\Tr T^{L_z}}.
\]
Write $T\ket n=\lambda_n\ket n$ with $\lambda_0>\lambda_1\ge\cdots\ge0$. Expanding the trace gives
\begin{equation}
 C_{L_z}(r)=
 \frac{\sum_{m,n}\lambda_m^{L_z-r}\lambda_n^r
 \abs{\bra mV\ket n}^2}
 {\sum_j\lambda_j^{L_z}}.
\label{eq:finite_transfer_expansion}
\end{equation}
At fixed $r$, the limit $L_z\to\infty$ projects the index $m$ onto the vacuum and gives \eqref{eq:pair_spectral_decomposition}. If the vacuum is degenerate, one must project to a chosen pure phase or use the corresponding vacuum density matrix. The connected correlator is then obtained by subtracting its large distance limit.

Let $\mu_V$ be the positive spectral measure of $(1-P_0)V^\dagger\ket0$ for the shifted Hamiltonian $H-E_0$. Then
\begin{equation}
 C^{\rm conn}(r)=\int_{(0,\infty)}\e^{-arE}\,\dd\mu_V(E).
\label{eq:laplace_spectral_measure}
\end{equation}
The logarithmic decay rate of a Laplace transform of a positive measure is the infimum of the support of that measure. This proves \eqref{eq:mass_decoder_equality} without assuming a discrete spectrum. Positivity prevents cancellations from hiding a lighter state with nonzero source overlap.

At a generic theta angle, the Euclidean measure is not positive, and the transfer kernel need not define a positive spectral measure. The argument then fails. An extension would require an operator valued or quasiprobability decoder rather than an ordinary stochastic Pauli channel.

\section{\texorpdfstring{Elementary limits of the finite $\kappa$ channel}{Elementary limits of the finite kappa channel}}
\label{app:finite_k_limits}

\subsection{Zero Wilson coupling}

At $\beta=0$, normalized Haar integration gives
\[
 \cZ_0[B]=1
\]
for every $B$. Equation \eqref{eq:finite_k_posterior} therefore gives a uniform posterior over logical completions for every fixed realizable syndrome:
\begin{equation}
 \Prob_{0,\kappa}(h\mid J)=\frac1{\abs{H^2(\Lambda;\ZN)}}.
\label{eq:beta_zero_uniform}
\end{equation}
At $\beta=0$, the Yang--Mills hidden variables therefore carry no information about the topology of the sheet.

\subsection{Flat limit}

For the cosine fugacity, the limit $\kappa\to+\infty$ suppresses every nonzero $J$ relative to $J=0$. After dividing by the common one form gauge volume, summing over the remaining flat fields $B$ gauges the center and gives an untwisted Villain ensemble of $\PSU(N)$ type. The conditional sector probabilities remain the normalized twisted partition functions in \eqref{eq:thermal_twist_posterior}.

\subsection{A uniform center distribution}

If the Yang--Mills weight and the curvature fugacity are both independent of $B$, all affine logical cosets have the same cardinality. The posterior is uniform, the defect free energies vanish, and
\[
 p_{\rm succ}^{\ML}=\abs{H^2}^{-1}.
\]
This is the cochain analogue of the uniform noise point in \cref{prop:wilson_bound}.

\section{Sector estimator}
\label{app:sector_estimator}

Suppose a Markov chain samples the joint measure in \eqref{eq:joint_local_parent}. For a fixed syndrome bin $J$, let $N_{J,h}$ count the sampled configurations with global completion $h$. Ergodicity gives
\begin{equation}
 \widehat\Prob(h\mid J)=\frac{N_{J,h}}{\sum_kN_{J,k}}
 \longrightarrow\Prob(h\mid J).
\label{eq:conditional_histogram_estimator}
\end{equation}
The corresponding maximum likelihood failure estimator is
\begin{equation}
 \widehat p_{\rm fail}^{\ML}(J)
 =1-\max_h\widehat\Prob(h\mid J).
\label{eq:failure_histogram_estimator}
\end{equation}
Direct binning is inefficient for rare pair syndromes. One can instead fix $\dd B=J_{k,r}$ in a constrained ensemble and use thermodynamic integration or umbrella reweighting relative to $J=0$ to estimate \eqref{eq:renormalized_pair_likelihood}. Global sheet updates are necessary to equilibrate $H^2$. Without them, a simulation can appear to find a correctable posterior simply because it has not tunneled between logical sectors.

\section*{Acknowledgments}
\addcontentsline{toc}{section}{Acknowledgments}
Ning Bao acknowledges support from Northeastern University and from the ASCR EXPRESS project Quantum Transforms from Classical Transforms. He also acknowledges useful conversations with Layla Hormozi, Aidan Chatwin-Davies, and Chun Jun Cao. ChatGPT was used in the conduct of this research, in particular for editorial passes and theorem development. All theorems and statements in this paper have been independently verified by the author.


\clearpage
\begin{thebibliography}{99}
\small

\bibitem{DKLP}
E. Dennis, A. Kitaev, A. Landahl, and J. Preskill,
``Topological quantum memory,''
J. Math. Phys. \textbf{43}, 4452 (2002), arXiv:quant-ph/0110143.

\bibitem{WHP}
C. Wang, J. Harrington, and J. Preskill,
``Confinement--Higgs transition in a disordered gauge theory and the accuracy threshold for quantum memory,''
Ann. Phys. \textbf{303}, 31 (2003), arXiv:quant-ph/0207088.

\bibitem{ChubbFlammia}
C. T. Chubb and S. T. Flammia,
``Statistical mechanical models for quantum codes with correlated noise,''
Ann. Inst. H. Poincar\'e D \textbf{8}, 269 (2021), arXiv:1809.10704.

\bibitem{BaoSuperselection}
N. Bao, C. Cao, A. Chatwin-Davies, G. Cheng, and G. Zhu,
``Superselection rules, quantum error correction, and quantum chromodynamics,''
JHEP \textbf{05}, 236 (2025), arXiv:2306.17230.

\bibitem{BaoHolography}
N. Bao, C. Cao, and G. Zhu,
``Deconfinement and error thresholds in holography,''
Phys. Rev. D \textbf{106}, 046009 (2022), arXiv:2202.04710.

\bibitem{RajputRoggeroWiebe}
A. Rajput, A. Roggero, and N. Wiebe,
``Quantum error correction with gauge symmetries,''
npj Quantum Information \textbf{9}, 41 (2023), arXiv:2112.05186.

\bibitem{CarenaEtAl}
M. Carena, H. Lamm, Y.-Y. Li, and W. Liu,
``Quantum error thresholds for gauge-redundant digitizations of lattice field theories,''
Phys. Rev. D \textbf{110}, 054516 (2024), arXiv:2402.16780.

\bibitem{SpagnoliEtAl}
L. Spagnoli, A. Roggero, and N. Wiebe,
``Fault-tolerant simulation of lattice gauge theories with gauge covariant codes,''
Quantum \textbf{10}, 1968 (2026), arXiv:2405.19293.

\bibitem{CarrozzaEtAl}
S. Carrozza, A. Chatwin-Davies, P. A. Hoehn, and F. M. Mele,
``A correspondence between quantum error correcting codes and quantum reference frames,''
arXiv:2412.15317.

\bibitem{LacambraEtAl}
J. P. Lacambra, A. Chatwin-Davies, M. Honda, and P. A. Hoehn,
``Gauss law codes and vacuum codes from lattice gauge theories,''
arXiv:2604.06087.

\bibitem{YaoSU2}
X. Yao,
``Quantum error correction codes for truncated $\SU(2)$ lattice gauge theories,''
Phys. Rev. D \textbf{113}, 114512 (2026), arXiv:2511.13721.

\bibitem{LiODeaKhemani}
Y. Li, N. O'Dea, and V. Khemani,
``Perturbative stability and error-correction thresholds of quantum codes,''
PRX Quantum \textbf{6}, 010327 (2025), arXiv:2406.15757.

\bibitem{LiuXuPollmannKnap}
Y.-J. Liu, W.-T. Xu, F. Pollmann, and M. Knap,
``Information-theoretic principle of emergent 1-form symmetries,''
arXiv:2502.17572.

\bibitem{Xu3DToric}
J.-Z. Xu, Y. Zhong, M. A. Martin-Delgado, H. Song, and K. Liu,
``Phenomenological noise models and optimal thresholds of the 3D toric code,''
Quantum Sci. Technol. \textbf{11}, 035022 (2026), arXiv:2510.20489.

\bibitem{Stahl}
C. Stahl,
``Self-correction from higher-form symmetry protection on a boundary,''
PRX Quantum \textbf{4}, 030341 (2023), arXiv:2206.05294.

\bibitem{KramersWannier}
H. A. Kramers and G. H. Wannier,
``Statistics of the two-dimensional ferromagnet. Part I,''
Phys. Rev. \textbf{60}, 252 (1941).

\bibitem{Wegner}
F. J. Wegner,
``Duality in generalized Ising models and phase transitions without local order parameters,''
J. Math. Phys. \textbf{12}, 2259 (1971).

\bibitem{KadanoffCeva}
L. P. Kadanoff and H. Ceva,
``Determination of an operator algebra for the two-dimensional Ising model,''
Phys. Rev. B \textbf{3}, 3918 (1971).

\bibitem{Nishimori}
H. Nishimori,
``Internal energy, specific heat and correlation function of the bond-random Ising model,''
Prog. Theor. Phys. \textbf{66}, 1169 (1981).

\bibitem{FanEtAl}
R. Fan, Y. Bao, E. Altman, and A. Vishwanath,
``Diagnostics of mixed-state topological order and breakdown of quantum memory,''
PRX Quantum \textbf{5}, 020343 (2024), arXiv:2301.05689.

\bibitem{LessaEtAl}
L. A. Lessa, R. Ma, J.-H. Zhang, Z. Bi, M. Cheng, and C. Wang,
``Strong-to-weak spontaneous symmetry breaking in mixed quantum states,''
PRX Quantum \textbf{6}, 010344 (2025), arXiv:2405.03639.

\bibitem{LeeCoherent}
J. Y. Lee,
``Exact calculations of coherent information for toric codes under decoherence: identifying the fundamental error threshold,''
Phys. Rev. Lett. \textbf{134}, 250601 (2025), arXiv:2402.16937.

\bibitem{Kitaev}
A. Y. Kitaev,
``Fault-tolerant quantum computation by anyons,''
Ann. Phys. \textbf{303}, 2 (2003), arXiv:quant-ph/9707021.

\bibitem{BHM}
S. Bravyi, M. B. Hastings, and S. Michalakis,
``Topological quantum order: stability under local perturbations,''
J. Math. Phys. \textbf{51}, 093512 (2010), arXiv:1001.0344.

\bibitem{BravyiTerhal}
S. Bravyi and B. M. Terhal,
``A no-go theorem for a two-dimensional self-correcting quantum memory based on stabilizer codes,''
New J. Phys. \textbf{11}, 043029 (2009), arXiv:0810.1983.

\bibitem{CastelnovoChamon}
C. Castelnovo and C. Chamon,
``Topological order in a three-dimensional toric code at finite temperature,''
Phys. Rev. B \textbf{78}, 155120 (2008), arXiv:0804.3591.

\bibitem{GKSW}
D. Gaiotto, A. Kapustin, N. Seiberg, and B. Willett,
``Generalized global symmetries,''
JHEP \textbf{02}, 172 (2015), arXiv:1412.5148.

\bibitem{KapustinThorngren}
A. Kapustin and R. Thorngren,
``Higher symmetry and gapped phases of gauge theories,''
arXiv:1309.4721.

\bibitem{Tachikawa}
Y. Tachikawa,
``Magnetic discrete gauge field in the confining vacua and the supersymmetric index,''
JHEP \textbf{03}, 035 (2015), arXiv:1412.2830.

\bibitem{tHooftConfinement}
G. 't Hooft,
``On the phase transition towards permanent quark confinement,''
Nucl. Phys. B \textbf{138}, 1 (1978).

\bibitem{tHooftFlux}
G. 't Hooft,
``A property of electric and magnetic flux in non-Abelian gauge theories,''
Nucl. Phys. B \textbf{153}, 141 (1979).

\bibitem{AbeEtAl}
M. Abe, O. Morikawa, S. Onoda, H. Suzuki, and Y. Tanizaki,
``Topology of $\SU(N)$ lattice gauge theories coupled with $\ZN$ 2-form gauge fields,''
JHEP \textbf{08}, 118 (2023), arXiv:2303.10977.

\bibitem{deForcrandSmekal}
P. de Forcrand and L. von Smekal,
``'t Hooft loops, electric flux sectors and confinement in $SU(2)$ Yang--Mills theory,''
Phys. Rev. D \textbf{66}, 011504 (2002), arXiv:hep-lat/0107018.

\bibitem{deForcrandNoth}
P. de Forcrand and D. Noth,
``Precision lattice calculation of $\SU(2)$ 't Hooft loops,''
Phys. Rev. D \textbf{72}, 114501 (2005), arXiv:hep-lat/0506005.

\bibitem{MonopolePairs}
P. de Forcrand, C. Korthals-Altes, and O. Philipsen,
``Screening of $Z(N)$ monopole pairs in gauge theories,''
Nucl. Phys. B \textbf{742}, 124 (2006), arXiv:hep-ph/0510140.

\bibitem{FradkinShenker}
E. Fradkin and S. H. Shenker,
``Phase diagrams of lattice gauge theories with Higgs fields,''
Phys. Rev. D \textbf{19}, 3682 (1979).

\bibitem{FredenhagenMarcu}
K. Fredenhagen and M. Marcu,
``Confinement criterion for QCD with dynamical quarks,''
Phys. Rev. Lett. \textbf{56}, 223 (1986).

\bibitem{RotorCodes}
C. Vuillot, A. Ciani, and B. M. Terhal,
``Homological quantum rotor codes: logical qubits from torsion,''
Commun. Math. Phys. \textbf{405}, 53 (2024), arXiv:2303.13723.

\bibitem{GoepfertMack}
M. G\"opfert and G. Mack,
``Proof of confinement of static quarks in three-dimensional $U(1)$ lattice gauge theory for all values of the coupling constant,''
Commun. Math. Phys. \textbf{82}, 545 (1982).

\bibitem{Guth}
A. H. Guth,
``Existence proof of a nonconfining phase in four-dimensional $U(1)$ lattice gauge theory,''
Phys. Rev. D \textbf{21}, 2291 (1980).

\bibitem{FrohlichSpencer}
J. Fr\"ohlich and T. Spencer,
``Massless phases and symmetry restoration in Abelian gauge theories and spin systems,''
Commun. Math. Phys. \textbf{83}, 411 (1982).

\bibitem{HallidayPhase}
I. G. Halliday and A. Schwimmer,
``The phase structure of $\SU(N)/\mathbb Z_N$ lattice gauge theories,''
Phys. Lett. B \textbf{101}, 327 (1981).

\bibitem{HallidayMonopoles}
I. G. Halliday and A. Schwimmer,
``$Z_2$ monopoles in lattice gauge theories,''
Phys. Lett. B \textbf{102}, 337 (1981).

\bibitem{deForcrandJahn}
P. de Forcrand and O. Jahn,
``Comparison of $SO(3)$ and $SU(2)$ lattice gauge theory,''
Nucl. Phys. B \textbf{651}, 125 (2003), arXiv:hep-lat/0211004.

\bibitem{Luscher}
M. L\"uscher,
``Construction of a self-adjoint, strictly positive transfer matrix for Euclidean lattice gauge theories,''
Commun. Math. Phys. \textbf{54}, 283 (1977).

\bibitem{OsterwalderSeiler}
K. Osterwalder and E. Seiler,
``Gauge field theories on a lattice,''
Ann. Phys. \textbf{110}, 440 (1978).

\bibitem{Munster}
G. M\"unster,
``Strong coupling expansions for the mass gap in lattice gauge theories,''
Nucl. Phys. B \textbf{190}, 439 (1981).

\bibitem{LangelageEtAl}
J. Langelage, G. M\"unster, and O. Philipsen,
``Strong coupling expansion for finite temperature Yang--Mills theory in the confined phase,''
JHEP \textbf{07}, 036 (2008), arXiv:0805.1163.

\bibitem{LundowMarkstrom2023}
P. H. Lundow and K. Markstr\"om,
``Revising the universality class of the four-dimensional Ising model,''
Nucl. Phys. B \textbf{993}, 116256 (2023), arXiv:2209.05292.

\bibitem{MorikawaSuzuki}
O. Morikawa and H. Suzuki,
``Direct Monte Carlo computation of the 't Hooft partition function,''
Prog. Theor. Exp. Phys. \textbf{2025}, 063B04 (2025), arXiv:2501.07042.

\bibitem{MorikawaSuzuki2026}
O. Morikawa and H. Suzuki,
``Direct numerical simulation of the 't Hooft partition function and (de)confining phases,''
arXiv:2601.20159.

\end{thebibliography}
\end{document}